\providecommand{\tabularnewline}{\\}
\theoremstyle{plain}
\newtheorem{thm}{\protect\theoremname}
  \theoremstyle{plain}
  \newtheorem{prop}[thm]{\protect\propositionname}
\newcommand{\rep}[1]{\mathbf{#1}} 
\newcommand{\ket}[1]{\left|{#1}\right\rangle }
\newcommand{\group}{\mathcal{PSL}_{2}\left(7\right)} 
\newcommand{\subga}{S_4}
\newcommand{\subgb}{\mathcal{T}_7}
\newcommand{\olcite}[1]{Ref. [\onlinecite{#1}]}
\newcommand{\olcitem}[1]{Refs. [\onlinecite{#1}]}
\newcommand{\mr}[1]{\rho_{\rep{#1}}}
\newcommand{\cpx}[1]{\Gamma_{#1}}
\newcommand{\tsprod}[2]{\rep{#1}\otimes\rep{#2}}
\newcommand{\covp}[3]{\left[\tsprod{#1}{#2}\right]^{}_{\rep{#3}}}
\newcommand{\cova}[3]{\left(\tsprod{#1}{#2}\right)^{}_{\rep{#3}}}
\newcommand{\covs}[3]{\big\{\tsprod{#1}{#2}\big\}^{}_{\rep{#3}}}
\newcommand{\covt}[3]{\big\{\tsprod{#1}{#2}\big\}^{}_{\rep{#3}}}
\newcommand{\covsub}[3]{\big\{{#1}\otimes{#2}\big\}^{}_{{#3}}}
\newcommand{\covlg}[3]{\left[{#1}\otimes{#2}\right]^{}_{{#3}}}
\newcommand{\cgEqFontsize}{\large}
\newcommand{\tsprodx}[2]{\rep{#1}\otimes\rep{#2}}
\newcommand{\subcgs}[3]{\big\{ \tsprod{#1}{#2}\big\}^{}_{\rep{#3}}}
\newcommand{\subcgt}[3]{\big\{ \tsprod{#1}{#2}\big\}^{}_{\rep{#3}}}
\newcommand{\bfl}{\begin{fleqn}[25pt]}
\newcommand{\efl}{\end{fleqn}}
  \providecommand{\propositionname}{Proposition}
\providecommand{\theoremname}{Theorem}
\begin{document}

\title{Clebsch-Gordan coefficients of discrete groups in subgroup bases}

\author{Gaoli Chen}

\affiliation{Institute for Fundamental Theory, Department of Physics, University
of Florida, Gainesville, Florida 32611, USA}
\email{gchen@ufl.edu}

\begin{abstract}
We express each Clebsch-Gordan (CG) coefficient of a discrete group
as a product of a CG coefficient of its subgroup and a factor, which
we call an embedding factor. With an appropriate definition, such
factors are fixed up to phase ambiguities. Particularly, they are
invariant under basis transformations of irreducible representations
of both the group and its subgroup. We then impose on the embedding
factors constraints, which relate them to their counterparts under
complex conjugate and therefore restrict the phases of embedding factors.
In some cases, the phase ambiguities are reduced to sign ambiguities.
We describe the procedure of obtaining embedding factors and then
calculate CG coefficients of the group $\mathcal{PSL}_{2}\left(7\right)$
in terms of embedding factors of its subgroups $S_{4}$ and $\mathcal{T}_{7}$.
\end{abstract}
\maketitle

\section{Introduction}

Discrete subgroups of $SU(3)$ are widely used in flavor model building
in particle physics, where one needs to study the mathematical properties
of the selected group, e.g, its Clebsch-Gordan (CG) coefficients,
which determine how fields are coupled in the model. 

For some continuous groups, CG coefficients are usually expressed
in terms of those of their subgroups. For example, $SU(3)$ CG coefficients
can be factored into $SU(2)$ CG coefficients and so-called isoscalar
factors\cite{deSwart:1963pdg}. With the notations of \olcite{deSwart:1963pdg},
the $SU(3)$ CG coefficients for the tensor product $\rep{\mu_{1}}\otimes\rep{\mu_{2}}=\rep{\mu_{3}}\oplus\cdots$,
with $\mu_{i}$ being irreducible representations (irreps), can be
written as 
\begin{equation}
\left(\begin{array}{ccc}
\mu_{1} & \mu_{2} & \mu_{3}\\
\nu_{1} & \nu_{2} & \nu_{3}
\end{array}\right)=C_{I_{1z}I_{2z}I_{3z}}^{I_{1}I_{2}I_{3}}\left(\begin{array}{cc}
\mu_{1} & \mu_{2}\\
I_{1}Y_{1} & I_{2}Y_{2}
\end{array}\Bigg|\begin{array}{c}
\mu_{3}\\
I_{3}Y_{3}
\end{array}\right),\label{eq:Intro_mumumu}
\end{equation}
 Such expressions are convenient because the number of isoscalar factors
is smaller than the number of $SU(3)$ CG coefficients. 

In this paper, we discuss the relation between CG coefficients of
a discrete group and those of its subgroup. To our knowledge, a general
study of this idea for discrete groups has not been done in the literature. 

For a discrete group $G$ with subgroup $H$, we want to find a set
of factors that relate CG coefficients of $G$ and $H$, denoted as
$M^{\left(G\right)}$ and $M^{\left(H\right)}$,
\begin{equation}
M_{AB}^{\left(G\right)}=\mathcal{E}_{A}M_{B}^{\left(H\right)},\label{eq:Intro_MMM}
\end{equation}
 where $B$ is a collection of indices to specify the subgroup CG
coefficients and $A$ a collection of indices carrying the information
of how irreps of $H$ are embedded in irreps of $G$. In analogy to
the isoscalar factors, we call the coefficients in $\mathcal{E}$
the embedding factors. Since CG coefficients are basis-dependent,
Eq. (\ref{eq:Intro_MMM}) is also basis-dependent. Then one may ask
the following questions: 1) how can we define $\mathcal{E}$ independent
of bases? 2) are the coefficients in $\mathcal{E}$ unique for all
CG coefficients $M^{\left(H\right)}$ in all bases? To answer these
questions, we show that embedding factors $\mathcal{E}$ can be defined
in a way that is invariant under basis transformations of irreps of
$G$ and $H$. It implies that there does exist a set of embedding
factors for all bases of irreps of $G$ and $H$. This does not exhaust
all the ambiguities of the embedding factors because the coefficients
of $\mathcal{E}$ still have phase ambiguities, which stem from those
of the subgroup CG coefficients.

We analyze the phase ambiguities of the embedding factors and propose
a way to reduce them, with some ambiguities remained to be eliminated
by other conventions. The advantage of our convention is that it only
depends on general properties of groups and irreps, and hence it can
apply to any groups. We impose on embedding factors constraints which
require that a contraction of two irreps, in the form $\left(\rep{X}\otimes\rep{Y}\right)_{\rep{Z}}$,
behaves the same as the corresponding irrep $\rep{Z}$ under the action
of complex conjugate. Such constraints lead to the following consequences:
\begin{itemize}
\item Case I: If all of $\rep{X}$, $\rep{Y}$, and $\rep{Z}$ are real
or pseudoreal irreps, the overall phase of embedding factors is fixed
up to a sign factor. In particular, if the irreps can be decomposed
into real or pseudoreal subgroup irreps, then the corresponding embedding
factors are real numbers fixed up to sign factors.
\item Case II: If $\rep{Z}$ is real or pseudoreal and $\rep{X}$ is complex
conjugate of $\rep{Y}$, the overall phase of embedding factors is
also fixed up to a sign factor. 
\item Case III: In other cases, the embedding factors for $\rep{X}\otimes\rep{Y}\to\rep{Z}$
and $\rep{\bar{X}}\otimes\rep{\bar{Y}}\to\rep{\bar{Z}}$ are complex
conjugate to each other. Here, $\rep{\bar{X}}$ represents conjugate
of $\rep{X}$ if it is complex, or $\rep{X}$ itself otherwise. This
statement looks trivial and one may think that CG coefficients always
have such a property. But actually such relation not always holds
for CG coefficients (See Proposition \ref{prop:CG_phase} of Section
\ref{sec:RedPhaseAmbiguity}). This is another advantage of embedding
factors compared with CG coefficients. 
\end{itemize}
We introduce a procedure to calculate the embedding factors. The calculation
involves complicated cyclotomic numbers, which are polynomials of
roots of unity $e^{2\pi i/n}$, and hence it is difficult to obtain
simplified results. We therefore implement an algorithm to perform
arithmetic calculations of cyclotomic numbers.

We apply our technique to the group $\group$ and its subgroups, the
groups $\subga$ and $\subgb$, and obtain representation matrices
of $\group$ in its subgroup bases, in which subgroup elements are
block diagonal matrices. For both subgroups, we acquire complete lists
of embedding factors of the group $\group$. We also find the embedding
factors for $\subga$ and its subgroup $A_{4}$. We automate much
of the procedure in the Mathematica code in \olcite{Chen:2017GitHub},
which can be easily adjusted for calculating embedding factors of
other discrete groups. 

The complete list of $\group$ CG coefficients and the presentation
matrices of $\group$ in $\subga$ basis are new results, which could
be helpful for flavor model building based on $\group$, or studying
connections among $\group$ models\cite{Chen:2014wiw,PSL27Aliferis:2016blw},
$\subgb$ models \cite{Kile:2013gla,T7Cao:2010mp,T7Cao:2011cp,T7Hagedorn:2008bc,T7Luhn:2007sy,T7Luhn:2012bc},
and $\subga$ models (for a review of $\subga$ models, see \olcite{S4Bazzocchi:2012st}).
A subgroup tree of $SU\left(3\right)$ discrete subgroups can be found
in \olcite{Merle:2011vy}. For systematic analysis of discrete groups
used in flavor model buildings, see \olcitem{Ishimori:2010au,Ludl:2009ft}.

The remainder of the paper is organized as follows. In Section \ref{sec:Amb_CGC},
we analyze the ambiguities of CG coefficients. In Section \ref{sec:CGCs-in-subgroup},
we define the embedding factors and then show that they are basis
independent but still have phase ambiguities. In Section \ref{sec:RedPhaseAmbiguity},
phase conventions are introduce to reduce the phases ambiguities of
embedding factors. In Section \ref{sec:Calc_CGC}, we describe the
procedure of calculating the embedding factors. The procedure is then
applied to the group $\group$ and its subgroups $\subga$ and $\subgb$.
Specifically, in Section \ref{sec:Gen_PSL27}, the representation
matrices of the $\group$ group are obtained in its subgroup bases.
In Section \ref{sec:Calc_PSL27_CGC}, we calculate embedding factors
of the tensor product $\rep{6}\otimes\rep{6}\to\rep{6}$ of the group
$\group$ as an example. The group theory properties of relevant groups
are given in Appendix \ref{sec:Group-Theory}. In Appendix \ref{sec:Tech_Detail},
we describe the algorithm for arithmetic calculations of cyclotomic
numbers. Appendix \ref{sec:App_CGC_PSL27_S4} and \ref{sec:App_CGC_PSL27_T7}
list complete sets of CG coefficients of $\group$ in $\subga$ and
$\subgb$ bases.

\subsection*{Conventions}

Some of the conventions used in the main text are as follows. $G$
represents a discrete group and $H$ is its subgroup. The boldface
and capitalized letters $\rep{X}$, $\rep{Y}$, and $\rep{Z}$ are
irreps of the group $G$ and $X$, $Y$, $Z$ are corresponding vectors
in Hilbert spaces of these irreps. The boldface and lowercase letters
$\rep{x}$, $\rep{y}$, $\rep{z}$ are irreps of the subgroup $H$,
and $x$, $y$, $z$ are corresponding vectors in Hilbert space of
these irreps. Contraction of irreps of $G$ are denoted as $\covlg{X}{Y}{Z}$,
which means a contraction of $X$ and $Y$ to $Z$. Similarly, $\covsub{x}{y}{z}$
represents a contraction of $H$ irreps. The letters $i,\,j,\,k$
label a single component of a vector while $a,\,b,\,c$ label an irrep
of the subgroup. For example, $x_{i}$ means the $i$-th component
of $x$ while $x_{a}$ is a vector of the irreps $\rep{x_{a}}$. A
matrix realization (representation matrix) of group element $g$ in
$\rep{X}$ irrep is denoted as $\mr{X}\left(g\right)$. Representation
matrices are always unitary in this paper.

\section{\label{sec:Amb_CGC}Clebsch-Gordan coefficients and their ambiguities}

The tensor product of two irreps of a group is in general reducible.
Let $G$ be a discrete group with irreps $\rep{X}$, $\rep{Y}$, $\rep{Z}$,
and $X$, $Y$, $Z$ be vectors of the corresponding Hilbert spaces
on which the group elements act. If $\rep{Z}$ is contained in the
tensor product $\rep{X}\otimes\rep{Y}$, then given $X$ and $Y$,
there exist a $Z$ and a set of coefficients $M_{k\alpha}^{\left(XY\to Z\right)}$
such that, for all $g\in G$, 
\begin{equation}
\left(\mr{Z}\left(g\right)Z\right)_{k}=\sum_{\alpha}M_{k\alpha}^{\left(XY\to Z\right)}\left(\mr{X}\left(g\right)X\otimes\mr{Y}\left(g\right)Y\right)_{\alpha},\label{eq:Amb_CG_Def}
\end{equation}
 where $\mr{Z}$ is a matrix realization of irrep $\rep{Z}$ and elements
of the tensor product $\left(\mr{X}\left(g\right)X\otimes\mr{Y}\left(g\right)Y\right)$,
a column vector with dimension $\left(\dim X\times\dim Y\right)$,
are of the form $\left(\mr{X}\left(g\right)X\right)_{i}\left(\mr{Y}\left(g\right)Y\right)_{j}$.
The coefficients $M_{k\alpha}^{\left(XY\to Z\right)}$ are the Clebsch-Gordan
coefficients. Note that CG coefficients can also be defined as a unitary
transformation between the tensor products of group elements and the
direct sum of their irreps, e.g, \olcite{van1978clebsch}. In this
paper, it is more convenient to define CG coefficients with vector
spaces. In the remainder of this paper, we may suppress the superscript
$\left(XY\to Z\right)$ if there is no chance of confusion.

If $Z$ and $M^{\left(XY\to Z\right)}$ satisfy eq. (\ref{eq:Amb_CG_Def}),
then for any nonzero c-number $\lambda$, $\lambda Z$ and $\lambda M^{\left(XY\to Z\right)}$
also satisfy the equation. So it is then conventional to impose on
CG coefficients, in addition to orthogonality, normalization constraints,
which give rise to unitary CG coefficients
\begin{equation}
\sum_{\alpha}M_{k\alpha}^{*}M_{l\alpha}=\delta_{kl}.\label{eq:Amb_CG_Orthnorm}
\end{equation}

There are, however, other ambiguities. Firstly, CG coefficients are
basis-dependent. Under basis transformations \begin{subequations}
\begin{equation}
X\to X^{\prime}=U_{X}X,\quad\mr{X}\left(g\right)\to U_{X}\mr{X}\left(g\right)U_{X}^{-1},
\end{equation}
 and similarly for $Y$ and $Z$, eqs. (\ref{eq:Amb_CG_Def}) and
(\ref{eq:Amb_CG_Orthnorm}) are invariant if the matrices $M$ simultaneously
transform as 
\begin{equation}
M\to M^{\prime}=U_{Z}M\left(U_{X}^{-1}\otimes U_{Y}^{-1}\right).
\end{equation}
\end{subequations}We shall call this basis ambiguity in the remainder
of the paper. The second ambiguity is phase ambiguity, meaning that
eqs. (\ref{eq:Amb_CG_Def}) and (\ref{eq:Amb_CG_Orthnorm}) are invariant
under the transformation $M\to e^{i\theta}M$, $Z\to e^{i\theta}Z$.
The last ambiguity exists when there are nontrivial multiplicities
in a tensor product, i.e., $\rep{X}\otimes\rep{Y}\to\rep{Z}^{\left(1\right)}\oplus\cdots\oplus\rep{Z}^{\left(\mu\right)}$
, then a linear combination of the CG coefficients $\lambda_{1}M^{\left(1\right)}+\cdots+\lambda_{n}M^{\left(\mu\right)}$
with $\sum_{i}\lambda_{i}^{*}\lambda_{i}=1$ is also a set of unitary
CG coefficients. 

\section{\label{sec:CGCs-in-subgroup}Embedding factors}

Let $H$ be a subgroup of $G$ and assume that the irreps $\rep{X}$,
$\rep{Y}$, $\rep{Z}$ can be decomposed into irreps of $H$ as 
\begin{equation}
\rep{X}=\bigoplus_{a}\rep{x_{a}},\quad\rep{Y}=\bigoplus_{b}\rep{y_{b}},\quad\rep{Z}=\bigoplus_{c}\rep{z_{c}},\label{eq:SBCG_Embed}
\end{equation}
where $\rep{x_{a}}$, $\rep{y_{b}}$, $\rep{z_{c}}$ are irreps of
$H$. For the contraction $\rep{X}\otimes\rep{Y}\to\rep{Z}$, we can
write 
\begin{equation}
P_{Z\to c}Z=\sum_{a,b}\mathcal{E}_{c,ab}^{\left(XY\to Z\right)}\left\{ \left(P_{X\to a}X\right)\otimes\left(P_{Y\to b}Y\right)\right\} _{z_{c}},\label{eq:SBCG_Def1}
\end{equation}
where $P_{X\to a}$ is a matrix of dimension $\dim\rep{x_{a}}\times\dim\rep{X}$
to project the $x_{a}$ components from $X$ and $\mathcal{E}_{c,ab}^{\left(XY\to Z\right)}$
are the embedding factors. The projection matrices act like similarity
transformations on $H$ elements between representation $\rep{X}$
and $\rep{x_{a}}$, i.e., 
\begin{equation}
\left(P_{X\to a}\right)_{ik}\left[{\rm \mr{X}}\left(h\right)\right]_{kl}\left(P_{X\to b}^{\dagger}\right)_{lj}=\delta_{ab}\left[\mr{x_{a}}\left(h\right)\right]_{ij},\quad\forall h\in H.\label{eq:SBCG_Projection}
\end{equation}
In analogy to eq. (\ref{eq:Amb_CG_Def}), the coefficients $\mathcal{E}_{c,ab}^{\left(XY\to Z\right)}$
should actually satisfy a stronger constraint \begin{widetext} 
\begin{equation}
P_{Z\to c}\mr{Z}\left(g\right)Z=\sum_{a,b}\mathcal{E}_{c,ab}^{\left(XY\to Z\right)}\left\{ \left(P_{X\to a}\mr{X}\left(g\right)X\right)\otimes\left(P_{Y\to b}\mr{Y}\left(g\right)Y\right)\right\} _{z_{c}},\quad\forall g\in G,\label{eq:SBCG_Def2}
\end{equation}
\end{widetext} which reduces to eq. (\ref{eq:SBCG_Def1}) when $g=e$. 

In the following, we may write $\mathcal{E}_{c,ab}^{\left(XY\to Z\right)}$
simply as $\mathcal{E}_{c,ab}$ for convenience. The orthonormalization
constraints (\ref{eq:Amb_CG_Orthnorm}) now become
\begin{equation}
\sum_{ab}\mathcal{E}_{c,ab}^{*}\mathcal{E}_{d,ab}=\delta_{cd}.\label{eq:SBCG_orthnorm}
\end{equation}
 We remark that the $\mathrm{rhs}$ of each equations of (\ref{eq:SBCG_Embed})
may contain duplicated irreps. Such a case can be avoided by choosing
a large enough subgroup $H$. Therefore, for simplicify, we only consider
the case that no irrep is contained twice in an irrep of the large
group. 
\begin{prop}
The coefficients $\mathcal{E}_{c,ab}^{\left(XY\to Z\right)}$ defined
as eq. (\ref{eq:SBCG_Def2}) are invariant under basis transformations
of irreps of both the group and its subgroup.
\end{prop}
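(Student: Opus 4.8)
The plan is to show that both the left-hand side and the right-hand side of eq.~(\ref{eq:SBCG_Def2}) transform in the same way under a change of basis, so that the equation forces $\mathcal{E}_{c,ab}^{\left(XY\to Z\right)}$ to be the same numbers before and after the transformation. The central observation is that the projection matrices $P_{X\to a}$ are not arbitrary: they are pinned down (up to basis-dependent conjugation) by eq.~(\ref{eq:SBCG_Projection}), so the way they respond to a basis change of $\rep{X}$ is essentially forced. First I would make precise how each ingredient transforms. A basis transformation $U_X$ on $\rep{X}$ sends $\mr{X}(h)\mapsto U_X\mr{X}(h)U_X^{-1}$; a basis transformation $u_{x_a}$ on the subgroup irrep $\rep{x_a}$ sends $\mr{x_a}(h)\mapsto u_{x_a}\mr{x_a}(h)u_{x_a}^{-1}$. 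For eq.~(\ref{eq:SBCG_Projection}) to continue to hold with the transformed matrices, one needs $P_{X\to a}\mapsto u_{x_a}P_{X\to a}U_X^{-1}$ (this is the unique solution, given that no subgroup irrep appears twice in $\rep{X}$, by Schur's lemma applied componentwise to the blocks). The same holds for $P_{Y\to b}$ and $P_{Z\to c}$ with their respective $u$'s.

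Next I would substitute these transformation rules into eq.~(\ref{eq:SBCG_Def2}). On the left-hand side, $P_{Z\to c}\mr{Z}(g)Z$ becomes $u_{z_c}P_{Z\to c}U_Z^{-1}\cdot U_Z\mr{Z}(g)U_Z^{-1}\cdot U_Z Z = u_{z_c}\,P_{Z\to c}\mr{Z}(g)Z$, i.e. it simply picks up $u_{z_c}$ acting on the old expression. On the right-hand side, each factor $P_{X\to a}\mr{X}(g)X$ similarly becomes $u_{x_a}\,P_{X\to a}\mr{X}(g)X$ and likewise for the $Y$ factor, so the tensor product $\left(P_{X\to a}\mr{X}(g)X\right)\otimes\left(P_{Y\to b}\mr{Y}(g)Y\right)$ picks up $u_{x_a}\otimes u_{y_b}$. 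The remaining point is that the subgroup contraction bracket $\{\,\cdot\,\}_{z_c}$ itself transforms covariantly: a subgroup CG matrix $M^{(H)}$ for $\rep{x_a}\otimes\rep{y_b}\to\rep{z_c}$ transforms as $M^{(H)}\mapsto u_{z_c}M^{(H)}(u_{x_a}^{-1}\otimes u_{y_b}^{-1})$ under these same basis changes, exactly as recalled in Section~\ref{sec:Amb_CGC}. Hence $\big\{(u_{x_a}x_a)\otimes(u_{y_b}y_b)\big\}_{z_c} = u_{z_c}\big\{x_a\otimes y_b\big\}_{z_c}$, and the $u_{x_a}\otimes u_{y_b}$ inside the bracket gets converted into an overall $u_{z_c}$. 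Peeling the common factor $u_{z_c}$ off both sides, the transformed equation is literally eq.~(\ref{eq:SBCG_Def2}) with the \emph{same} coefficients $\mathcal{E}_{c,ab}$, which is what we wanted. (One should also note that $\mathcal{E}$ is uniquely determined by eq.~(\ref{eq:SBCG_Def2}) given a fixed choice of projectors and brackets — this follows from orthonormality, eq.~(\ref{eq:SBCG_orthnorm}), so matching the equations really does match the coefficients.)

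The main obstacle, and the step that needs the most care, is justifying that the projectors $P_{X\to a}$ must transform precisely as $u_{x_a}P_{X\to a}U_X^{-1}$ rather than with some extra freedom. This is where the no-multiplicity assumption is essential: because each subgroup irrep occurs at most once in $\rep{X}$, eq.~(\ref{eq:SBCG_Projection}) (which says the $P$'s intertwine $\mr{X}$ with $\bigoplus_a\mr{x_a}$ and vanish between inequivalent blocks) determines each $P_{X\to a}$ up to a single scalar on the image block; the transformation rule above is the one consistent choice, and any residual scalar is absorbed into the phase/normalization freedom already discussed for CG coefficients and does not affect the conclusion that $\mathcal{E}$ is invariant. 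I would state this carefully as a small lemma before running the substitution, since everything else in the argument is then bookkeeping with the transformation rules.
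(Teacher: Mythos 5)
Your proof is correct and follows essentially the same route as the paper's: the paper splits the argument into a pure $G$-basis change (under which $P_{X\to a}\mr{X}(g)X$ is literally invariant) and a subgroup-basis change (absorbed by the covariance of the subgroup CG brackets), whereas you handle both at once via the transformation law $P_{X\to a}\mapsto u_{x_a}P_{X\to a}U_X^{-1}$ forced by eq.~(\ref{eq:SBCG_Projection}) and Schur's lemma under the no-multiplicity assumption. The ingredients and the cancellation of the overall $u_{z_c}$ are the same, so this is a reorganization rather than a different proof.
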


\begin{proof}
Under a basis transformation of the group $G$, the vector $X$, projection
matrix $P_{X\to a}$, and matrices $\rho_{X}\left(g\right)$ transform
as $X^{\prime}=U_{X}X$, $P_{X\to a}^{\prime}=P_{X\to a}U_{X}^{-1}$,
and $\rho_{X}^{\prime}\left(g\right)=U_{X}\mr{X}\left(g\right)U_{X}^{-1}$
respectively. It follows that $\left(P_{X\to a}\mr{X}\left(g\right)X\right)$
is invariant, so are $\left(P_{Y\to b}\mr{Y}\left(g\right)Y\right)$
and $\left(P_{Z\to c}\mr{Z}\left(g\right)Z\right)$ . Hence, both
sides of eq. (\ref{eq:SBCG_Def2}) are invariant under the transformation.

Now consider basis transformations of the subgroup irreps. For simplicity,
we can now fix projection matrices to special forms, since we can
always perform basis transformations to bring projection matrices
to desired forms without changing $\mathcal{M}_{c,ab}$ coefficients.
So we choose bases such that elements of $H$ are block diagonal matrices,
which means that the projection matrices have the form 
\begin{equation}
P_{X\to a}=\begin{pmatrix}O_{1} & \cdots & O_{a-1} & P_{X\to a}^{\left(u\right)} & O_{a+1}\cdots\end{pmatrix},\label{eq:SBCG_PXa}
\end{equation}
 where $P_{X\to a}^{\left(\mathrm{u}\right)}$ is a unitary matrix
of dimension $\dim\rep{x_{a}}\times\dim\rep{x_{a}}$ and $O_{b}$
are zero matrices of dimension $\dim\rep{x_{a}}\times\dim\rep{x_{b}}$.
For basis transformations of $H$\begin{subequations}\label{eq:SBCG_Tran_sub}
\begin{equation}
\left(x_{a},\,y_{b},\,z_{c}\right)\to\left(x_{a}^{\prime},\,y_{b}^{\prime},\,z_{c}^{\prime}\right)=\left(U_{a}x_{a},\,U_{b}y_{b},\,U_{c}z_{c}\right),
\end{equation}
the subgroup contraction $\left\{ x_{a}\otimes y_{b}\right\} _{z_{c}}$
should transform as 
\begin{equation}
z_{c}=\left\{ x_{a}\otimes y_{b}\right\} _{z_{c}}\to z_{c}^{\prime}=\left\{ U_{a}x_{a}\otimes U_{b}y_{b}\right\} _{z_{c}^{\prime}}^{\prime},
\end{equation}
 \end{subequations}where we have used the primed brackets to indicate
the new CG coefficients in the new basis. It then implies 
\begin{equation}
\left\{ U_{a}x_{a}\otimes U_{b}y_{b}\right\} _{z_{c}^{\prime}}^{\prime}=U_{c}\left\{ x_{a}\otimes y_{b}\right\} _{z_{c}}.\label{eq:SBCG_Tran_SubCG}
\end{equation}
 Under these transformations, $\mr{X}\left(g\right)X$ transforms
as 
\begin{equation}
\mr{X}\left(g\right)X\to U_{X}\mr{X}\left(g\right)X,\label{eq:SBCG_Tran_rho_X}
\end{equation}
 where 
\begin{equation}
U_{X}=\bigoplus_{\rep{x_{a}}\text{ in }\rep{X}}\left(P_{X\to a}^{\left(\mathrm{u}\right)}\right)^{-1}U_{a}P_{X\to a}^{\left(u\right)}.
\end{equation}
Since $P_{X\to a}$ is in the form of (\ref{eq:SBCG_PXa}), it is
easy to see that 
\begin{equation}
P_{X\to a}U_{X}=U_{a}P_{X\to a}.\label{eq:SBCG_Pxa_UX}
\end{equation}
(\ref{eq:SBCG_Tran_rho_X}) and (\ref{eq:SBCG_Pxa_UX}) imply that
$P_{X\to a}\mr{X}\left(g\right)X$ transforms as 
\[
P_{X\to a}\mr{X}\left(g\right)X\to U_{a}P_{X\to a}\mr{X}\left(g\right)X.
\]
 Now consider the both sides of eq. (\ref{eq:SBCG_Def2}) under the
basis transformations. The $\mathrm{lhs}$ becomes $U_{c}P_{Z\to c}\mr{Z}\left(g\right)Z$
and the $\mathrm{rhs}$ becomes 
\begin{align*}
\mathrm{rhs} & =\sum_{a,b}\mathcal{E}_{c,ab}\left\{ U_{a}P_{X\to a}\mr{X}\left(g\right)X\otimes U_{b}P_{Y\to b}\mr{Y}\left(g\right)Y\right\} _{z_{c}^{\prime}}^{\prime}\\
 & =U_{c}\sum_{a,b}\mathcal{E}_{c,ab}\left\{ P_{X\to a}\mr{X}\left(g\right)X\otimes P_{Y\to b}\mr{Y}\left(g\right)Y\right\} _{z_{c}}\\
 & =U_{c}P_{Z\to c}\mr{Z}\left(g\right)Z=\mathrm{lhs},
\end{align*}
 where we have used eq. (\ref{eq:SBCG_Tran_SubCG}) in the second
equality. Eq. (\ref{eq:SBCG_Def2}) is therefore invariant under the
basis transformations (\ref{eq:SBCG_Tran_sub}). 
\end{proof}
We have showed that embedding factors are independent of bases of
both the group and its subgroup. There are, however, still ambiguities
in embedding factors due to phase ambiguities of the projection matrices
and subgroup CG coefficients. Consider the $U\left(1\right)$ transformations
on projection matrices and subgroup CG coefficients \footnote{According to (\ref{eq:SBCG_Projection}), projection matrices can
also transformation as 
\[
P_{X\to a}\to P_{X\to a}\mr{X}\left(g^{\prime}\right),
\]
 where $g^{\prime}$ is an element of center of $G$. But we can see
that it is equivalent to replacing $g$ with $g^{\prime}g$ in eq.
(\ref{eq:SBCG_Def2}) and hence does not change embedding factors.}
\begin{align}
P_{X\to a} & \to e^{i\theta_{a}^{\left(X\right)}}P_{X\to a},\label{eq:SBCG_Tran_PXa}\\
\covsub{x_{a}}{y_{b}}{z_{c}} & \to e^{i\phi^{\left(ab\to c\right)}}\covsub{x_{a}}{y_{b}}{z_{c}}.\label{eq:SBCG_Tran_Subcg}
\end{align}
Under these $U\left(1\right)$ transformations, the embedding factors
transform as 
\begin{equation}
\mathcal{E}_{c,ab}^{\left(XY\to Z\right)}\to e^{i\left(\theta_{a}^{\left(X\right)}+\theta_{b}^{\left(Y\right)}-\theta_{c}^{\left(Z\right)}+\phi^{\left(ab\to c\right)}\right)}\mathcal{E}_{c,ab}^{\left(XY\to Z\right)}.\label{eq:SBCG_Tran_M}
\end{equation}
 We see that there are in general four phase ambiguities for each
embedding factor. They can be removed or reduced by appropriate phase
conventions, which are usually basis-dependent. For example, for $SU\left(2\right)$
CG coefficients, it is conventional to choose a particular CG coefficient
to be real and positive. In the following section, we introduce a
basis-independent convention, which can reduce the number of $U\left(1\right)$
ambiguities and, in some cases, reduce the ambiguities to $Z_{2}$
ambiguities, i.e., ambiguities of sign factors.

\section{\label{sec:RedPhaseAmbiguity}Reducing phase ambiguities }

To introduce the convention, we first discuss real and complex representations.
A real or pseudoreal representation is a representation whose complex
conjugate is equivalent to itself while a complex representation is
a representation that is inequivalent to its complex conjugate. So
we can define the complex conjugate of a real (or pseudoreal) representation
to be itself. Now if $\rho_{\rep{X}}$ is a matrix realization of
$\rep{X}$ and $\rho_{\rep{\bar{X}}}$ the one of $\rep{\bar{X}}$,
then there exists a unitary matrix $\cpx{X}$ such that
\begin{equation}
\cpx{X}\rho_{\rep{X}}\left(g\right)^{*}\cpx{X}^{\dagger}=\rho_{\rep{\bar{X}}}\left(g\right),\quad\forall g\in G.\label{eq:Red_Gamma_rho}
\end{equation}
When $\rep{X}$ is complex, we can always choose $\rho_{\rep{\bar{X}}}\left(g\right)=\rho_{\rep{X}}\left(g\right)^{*}$
so that $\cpx{X}$ can be the identity matrix. When $\rep{X}$ is
real or pseudoreal, then $\rho_{\rep{\bar{X}}}$ is identical to $\rho_{\rep{X}}$
and $\cpx{X}$ is in general a nontrivial unitary matrix depending
on the basis of the representation. For real $\rep{X}$, $\cpx{X}$
is symmetric; for pseudoreal $\rep{X}$, $\cpx{X}$ is antisymmetric\cite{Ramond:2010zz}.
The unitarity of $\cpx{X}$ then implies that $\cpx{X}\cpx{X}^{*}=\pm\mathbf{1}$,
where $+$ is for real $\rep{X}$ and $-$ for pseudoreal $\rep{X}$. 

Eq. (\ref{eq:Red_Gamma_rho}) implies that $\cpx{X}X^{*}$ should
transform as a vector in the representation space of $\rep{\bar{X}}$.
We therefore can define a vector $\bar{X}$ to be
\begin{equation}
\bar{X}\equiv\cpx{X}X^{*}.\label{eq:Red_Gamma_X}
\end{equation}
With such a definition, it is natural to impose the following constraints\begin{subequations}\label{eq:CG_GammaCon}
\begin{align}
\cpx{Z}\left(\covlg{X}{Y}{Z}\right)^{*} & =\covlg{\bar{X}}{\bar{Y}}{\bar{Z}},\label{eq:Red_Gamma_Con1}\\
\cpx{c}\left(\covsub{x_{a}}{y_{b}}{z_{c}}\right)^{*} & =\covsub{\bar{x}_{a}}{\bar{y}_{b}}{\bar{z}_{c}},\label{eq:Red_Gamma_Con2}
\end{align}
where the matrix $\cpx{Z}$ transforms $Z^{*}$ to $\bar{Z}$, and
$\cpx{c}$ transforms $z_{c}^{*}$ to $\bar{z}_{c}$. The constraints
(\ref{eq:Red_Gamma_Con1}) and (\ref{eq:Red_Gamma_Con2}) imply that
the contractions $\covlg{X}{Y}{Z}$ and $\covsub{x_{a}}{y_{b}}{z_{c}}$
should behave the same as $Z$ and $z_{c}$ under the complex conjugate
operation. It is also natural to impose similar constraints on subgroup
irreps embedded in a large group irrep, meaning that $\cpx{c}$ should
transform the complex conjugate of $z_{c}$ components of $Z$ to
$\bar{z}_{c}$ components of $\bar{Z}$, 
\[
\cpx{c}\left(P_{Z\to c}Z\right)^{*}=P_{\bar{Z}\to\bar{c}}\bar{Z}=P_{\bar{Z}\to\bar{c}}\cpx{Z}Z^{*},
\]
from which it follows that
\begin{equation}
\cpx{c}P_{Z\to c}^{*}=P_{\bar{Z}\to\bar{c}}\cpx{Z}.\label{eq:Red_Gamma_Con3}
\end{equation}
\end{subequations}

These constraints lead to the following consequences.
\begin{prop}
\label{prop:CG_phase}The CG coefficients $M^{\left(x_{a}y_{b}\to z_{a}\right)}$
and $M^{\left(\bar{x}_{a}\bar{y}_{b}\to\bar{z}_{c}\right)}$ are related
by 
\begin{equation}
M^{\left(\bar{x}_{a}\bar{y}_{b}\to\bar{z}_{c}\right)}=\cpx{c}\left(M^{\left(x_{a}y_{b}\to z_{a}\right)}\right)^{*}\left(\cpx{a}^{-1}\otimes\cpx{b}^{-1}\right).\label{eq:Red_M_xaybzc}
\end{equation}
\end{prop}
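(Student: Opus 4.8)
The plan is to read off (\ref{eq:Red_M_xaybzc}) directly from the conjugation convention (\ref{eq:Red_Gamma_Con2}) together with the definition (\ref{eq:Red_Gamma_X}) of conjugate vectors. First I would write the subgroup contraction in matrix--vector form, $\covsub{x_a}{y_b}{z_c}=M^{\left(x_ay_b\to z_c\right)}\left(x_a\otimes y_b\right)$, whose $k$-th component is $\sum_\alpha M^{\left(x_ay_b\to z_c\right)}_{k\alpha}\left(x_a\otimes y_b\right)_\alpha$, and likewise $\covsub{\bar{x}_a}{\bar{y}_b}{\bar{z}_c}=M^{\left(\bar{x}_a\bar{y}_b\to\bar{z}_c\right)}\left(\bar{x}_a\otimes\bar{y}_b\right)$. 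Taking the complex conjugate of the first relation and multiplying on the left by $\cpx{c}$ gives $\cpx{c}\bigl(\covsub{x_a}{y_b}{z_c}\bigr)^{*}=\cpx{c}\bigl(M^{\left(x_ay_b\to z_c\right)}\bigr)^{*}\left(x_a^{*}\otimes y_b^{*}\right)$. Inverting (\ref{eq:Red_Gamma_X}) for the subgroup irreps, $x_a^{*}=\cpx{a}^{-1}\bar{x}_a$ and $y_b^{*}=\cpx{b}^{-1}\bar{y}_b$ by unitarity of the $\Gamma$'s, so that $x_a^{*}\otimes y_b^{*}=\bigl(\cpx{a}^{-1}\otimes\cpx{b}^{-1}\bigr)\left(\bar{x}_a\otimes\bar{y}_b\right)$.

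Next I would substitute the convention (\ref{eq:Red_Gamma_Con2}) on the left-hand side, turning it into $\covsub{\bar{x}_a}{\bar{y}_b}{\bar{z}_c}=M^{\left(\bar{x}_a\bar{y}_b\to\bar{z}_c\right)}\left(\bar{x}_a\otimes\bar{y}_b\right)$, so that
\[
M^{\left(\bar{x}_a\bar{y}_b\to\bar{z}_c\right)}\left(\bar{x}_a\otimes\bar{y}_b\right)=\cpx{c}\bigl(M^{\left(x_ay_b\to z_c\right)}\bigr)^{*}\bigl(\cpx{a}^{-1}\otimes\cpx{b}^{-1}\bigr)\left(\bar{x}_a\otimes\bar{y}_b\right).
\]
Since $x_a$ and $y_b$ (hence $\bar{x}_a$ and $\bar{y}_b$) range independently over their representation spaces, the product vectors $\bar{x}_a\otimes\bar{y}_b$ span the full tensor-product space --- they include a basis of the form $e_i\otimes e_j$ --- so the two coefficient matrices must agree, which is exactly (\ref{eq:Red_M_xaybzc}). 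As an independent check that also clarifies the bookkeeping, one can instead start from the defining equation (\ref{eq:Amb_CG_Def}) restricted to $H$, conjugate it, insert factors $\cpx{a}^{\dagger}\cpx{a}$ and $\cpx{b}^{\dagger}\cpx{b}$, and use (\ref{eq:Red_Gamma_rho}) in the form $\cpx{a}\rho_{x_a}\left(h\right)^{*}\cpx{a}^{\dagger}=\rho_{\bar{x}_a}\left(h\right)$ together with (\ref{eq:Red_Gamma_X}); one then finds that the matrix $\cpx{c}\bigl(M^{\left(x_ay_b\to z_c\right)}\bigr)^{*}\bigl(\cpx{a}^{-1}\otimes\cpx{b}^{-1}\bigr)$ satisfies the CG defining relation for $\bar{x}_a\otimes\bar{y}_b\to\bar{z}_c$ with the target vector $\bar{z}_c=\cpx{c}z_c^{*}$, is unitary because the $\Gamma$'s are, and hence --- with no multiplicities, where the CG coefficients are uniquely fixed once the target vector is --- coincides with $M^{\left(\bar{x}_a\bar{y}_b\to\bar{z}_c\right)}$.

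The computation itself is routine; the one point that needs care is the phase convention. A priori $M^{\left(\bar{x}_a\bar{y}_b\to\bar{z}_c\right)}$ is defined only up to the phase ambiguity described in Section \ref{sec:Amb_CGC}, so the real content of the proposition is that (\ref{eq:Red_Gamma_Con2}) pins this phase precisely to the value for which (\ref{eq:Red_M_xaybzc}) holds. The step to verify carefully is therefore that the $\bar{z}_c$ produced by conjugating the contraction is literally $\cpx{c}z_c^{*}$ --- which is exactly what (\ref{eq:Red_Gamma_Con2}) asserts --- and that the antilinearity of complex conjugation is tracked correctly through the tensor product. The signs in $\cpx{X}\cpx{X}^{*}=\pm\mathbf{1}$ play no role here, since only the combination $\cpx{a}^{-1}=\cpx{a}^{\dagger}$ enters.
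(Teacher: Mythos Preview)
Your proof is correct and follows essentially the same route as the paper: write the contraction as $z_c=M^{(x_ay_b\to z_c)}(x_a\otimes y_b)$, conjugate, apply (\ref{eq:Red_Gamma_Con2}) on one side and (\ref{eq:Red_Gamma_X}) on the other, and compare coefficients on the full tensor-product space. The paper expresses $\bar{x}_a,\bar{y}_b$ in terms of $x_a^{*},y_b^{*}$ rather than the reverse, but that is an immaterial difference; your additional remarks on the phase convention and the alternative check via (\ref{eq:Amb_CG_Def}) are correct but not needed for the argument.
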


\begin{proof}
The CG coefficients $M^{\left(x_{a}y_{b}\to z_{a}\right)}$ and $M^{\left(\bar{x}_{a}\bar{y}_{b}\to\bar{z}_{c}\right)}$
are defined as 
\[
z_{c}=M^{\left(x_{a}y_{b}\to z_{a}\right)}\left(x_{a}\otimes y_{b}\right),\quad\bar{z}_{c}=M^{\left(\bar{x}_{a}\bar{y}_{b}\to\bar{z}_{c}\right)}\left(\bar{x}_{a}\otimes\bar{y}_{b}\right).
\]
The constraint (\ref{eq:Red_Gamma_Con2}) then implies 
\begin{multline*}
\cpx{c}M^{\left(x_{a}y_{b}\to z_{a}\right)*}\left(x_{a}^{*}\otimes y_{b}^{*}\right)=M^{\left(\bar{x}_{a}\bar{y}_{b}\to\bar{z}_{c}\right)}\left(\bar{x}_{a}\otimes\bar{y}_{b}\right)\\
=M^{\left(\bar{x}_{a}\bar{y}_{b}\to\bar{z}_{c}\right)}\left(\cpx{a}\otimes\cpx{b}\right)\left(x_{a}^{*}\otimes y_{b}^{*}\right),
\end{multline*}
 where we used $\bar{x}_{a}=\cpx{x}x_{a}^{*}$ and $\bar{y}_{b}=\cpx{b}y_{b}^{*}$
in the second equality. Comparing the coefficients of both sides yields
eq. (\ref{eq:Red_M_xaybzc}).
\end{proof}
We see that the relation between $M^{\left(x_{a}y_{b}\to z_{c}\right)}$
and $M^{\left(\bar{x}_{a}\bar{y}_{b}\to\bar{z}_{c}\right)}$ depends
on basis of irreps. $M^{\left(\bar{x}_{a}\bar{y}_{b}\to\bar{z}_{c}\right)}$
is in general not simply the complex conjugate of $M^{\left(x_{a}y_{b}\to z_{c}\right)}$
unless that $\Gamma_{a,b,c}$ matrices are all identity matrices.
The matrix $\Gamma_{a}$ is the identity matrix in two cases: 1) $\rep{x_{a}}$
is complex; 2) $\rep{x_{a}}$ is real (not pseudoreal) irrep and is
in a basis that the matrices of its generators are all real. The latter
implies that, if all the three irreps are real, there exist bases
that the CG coefficients $M^{\left(x_{a}y_{b}\to z_{c}\right)}$ are
all real. The overall phase of $M^{\left(x_{a}y_{b}\to z_{c}\right)}$
is fixed up to a sign factor in two cases: i) $\rep{x_{a}}$, $\rep{y_{b}}$,
$\rep{z_{c}}$ are all real or pseudoreal irreps; ii) $\rep{z_{a}}$
is real or pseudoreal and $\rep{x_{a}}$ is the complex conjugate
of $\rep{y_{b}}$. In these two cases, the phase $\phi^{\left(ab\to c\right)}$
in eq. (\ref{eq:SBCG_Tran_M}) can only be $0$ or $\pi$.
\begin{prop}
When $\Gamma_{a}$ and $\Gamma_{X}$ are fixed, the phase ambiguity
of the projection matrix $P_{X\to a}$, denoted as $\theta_{a}^{\left(X\right)}$
in eq. (\ref{eq:SBCG_Tran_PXa}), are constrained by 
\[
e^{i\theta_{\bar{a}}^{\left(\bar{X}\right)}}=e^{-i\theta_{a}^{\left(X\right)}}.
\]
\end{prop}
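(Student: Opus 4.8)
The plan is to start from the defining equation (\ref{eq:Red_Gamma_Con3}) for the projection matrices, $\cpx{c}P_{Z\to c}^{*}=P_{\bar{Z}\to\bar{c}}\cpx{Z}$, written here for the irrep $\rep{X}$ rather than $\rep{Z}$: $\cpx{a}P_{X\to a}^{*}=P_{\bar{X}\to\bar{a}}\cpx{X}$. The claim is essentially a consistency condition: applying this relation ``twice'' — once to go from $\rep{X}$ to $\rep{\bar X}$ and once to come back — should return the original projection matrix, but the phase freedoms $e^{i\theta_a^{(X)}}$ in $P_{X\to a}$ and $e^{i\theta_{\bar a}^{(\bar X)}}$ in $P_{\bar X\to \bar a}$ only cancel if they are complex conjugates of one another.

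Concretely, I would first write the analogue of (\ref{eq:Red_Gamma_Con3}) with $X$ and $\bar X$ interchanged: $\cpx{\bar a}P_{\bar X\to\bar a}^{*}=P_{X\to a}\cpx{\bar X}$, using that $\bar{\bar X}=X$ and $\bar{\bar a}=a$. Take the complex conjugate of the original relation $\cpx{a}P_{X\to a}^{*}=P_{\bar X\to\bar a}\cpx{X}$ to get $\cpx{a}^{*}P_{X\to a}=P_{\bar X\to\bar a}^{*}\cpx{X}^{*}$, then substitute this into the interchanged relation. Using the standard identities $\cpx{\bar X}=\cpx{X}^{\dagger}$ (equivalently $\cpx{X}\cpx{\bar X}=\pm\mathbf 1$, consistent with $\cpx{X}\cpx{X}^{*}=\pm\mathbf 1$ quoted in the text) and the corresponding identity for $\cpx{a}$, one finds that the two relations are compatible identically when the $\cpx{}$'s and $P$'s are in their ``reference'' choice, i.e. no phase freedom is used. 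The content of the proposition then comes from tracking the $U(1)$ phases: if $P_{X\to a}\to e^{i\theta_a^{(X)}}P_{X\to a}$ and independently $P_{\bar X\to\bar a}\to e^{i\theta_{\bar a}^{(\bar X)}}P_{\bar X\to\bar a}$, then plugging into the relation $\cpx{a}P_{X\to a}^{*}=P_{\bar X\to\bar a}\cpx{X}$ — whose two sides are required to remain equal so that (\ref{eq:Red_Gamma_Con3}) continues to hold with fixed $\cpx{a}$, $\cpx{X}$ — forces $e^{-i\theta_a^{(X)}}=e^{i\theta_{\bar a}^{(\bar X)}}$, which is exactly the asserted constraint.

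I would phrase the argument so that it does not matter whether $\rep{X}$ is complex (where $\cpx{X}$ can be taken to be $\mathbf 1$, $\rep{\bar X}$ is genuinely distinct, and the two phases $\theta_a^{(X)}$, $\theta_{\bar a}^{(\bar X)}$ are a priori independent) or real/pseudoreal (where $\rep{\bar X}=\rep{X}$, $\bar a=a$, and the constraint degenerates to $e^{2i\theta_a^{(X)}}=1$, i.e. $\theta_a^{(X)}\in\{0,\pi\}$ — consistent with the ``sign ambiguity'' language of Section \ref{sec:RedPhaseAmbiguity}); both follow uniformly from the same computation. The only genuine subtlety — and the step I expect to be the main obstacle to state cleanly rather than to prove — is justifying that $\cpx{\bar X}=\cpx{X}^{\dagger}$ (and likewise $\cpx{\bar a}=\cpx{a}^{\dagger}$) for a \emph{consistent} choice of the $\cpx{}$ matrices: this is where the real versus pseudoreal sign $\cpx{X}\cpx{X}^{*}=\pm\mathbf 1$ enters, and one must check that (\ref{eq:Red_Gamma_rho}) applied to $\rep{X}$ and to $\rep{\bar X}$ can be satisfied simultaneously with this relation, so that ``$\Gamma_a$ and $\Gamma_X$ are fixed'' is a self-consistent hypothesis. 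Once that bookkeeping is in place, comparing the phase factors on the two sides of (\ref{eq:Red_Gamma_Con3}) is immediate and yields the stated identity.
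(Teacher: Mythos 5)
Your proposal is correct and takes essentially the same route as the paper, whose entire proof is that the claim ``follows directly from (\ref{eq:Red_Gamma_Con3})'': the operative step in both is to substitute $P_{X\to a}\to e^{i\theta_{a}^{\left(X\right)}}P_{X\to a}$ and $P_{\bar{X}\to\bar{a}}\to e^{i\theta_{\bar{a}}^{\left(\bar{X}\right)}}P_{\bar{X}\to\bar{a}}$ into $\cpx{a}P_{X\to a}^{*}=P_{\bar{X}\to\bar{a}}\cpx{X}$ with $\cpx{a}$, $\cpx{X}$ held fixed, and note that the complex conjugation on the left forces $e^{-i\theta_{a}^{\left(X\right)}}=e^{i\theta_{\bar{a}}^{\left(\bar{X}\right)}}$. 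The additional consistency discussion (double application of the relation and the form of $\cpx{\bar{X}}$) is not needed for the conclusion but does no harm.
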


\begin{proof}
This follows directly from  (\ref{eq:Red_Gamma_Con3}).
\end{proof}
If both $\rep{X}$ and $\rep{x_{a}}$ are real or pseudoreal, we have
$\theta_{a}^{\left(X\right)}$ is fixed to $0$ or $\pi$ and the
$U\left(1\right)$ ambiguity is reduced to a $\mathbb{Z}_{2}$ ambiguity;
if any of the irreps is complex, we have $\theta_{\bar{a}}^{\left(\bar{X}\right)}=-\theta_{a}^{\left(X\right)}$,
which implies that two $U\left(1\right)$ ambiguities are reduced
to one $U\left(1\right)$ ambiguity.
\begin{prop}
The embedding factors $\mathcal{E}_{c,ab}^{\left(XY\to Z\right)}$
and $\mathcal{E}_{\bar{c},\bar{a}\bar{b}}^{\left(\bar{X}\bar{Y}\to\bar{Z}\right)}$
satisfy
\begin{equation}
\mathcal{E}_{\bar{c},\bar{a}\bar{b}}^{\left(\bar{X}\bar{Y}\to\bar{Z}\right)}=\left(\mathcal{E}_{c,ab}^{\left(XY\to Z\right)}\right)^{*}.\label{eq:Red_M_cab}
\end{equation}
\end{prop}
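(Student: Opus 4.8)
The plan is to derive (\ref{eq:Red_M_cab}) by taking the complex conjugate of the defining relation (\ref{eq:SBCG_Def2}) for $\mathcal{E}_{c,ab}^{(XY\to Z)}$ and then rewriting each conjugated object in terms of its barred counterpart, using the three constraints (\ref{eq:Red_Gamma_Con1})--(\ref{eq:Red_Gamma_Con3}) together with Proposition~\ref{prop:CG_phase}. As in the preceding proofs, I would first fix the projection matrices to the block form (\ref{eq:SBCG_PXa}) and, for complex irreps, choose $\mr{\bar{X}}(g)=\mr{X}(g)^{*}$; this is not essential but keeps the bookkeeping light.

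First I treat the two sides of the conjugate of (\ref{eq:SBCG_Def2}) separately. For the left-hand side, $(P_{Z\to c}\mr{Z}(g)Z)^{*}=P_{Z\to c}^{*}\mr{Z}(g)^{*}Z^{*}$; multiplying by $\cpx{c}$ and applying in turn (\ref{eq:Red_Gamma_Con3}), i.e. $\cpx{c}P_{Z\to c}^{*}=P_{\bar{Z}\to\bar{c}}\cpx{Z}$, then (\ref{eq:Red_Gamma_rho}) in the form $\cpx{Z}\mr{Z}(g)^{*}=\mr{\bar{Z}}(g)\cpx{Z}$, and finally (\ref{eq:Red_Gamma_X}), $\cpx{Z}Z^{*}=\bar{Z}$, gives $\cpx{c}\,(P_{Z\to c}\mr{Z}(g)Z)^{*}=P_{\bar{Z}\to\bar{c}}\mr{\bar{Z}}(g)\bar{Z}$. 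Running the identical chain on the vectors $u_{a}:=P_{X\to a}\mr{X}(g)X$ and $v_{b}:=P_{Y\to b}\mr{Y}(g)Y$ yields $\cpx{a}u_{a}^{*}=P_{\bar{X}\to\bar{a}}\mr{\bar{X}}(g)\bar{X}$ and $\cpx{b}v_{b}^{*}=P_{\bar{Y}\to\bar{b}}\mr{\bar{Y}}(g)\bar{Y}$. For the right-hand side, I write the subgroup contraction as $\{u_{a}\otimes v_{b}\}_{z_{c}}=M^{(x_{a}y_{b}\to z_{c})}(u_{a}\otimes v_{b})$, so its conjugate is $(M^{(x_{a}y_{b}\to z_{c})})^{*}(u_{a}^{*}\otimes v_{b}^{*})$; substituting the relation $(M^{(x_{a}y_{b}\to z_{c})})^{*}=\cpx{c}^{-1}M^{(\bar{x}_{a}\bar{y}_{b}\to\bar{z}_{c})}(\cpx{a}\otimes\cpx{b})$ obtained from Proposition~\ref{prop:CG_phase}, and regrouping the $\Gamma$'s onto the argument vectors, gives $\{u_{a}\otimes v_{b}\}_{z_{c}}^{*}=\cpx{c}^{-1}\{(\cpx{a}u_{a}^{*})\otimes(\cpx{b}v_{b}^{*})\}_{\bar{z}_{c}}$, which by the previous step equals $\cpx{c}^{-1}\{(P_{\bar{X}\to\bar{a}}\mr{\bar{X}}(g)\bar{X})\otimes(P_{\bar{Y}\to\bar{b}}\mr{\bar{Y}}(g)\bar{Y})\}_{\bar{z}_{c}}$.

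Assembling these pieces, the complex conjugate of (\ref{eq:SBCG_Def2}) multiplied through by $\cpx{c}$ reads $P_{\bar{Z}\to\bar{c}}\mr{\bar{Z}}(g)\bar{Z}=\sum_{a,b}(\mathcal{E}_{c,ab}^{(XY\to Z)})^{*}\{(P_{\bar{X}\to\bar{a}}\mr{\bar{X}}(g)\bar{X})\otimes(P_{\bar{Y}\to\bar{b}}\mr{\bar{Y}}(g)\bar{Y})\}_{\bar{z}_{c}}$, the factors $\cpx{c}^{\mp 1}$ having cancelled on the right. This is exactly the defining relation (\ref{eq:SBCG_Def2}) for the contraction $\rep{\bar{X}}\otimes\rep{\bar{Y}}\to\rep{\bar{Z}}$, but with $(\mathcal{E}_{c,ab}^{(XY\to Z)})^{*}$ standing in place of $\mathcal{E}_{\bar{c},\bar{a}\bar{b}}^{(\bar{X}\bar{Y}\to\bar{Z})}$. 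Under the standing assumption that no subgroup irrep occurs twice in $\rep{X}$, $\rep{Y}$, or $\rep{Z}$, the embedding factors in (\ref{eq:SBCG_Def2}) are uniquely determined once the projection matrices and subgroup CG coefficients are chosen, so comparing coefficients yields (\ref{eq:Red_M_cab}).

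The only real difficulty is organizational: one must keep straight which $\Gamma$ and which projector carries which index, and must read the conjugation of the bracket $\{\,\cdot\otimes\cdot\,\}_{z_{c}}$ as conjugating the subgroup CG matrix together with both argument vectors. No new idea is needed beyond Proposition~\ref{prop:CG_phase} and the constraints (\ref{eq:CG_GammaCon}); it is worth noting that the matrices $\cpx{X}$, $\cpx{Y}$, $\cpx{Z}$ attached to the large-group vectors drop out entirely, which is precisely why the relation for embedding factors is cleaner than the one for CG coefficients in Proposition~\ref{prop:CG_phase}.
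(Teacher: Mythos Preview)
Your proof is correct and follows essentially the same route as the paper's: conjugate the defining relation, use (\ref{eq:Red_Gamma_Con3}) to turn $\cpx{c}P_{Z\to c}^{*}$ into $P_{\bar Z\to\bar c}\cpx{Z}$, use (\ref{eq:Red_Gamma_Con2}) (equivalently Proposition~\ref{prop:CG_phase}) to rewrite the conjugated subgroup bracket as the barred bracket, and compare coefficients. The only cosmetic difference is that the paper works with the $g=e$ form (\ref{eq:SBCG_Def1}) and applies (\ref{eq:Red_Gamma_Con2}) directly rather than via Proposition~\ref{prop:CG_phase}, whereas you carry the general $g$ through; neither choice affects the argument.
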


\begin{proof}
The coefficients $\mathcal{E}_{c,ab}^{\left(XY\to Z\right)}$ and
$\mathcal{E}_{\bar{c},\bar{a}\bar{b}}^{\left(\bar{X}\bar{Y}\to\bar{Z}\right)}$
are defined as
\begin{align}
P_{Z\to c}Z & =\sum_{a,b}\mathcal{E}_{c,ab}^{\left(XY\to Z\right)}\covsub{P_{X\to a}X}{P_{Y\to b}Y}{z_{c}},\label{eq:Red_PZc_Z}\\
P_{\bar{Z}\to\bar{c}}\bar{Z} & =\sum_{a,b}\mathcal{E}_{\bar{c},\bar{a}\bar{b}}^{\left(\bar{X}\bar{Y}\to\bar{Z}\right)}\covsub{P_{\bar{X}\to\bar{a}}\bar{X}}{P_{\bar{Y}\to\bar{b}}\bar{Y}}{\bar{z}_{c}}.\label{eq:Red_PZc_Zb}
\end{align}
Applying eq. (\ref{eq:Red_Gamma_Con2}) to eq. (\ref{eq:Red_PZc_Zb})
yields
\begin{equation}
P_{\bar{Z}\to\bar{c}}\bar{Z}=\sum_{a,b}\mathcal{E}_{\bar{c},\bar{a}\bar{b}}^{\left(\bar{X}\bar{Y}\to\bar{Z}\right)}\Gamma_{c}\left(\covsub{P_{X\to a}X}{P_{Y\to b}Y}{z_{c}}\right)^{*}.\label{eq:Red_PZc_Zb2}
\end{equation}
On the other hand, we can also write $P_{\bar{Z}\to\bar{c}}\bar{Z}$
as, using eqs. (\ref{eq:Red_Gamma_Con3}) and (\ref{eq:Red_PZc_Z}),
\begin{multline}
P_{\bar{Z}\to\bar{c}}\bar{Z}=P_{\bar{Z}\to\bar{c}}\cpx{Z}Z^{*}=\Gamma_{c}P_{Z\to c}^{*}Z^{*}\\
=\sum_{a,b}\left(\mathcal{M}_{c,ab}^{\left(XY\to Z\right)}\right)^{*}\cpx{c}\left(\covsub{P_{X\to a}X}{P_{Y\to b}Y}{z_{c}}\right)^{*}\label{eq:Red_PZc_Zb3}
\end{multline}
 Comparing $\mathrm{rhs}$ of eqs. (\ref{eq:Red_PZc_Zb2}) and (\ref{eq:Red_PZc_Zb3})
gives rise to eq. (\ref{eq:Red_M_cab})
\end{proof}
Note that eq. (\ref{eq:Red_M_cab}) holds only when the CG coefficients
of the subgroup obey the constraint (\ref{eq:Red_Gamma_Con2}). The
constraint (\ref{eq:Red_M_cab}) on $\mathcal{E}_{c,ab}$ is much
simpler than the constraint (\ref{eq:Red_M_xaybzc}) on $M$. It is
basis-independent and $\mathcal{E}^{\left(\bar{X}\bar{Y}\to\bar{Z}\right)}$
is simply the complex conjugate of $\mathcal{E}^{\left(XY\to Z\right)}$.
The overall phase of $\mathcal{E}^{\left(XY\to Z\right)}$ is fixed
up to a sign factor when $\covlg{X}{Y}{Z}$ and $\covlg{\bar{X}}{\bar{Y}}{\bar{Z}}$
represent the same contraction, which occurs in the following two
cases:
\begin{itemize}
\item Case I: all of $\rep{X}$, $\rep{Y}$, and $\rep{Z}$ are real or
pseudoreal. Particularly, if subgroup irreps $\rep{z_{c}}$, $\rep{x_{a}}$,
$\rep{y_{b}}$ are also real or pseudo real, then the coefficient
$\mathcal{E}_{c,ab}^{\left(XY\to Z\right)}$ is a real number and
fixed up to a sign factor.
\item Case II: $\rep{Z}$ is real or pseudoreal and $\rep{X}=\rep{\bar{Y}}$
are complex.
\end{itemize}
For contractions of more than two vectors, we have similar constraints
as (\ref{eq:Red_Gamma_Con1}). Consider a contraction of three vectors
$\covlg{X}{\covlg{V}{W}{Y}}{Z}$ and its counterpart under complex
conjugate $\covlg{\bar{X}}{\covlg{\bar{V}}{\bar{W}}{\bar{Y}}}{\bar{Z}}$.
The relation between these two is
\begin{equation}
\cpx{Z}\left(\covlg{X}{\covlg{V}{W}{Y}}{Z}\right)^{*}=\covlg{\bar{X}}{\covlg{\bar{V}}{\bar{W}}{\bar{Y}}}{\bar{Z}}.\label{eq:Red_Gamma_XVW}
\end{equation}
 It can be shown as follows:
\begin{align*}
\mathrm{lhs} & =\cpx{Z}M^{\left(XY\to Z\right)*}\left(X^{*}\otimes\left(\covlg{V}{W}{Y}\right)^{*}\right)\\
 & =\cpx{Z}M^{\left(XY\to Z\right)*}\left(\cpx{X}^{-1}\otimes\cpx{Y}^{-1}\right)\left(\bar{X}\otimes\left(\covlg{\bar{V}}{\bar{W}}{\bar{Y}}\right)\right)\\
 & =M^{\left(\bar{X}\bar{Y}\to\bar{Z}\right)}\left(\bar{X}\otimes\left(\covlg{\bar{V}}{\bar{W}}{\bar{Y}}\right)\right)\\
 & =\covlg{\bar{X}}{\covlg{\bar{V}}{\bar{W}}{\bar{Y}}}{\bar{Z}}
\end{align*}
 where in the third equality we used eq. (\ref{eq:Red_M_xaybzc}).
We can generalize eq. (\ref{eq:Red_Gamma_XVW}) to contractions of
arbitrary number of vectors
\[
\cpx{Z}\left(\covlg{X\otimes Y\otimes\cdots}{W}{Z}\right)^{*}=\covlg{\bar{X}\otimes\bar{Y}\otimes\cdots}{\bar{W}}{\bar{Z}},
\]
 where the $\mathrm{rhs}$ is the complex conjugate counterpart of
the term inside the round parentheses of the $\mathrm{lhs}$ and we
have suppressed all the nesting structures and intermediate irreps.
When $\rep{Z}$ is the trivial singlet representation, $\Gamma_{Z}$
is the one-dimensional identity matrix and $\covlg{\bar{X}\otimes\bar{Y}\otimes\cdots}{\bar{W}}{\bar{Z}}$
is a c-number complex conjugate to $\covlg{X\otimes Y\otimes\cdots}{W}{Z}$.
Particularly, if the contraction $\covlg{X\otimes Y\otimes\cdots}{W}{1}$,
with $\rep{1}$ being the trivial singlet, is invariant under complex
conjugate operation of irreps, i.e. $\covlg{X\otimes Y\otimes\cdots}{W}{1}=\covlg{\bar{X}\otimes\bar{Y}\otimes\cdots}{\bar{W}}{1}$,
then the contraction is a real number.

\subsection{Remarks}

The above property has an implication in flavor physics models with
discrete flavor symmetries. To make the Lagrangian Hermitian, one
needs to add its complex conjugate to for each term of the Lagrangian.
For example, if a term like $\lambda\covlg{X}{\covlg{V}{W}{\bar{X}}}{1}$
is contained in the Lagrangian, then a counterpart term $\lambda^{*}\covlg{\bar{X}}{\covlg{\bar{V}}{\bar{W}}{X}}{1}$
is presumably contained in the Lagrangian as well. However, for general
CG coefficients, $\covlg{X}{\covlg{V}{W}{\bar{X}}}{1}$ and $\covlg{\bar{X}}{\covlg{\bar{V}}{\bar{W}}{X}}{1}$
are not necessary complex conjugate to each other. Therefore, the
coupling constant $\lambda$ in general has to be a complex number
with certain phase to make the Lagrangian real. With the embedding
factors defined under constraints (\ref{eq:CG_GammaCon}), the coefficient
$\lambda$ can always be a real number. 

One should not confuse the $\Gamma$ matrices with the unitary matrix
$\mathcal{U}$ of generalized charge-parity (CP) transformations\cite{CPChen:2014tpa,CPEcker:1981wv,CPEcker:1983hz,CPFeruglio:2012cw,CPHolthausen:2012dk,CPNeufeld:1987wa},
\begin{equation}
\phi_{i}\stackrel{CP}{\longrightarrow}\mathcal{U}_{i}\phi_{i}^{*}.\label{eq:Red_CP_Tran}
\end{equation}
\begin{equation}
\mathcal{U}_{i}\rho_{\rep{R}_{i}}\left(g\right)^{*}\mathcal{U}_{i}^{-1}=\rho_{\rep{R}_{i}}\left(u\left(g\right)\right),\quad\forall g\in G,\,\forall i,\label{eq:Red_CP}
\end{equation}
 where $u$ is an automorphism of the group. For a physical CP transformation,
$u$ should be class-inverting and involutory. If such an automorphism
exists, a model employing the group $G$ as flavor symmetry can be
invariant under transformation (\ref{eq:Red_CP_Tran}). The matrices
$\mathcal{U}_{i}$ are in general different with the $\Gamma$ matrices
defined as (\ref{eq:Red_Gamma_rho}). In fact, under the transformation
like $X\to\cpx{X}X^{*}$, we have 
\begin{equation}
X\to\bar{X},\quad\bar{X}\to\begin{cases}
-X & \text{pseudoreal }\rep{X}\\
X & \text{compex or real }\rep{X}
\end{cases}\label{eq:Red_Gamma_Tran}
\end{equation}
The Lagrangian is not invariant under (\ref{eq:Red_Gamma_Tran}) if
it contains contractions with odd number of pseudoreal irreps. But
there exist groups with pseudoreal irreps admitting a CP symmetry.
For example, the group $Q_{8}$, which has one 2-dimensional pseudoreal
irrep, admits a CP symmetry\cite{CPChen:2014tpa}. On the other hand,
even if a group does not have pseudoreal irreps, the transformation
$X\to\cpx{X}X^{*}$ is not necessary a CP transformation, since the
class-inverting involutory automorphism for such a transformation
might not exist. For example, the group $\subgb$, of which all irreps
are complex except the real trivial singlet, does not admit a physical
CP transformation\cite{CPChen:2014tpa} but the Lagrangian is invariant
under the transformation $X\to\cpx{X}X^{*}$.

\section{\label{sec:Calc_CGC}Procedure to calculate embedding factors}

In this section, we will describe the procedure to calculate embedding
factors. First, we remark that there are some existing methods to
calculate CG coefficients, for example, the Mathematica package Discrete\cite{Holthausen:2011vd},
which implements the algorithm of \olcite{van1978clebsch}, and the
method introduced by \olcite{Ludl:2009ft}. These methods work well
for low-dimensional irreps and groups with small order. However, when
it comes to CG coefficients of large discrete group or those involving
high-dimensional irreps, they are usually not effective\footnote{We do not know the order of groups or the dimension of irreps, beyond
which these methods become ineffective. Based on our testing, for
the group $\group$ of order $168$, the Discrete package did not
give any result for the calculation of CG coefficients of two six-dimensional
irreps, even after days of computation. The method of \olcite{Ludl:2009ft}
requires diagonalization of representation matrices by Mathematica,
which, based on our testing, failed to give any result for the six-dimensional
representation matrices of the group $\group$.}. Furthermore, our goal is to calculate the embedding factors, we
therefore introduce the following procedure. 

The step zero is to find representation matrices and CG coefficients
of its subgroup. Here, we assume that the subgroup are relatively
small and its representation matrices and CG coefficients are known
or easy to find. Moreover, the CG coefficients of the subgroup should
satisfy eq. (\ref{eq:Red_M_xaybzc}).
\begin{itemize}
\item Step I
\end{itemize}
Find representation matrices of $G$ in the subgroup basis. For simplicity,
we choose a basis that projection matrices are in the simplest form,
meaning that $P_{X\to j}^{\left(\mathrm{u}\right)}$ in (\ref{eq:SBCG_PXa})
are identity matrices\footnote{There is a special case that the projection matrix cannot be in the
trivial form. This occurs when two irreps of the group are identified
to the same irrep of the subgroup. For example, for the $\group$
group, both $\rep{3}$ and $\rep{\bar{3}}$ are identified to the
$\rep{3_{2}}$ of $S_{4}$. If we choose $P_{3\to3_{2}}^{\left(u\right)}$
to be a identify matrix, then $P_{\bar{3}\to3_{2}}^{\left(u\right)}$
cannot be identity matrix simultaneously. The key point here is to
make projection matrices as simple as possible.}. One can first find the representation matrices of low-dimensional
irreps then build the high-dimensional irreps from tensor products.
Usually the low-dimensional representation matrices, in a certain
basis, are already known in the literature or can be obtained from
the GAP\cite{GAP4}. We therefore focus on finding a similarity transformation
that transforms the matrices to the desired basis.

To find the similarity transformation, we need to diagonalize the
representation matrices, see Section \ref{sec:Gen_PSL27}. Entries
of these matrices are usually cyclotomic numbers, which are polynomials
of $n$-th roots of unity for certain fixed $n$. It is difficult
to find the eigenvectors of these matrices directly by Mathematica.
We developed a algorithm to perform arithmetic operation of cyclotomic
numbers. The details are discussed in Appendix \ref{sec:Tech_Detail}.
With the algorithm, we can find the eigenvector of a matrix for a
given eigenvalue, which, for low-dimensional irreps, usually can be
calculated directly by Mathematica.
\begin{itemize}
\item Step II
\end{itemize}
Write down the most general expression of a contraction $\covlg{X}{Y}{Z}$
in terms of subgroup contractions, as eq. (\ref{eq:SBCG_Def1}), with
undetermined embedding factors $\mathcal{E}_{c,ab}$ and then setup
equations for these coefficients. With eq. (\ref{eq:SBCG_Def1}),
we obtain the expression of $Z$ in terms of $\mathcal{E}_{c,ab}$
and bilinear forms of $X$ and $Y$. We then substitute the expression
of $Z$ into eq. (\ref{eq:SBCG_Def2}) with $g$ being generators
of $G$. If a generator is a member of the subgroup, then eq. (\ref{eq:SBCG_Def2})
is automatically satisfied. So we only need to substitute $g$ with
generators that are in the subgroup. By matching of the coefficients
of bilinear forms of $X$ and $Y$, we obtain homogeneous equations
with respect to the unknown variables $\mathcal{E}_{c,ab}$. In this
way, the number of equations we obtained are usually much more than
the number of unknown variables. Many of the equations are dependent
on others and hence redundant. 

Alternatively, instead of matching coefficients of bilinear forms,
we can generate the equations by replacing $X$ and $Y$ with some
constant vectors 
\begin{equation}
\left\{ X=V^{\left(p\right)},Y=W^{\left(p\right)}\right\} ,\quad p=1,2,\cdots,\label{eq:Proc_XY}
\end{equation}
 with each $p$ corresponding to one set of inputs. There are different
choices of the constant vectors $V^{\left(p\right)}$ and $W^{\left(p\right)}$.
A simple choice is that each vector has only one nonzero component,
i.e., 
\begin{equation}
V_{k}^{\left(p\right)}=\delta_{k,i_{p}},\,W_{k}^{\left(p\right)}=\delta_{k,j_{p}},\label{eq:Proc_VW}
\end{equation}
 where $\left\{ i_{p}\right\} $ and $\left\{ j_{p}\right\} $ are
two sets of appropriately chosen positive integers. In this way, we
can reduce the number of equations. Of course, we need to choose enough
number of $i_{p}$ and $j_{p}$ and there could still be redundant
equations and some of the equations are trivially $0=0$. If $\rep{X}\otimes\rep{Y}\to\rep{Z}$
has multiplicity $\mu_{Z}$, then there will be at most $N_{c}-\mu_{Z}$
independent equations, where $N_{c}$ is the number of the unknown
variables $\mathcal{M}_{c,ab}$.
\begin{itemize}
\item Step III
\end{itemize}
The third step is to solve for the unknown variables $\mathcal{E}_{c,ab}$.
In the solution of the homogeneous linear equations, there will be
$\mu_{Z}$ free variables and the other $N_{c}-\mu_{Z}$ variables
be expressed as linear combinations of these free variables. In principle,
we could solve these linear equations using standard methods. However,
as the coefficients of these linear equations are cyclotomic numbers,
which come from the matrices of group generators, the exact solutions
are usually involved. If we use Mathematica to solve the equations
directly, it usually cannot simplify the solution to appropriate forms. 

There are two ways to solve the issue. The first way is to use the
calculation technique of cyclotomic numbers. We can use the Gaussian
elimination algorithm with arithmetic operation of cyclotomic numbers
to solve the equations. To be efficient, the Gaussian elimination
procedure should apply to a set of independent equations, which can
be found by converting the coefficients of equations to floating numbers
and apply regular Gaussian elimination algorithm with certain error
tolerance.

The second way to solve the issue is to use a Mathematica programming
trick. Instead of solving the equations directly, we convert all the
coefficients to floating numbers and solve the equation numerically.
We then convert the float numbers back into exact numbers using the
Mathematica function RootApproximant\cite{Mathematica}. Because of
numerical instability, the dependency relations of the equations are
broken when the coefficients are converted to float numbers. Hence,
it is crucial to pick out the maximal set of independent equations
before solving the equations. Again, this can be done by the Gaussian
elimination algorithm with an appropriate error tolerance. The method
is less rigorous comparing to the first one. But we found it very
effective in the calculation of $\group$ CG coefficients.
\begin{itemize}
\item Step IV
\end{itemize}
The last step is to solve the constraint (\ref{eq:Red_M_cab}) and
orthonormalize the embedding factors. The constraint (\ref{eq:Red_M_cab})
simply means that the embedding factors of $\covlg{X}{Y}{Z}$ are
complex conjugate of those of $\covlg{\bar{X}}{\bar{Y}}{\bar{Z}}$.
Thus, care should be taken when $\covlg{X}{Y}{Z}$ is invariant under
the complex conjugate transformation of irreps, i.e., the Case I and
II of section \ref{sec:CGCs-in-subgroup}. The constraint under these
two cases can be solved as follows. Let $\left\{ \mathcal{E}_{A}\right\} $
be the sets of free parameters in the solution of step III, where
$A$ denotes subscripts of the form $\left(c,ab\right)$, then the
constraint (\ref{eq:Red_M_cab}) is translated into equations of the
form 
\begin{equation}
\sum_{A}\alpha_{A}\mathcal{E}_{A}=\sum_{A}\beta_{A}\mathcal{E}_{A}^{*},\label{eq:Proc_aPcP}
\end{equation}
 where $\left\{ \alpha_{A}\right\} $ and $\left\{ \beta_{A}\right\} $
are constant c-numbers. These equations can be solved by expressing
$\mathcal{E}_{A}$ in terms of real and imaginary parts, i.e., $\mathcal{E}_{A}=\mathcal{E}_{A}^{\left(r\right)}+i\mathcal{E}_{A}^{\left(i\right)}$.
Finally, if $\mu_{Z}>1$, we can use the Gram\textendash Schmidt process
to build $\mu_{Z}$ sets of orthonormal embedding factors. \\

We have shown the procedure to find embedding factors. Let us now
count the number of degrees of freedom (DOF) of embedding factors.
There are $\mu_{Z}$ free complex coefficients $\mathcal{E}_{c,ab}$
after solving the homogeneous linear equations in step III. For Case
I and II, the constraint (\ref{eq:Red_M_cab}) generates $\mu_{Z}$
independent equations as (\ref{eq:Proc_aPcP}), which reduce the $\mu_{Z}$
complex DOF to $\mu_{Z}$ real DOF. The normalization condition reduces
one more real DOF. Therefore, there are $\mu_{Z}$ sets of independent
embedding factors with $\mu_{Z}-1$ real DOF. For other cases, the
constraint (\ref{eq:Red_M_cab}) relates embedding factors of $\covlg{X}{Y}{Z}$
to those of $\covlg{\bar{X}}{\bar{Y}}{\bar{Z}}$. Then there are $2\mu_{Z}$
sets of embedding factors with $\mu_{Z}$ complex DOF. Again, the
normalization condition reduces one more real DOF. The $2\mu_{Z}$
sets of embedding factors therefore have $2\mu_{Z}-1$ real DOF. 

In the following sections, we will demonstrate the above procedure
with the group $\group$ and its subgroups $\subga$,, and $\subgb$.
We automate much of the procedures in Mathematica package files, which
can be found in \olcite{Chen:2017GitHub}. We note that the source
code can be adapted for different groups.

\section{\label{sec:Gen_PSL27}Representation matrices of $\group$}

In this section, we will find representation matrices of $\group$
in its subgroup bases. To begin with, let us first give a brief introduction
of the group and its subgroups. Much of the group theories can be
found in Appendix \ref{sec:Group-Theory}. 

$\group$, the largest discrete subgroup of $SU\left(3\right)$ of
order 168, is the projective special linear group of $\left(2\times2\right)$
matrices over $\mathbb{F}_{7}$, the finite Galois field of seven
elements. The generators of the group are defined as
\[
\left\langle A,B|A^{2}=B^{3}=\left(AB\right)^{7}=\left[A,B\right]^{4}=e\right\rangle ,
\]
 where $\left[A,B\right]\equiv A^{-1}B^{-1}AB$. It has six irreps
\cite{Luhn:2007yr}: the complex $\rep{3}$ and its conjugate, $\rep{\bar{3}}$,
as well as four reals, $\rep{1},\rep{6},\rep{7}$, and $\rep{8}$.
Two of its subgroups are $\subga$, generated by
\begin{equation}
a^{4}=b^{2}=\left(ab\right)^{3}=e.\label{eq:Gen_S4_presentation}
\end{equation}
and $\subgb$, generated by
\begin{equation}
c^{7}=d^{3}=1,\quad d^{-1}cd=c^{4}.\label{eq:Gen_T7_presentations}
\end{equation}
The $\subga$ and $\subgb$ generators can be expressed in terms of
$\mathcal{PSL}_{2}\left(7\right)$ generators as
\begin{equation}
a=\left[A,B\right],\quad b=\left(AB^{2}\left(AB\right)^{2}\right)^{2},\label{eq:Gen_Psl_S4_Relation}
\end{equation}
 and
\begin{equation}
c=AB,\quad d=AB\left(AB^{2}\right)^{2}\left(AB\right)^{2}\left(AB^{2}\right)^{2}.\label{eq:Gen_Psl_T7_Relation}
\end{equation}
 We note that these expression are not unique and they can be found
by GAP\cite{GAP4}. An example GAP code to find such relations can
be found in \olcite{Chen:2017GitHub}.

In the following subsection, we will choose bases of $\group$ irreps
so that the projection matrices $P_{X\to a}$ are in the form of eq.
(\ref{eq:SBCG_PXa}) with $P_{X\to a}^{\left(\mathrm{u}\right)}$
being identity matrices, except for a special case that we will see
shortly. In such bases, the subgroup irreps are contained in $\group$
irreps following their orders in the embedding relations shown as
table \ref{tab:PSL27-S4-T7}. For example, irreps of the subgroup
$\subgb$ are embedded in the $\rep{6}$ irrep as $\rep{6}=\rep{3}\oplus\rep{\bar{3}}$,
then the first three components of a sextet form a $\subgb$ triplet
and the last three components form a $\subgb$ anti-triplet. 

We will first find the representation matrices in the $\subga$ basis
and then the $\subgb$ basis. For both subgroups, we will first seek
for the representation matrices of $\rep{3}$ and $\rep{\bar{3}}$,
then build high-dimensional irreps from the tensor product of low-dimensional
irreps. We will denote contractions of $\group$ by square brackets
and those of $\subga$ and $\subgb$ by curly brackets.

\subsection{\label{subsec:PSL27-Gen-S4}In the $\subga$ basis}

To find representation matrices of $\group$ in the $\subga$ basis,
we first need to find representation matrices of $\subga$, which
usually can be obtained from the literature or the GAP, and then the
CG coefficients of $\subga$. But since $\subga$ has the subgroup
$A_{4}$, which is also a group popular in model building of flavor
physics, we seek for a matrix realization of $\subga$ in $A_{4}$
basis. $A_{4}$ has only one non-singlet irrep, $\rep{3}$, whose
matrix realization is given as eq. (\ref{eq:A4-generators}). The
procedures to find $S_{4}$ representation matrices in $A_{4}$ basis
are similar to what we will talk in this section but the calculation
is kind of trivial. So we simply give the results in Appendix \ref{sec:Group-Theory}
and focus on finding representation matrices of $\group$.

Although we require the projection matrices to be the simplest form,
there are still ambiguities in the representation matrices of $\group$
because of the phase ambiguity of subgroup representation spaces.
For complex representations $\rep{x}$ and $\rep{\bar{x}}$, if $x$
and $\bar{x}$ are vectors in their representation spaces satisfying
$\Gamma_{x}x^{*}=\bar{x}$, then $e^{i\theta}x$ and $e^{-i\theta}\bar{x}$
are also vectors of the representation spaces and satisfy the same
constraint; for real or pseudoreal representation $\rep{x}$, the
phase ambiguity becomes a sign ambiguity. Since $\subga$ is an ambivalent
group, whose irreps all are real or pseudoreal, the vectors of $\subga$
representation spaces have sign ambiguities. It then implies that,
consulting the embedding of $\subga$ irreps in $\group$ irreps as
table \ref{tab:PSL27-S4-T7}, representation matrices of $\group$
irreps $\rep{6}$, $\rep{7}$, and $\rep{8}$ are fixed up to similarity
transformation of diagonal sign-factor matrices,
\begin{equation}
O^{[\rep{R}]}\to\mathcal{S}^{[\rep{R}]}O^{[\rep{R}]}\left(\mathcal{S}^{[\rep{R}]}\right)^{\dagger},\quad O=A,\,B,\quad\rep{R}=\rep{6},\,\rep{7},\,\rep{8},\label{eq:S4_SMatrix}
\end{equation}
 where 
\begin{align*}
\mathcal{S}^{\left[\rep{6}\right]} & =\mathrm{diag}\left(\pm1,\pm\mathbf{I}_{2},\pm\mathbf{I}_{3}\right),\\
\mathcal{S}^{\left[\rep{7}\right]} & =\mathrm{diag}\left(\pm1,\pm\mathbf{I}_{3},\pm\mathbf{I}_{3}\right),\\
\mathcal{S}^{\left[\rep{8}\right]} & =\mathrm{diag}\left(\pm\mathbf{I}_{2},\pm\mathbf{I}_{3},\pm\mathbf{I}_{3}\right),
\end{align*}
 with $\mathbf{I}_{n}$ being the $n\times n$ identity matrix. Since
the triplet and anti-triplet decomposition are $\rep{3}=\rep{3_{2}}$
and $\rep{\bar{3}}=\rep{3_{2}}$, their representation matrices are
fixed because a similarity transformation of the above form does not
change the matrices. Under above transformations, the embedding factors
transform as eq. (\ref{eq:SBCG_Tran_M}) with $\phi^{\left(ab\to c\right)}=0$
and $\theta_{a,b,c}^{\left(X,Y,Z\right)}=0\text{ or }\pi$. In the
following results, the $\mathcal{S}$ matrices are chosen to be identity
matrices for simplicity. 

\subsubsection{The triplet representation}

Representation matrices of $\group$ generators already exist in the
literature. We will use the existing results and transform them to
the $\subga$ basis. In \olcite{Luhn:2007yr} the representation
matrices of $\group$ in triplet irrep are \begin{subequations}\label{eq:Gen_AB3_old}
\begin{align}
\tilde{A}^{[\rep{3}]} & =\frac{i}{\sqrt{7}}\begin{pmatrix}\eta^{2}-\eta^{5} & \eta-\eta^{6} & \eta^{4}-\eta^{3}\\
\eta-\eta^{6} & \eta^{4}-\eta^{3} & \eta^{2}-\eta^{5}\\
\eta^{4}-\eta^{3} & \eta^{2}-\eta^{5} & \eta-\eta^{6}
\end{pmatrix},\quad\eta=\exp\left(i2\pi/7\right)\label{eq:Gen_A3_old}\\
\tilde{B}^{[\rep{3}]} & =\frac{i}{\sqrt{7}}\begin{pmatrix}\eta^{3}-\eta^{6} & \eta^{3}-\eta & \eta-1\\
\eta^{2}-1 & \eta^{6}-\eta^{5} & \eta^{6}-\eta^{2}\\
\eta^{5}-\eta^{4} & \eta^{4}-1 & \eta^{5}-\eta^{3}
\end{pmatrix}.\label{eq:Gen_B3_old}
\end{align}
\end{subequations}Since $\rep{3}$ of $\group$ is identified to
$\rep{3_{2}}$ of $\subga$, the $S_{4}$ generators $a$ and $b$
in $\rep{3_{2}}$ irrep can be generated by $\tilde{A}^{[\rep{3}]}$
and $\tilde{B}^{[\rep{3}]}$. Using eq. (\ref{eq:Gen_Psl_S4_Relation}),
we have 
\[
\tilde{a}=\frac{i}{\sqrt{7}}\left(\begin{array}{ccc}
\eta^{5}-\eta & \eta^{5}-\eta^{3} & \eta^{3}-\eta^{2}\\
\eta^{6}-\eta^{4} & \eta^{3}-\eta^{2} & \eta^{3}-\eta^{6}\\
\eta^{6}-\eta^{5} & \eta^{5}-\eta & \eta^{6}-\eta^{4}
\end{array}\right),
\]
\[
\tilde{b}=-\frac{1}{7}\begin{pmatrix}\eta^{6}-\eta^{2} & \eta-\eta^{6} & \eta^{3}-\eta^{2}\\
\eta^{6}-\eta^{4} & \eta^{5}-\eta^{4} & \eta^{2}-\eta^{5}\\
\eta^{4}-\eta^{3} & \eta^{5}-\eta & \eta^{3}-\eta
\end{pmatrix}^{2}.
\]

We now want to find a unitary matrix $U$ that simultaneously transforms
$\tilde{a}$ to $a^{[\rep{3_{2}}]}$ and $\tilde{b}$ to $b^{[\rep{3_{2}}]}$,
where $a^{[\rep{3_{2}}]}$ and $b^{[\rep{3_{2}}]}$ are given as eqs.
(\ref{eq:S4_irr31}), \begin{subequations} 
\begin{align}
U^{\dagger}\tilde{a}U & =a^{[\rep{3_{2}}]},\label{eq:S4_Unitary_Transform_aa}\\
U^{\dagger}\tilde{b}U & =b^{[\rep{3_{2}}]}.\label{eq:S4_Unitary_Transform_bb}
\end{align}
\end{subequations} The matrix $U$ can be found as follows. Since
$\tilde{a}$ and $a^{[\rep{3_{2}}]}$ have the same eigenvalues $\left\{ 1,i,-i\right\} $,
there exist unitary matrices $U_{1}$ and $U_{2}$ such that 
\begin{equation}
U_{1}^{\dagger}\tilde{a}U_{1}=U_{2}^{\dagger}a^{[\rep{3_{2}}]}U_{2}=\mathrm{diag}\left(1,i,-i\right).\label{eq:S4_U1_U2}
\end{equation}
 Now the matrix $U$ can be written as 
\begin{equation}
U=U_{1}\begin{pmatrix}e^{i\theta_{1}}\\
 & e^{i\theta_{2}}\\
 &  & e^{i\theta_{3}}
\end{pmatrix}U_{2}^{\dagger}.\label{eq:S4_Unitary_Transform_U}
\end{equation}
Substituting above into eq. (\ref{eq:S4_Unitary_Transform_bb}), we
can solve for $\theta_{2}$ and $\theta_{3}$ in terms of $\theta_{1}$
and determine the matrix $U$ up to an irrelevant overall phase. To
diagonalize $\tilde{a}$, we can use the algorithm of cyclotomic number
calculation to find its eigenvectors.

Alternatively, an easier way to find $U$ is using the $A_{4}$ generators
$s=a^{2}$ and $t=ab$. In the desired basis, $t=\mathrm{diag}\left(1,\omega,\omega^{2}\right)$
is diagonal and, in the basis of (\ref{eq:Gen_AB3_old}), $\tilde{t}$
has a simple form 
\[
\tilde{t}=\tilde{a}\tilde{b}=\begin{pmatrix}\begin{array}{ccc}
0 & \eta^{4} & 0\\
0 & 0 & \eta\\
\eta^{2} & 0 & 0
\end{array}\end{pmatrix}.
\]
Replacing $\tilde{a}$ and $a^{[\rep{3_{2}}]}$ with $\tilde{t}$
and $t$ in eq. (\ref{eq:S4_U1_U2}) and repeating the calculation,
we find that $U_{2}$ is the identity matrix and\begin{subequations}\label{eq:Gen_U1phases}
\begin{equation}
U_{1}=\left(\begin{array}{ccc}
1 & 1 & 1\\
\eta^{3} & \omega\eta^{3} & \omega^{2}\eta^{3}\\
\eta^{2} & \omega^{2}\eta^{2} & \omega\eta^{2}
\end{array}\right),\quad\omega=\exp\left(i\frac{2\pi}{3}\right).
\end{equation}
Requiring that $U$ transforms $\tilde{s}=\tilde{a}^{2}$ to the matrix
$s$ of eq. (\ref{eq:A4-generators}), we obtain
\begin{align}
e^{i\left(\theta_{2}-\theta_{1}\right)} & =\frac{1}{28}\left[\left(-9-\sqrt{3}i\right)\eta^{5}-2\left(9+\sqrt{3}i\right)\eta^{4}\right]\nonumber \\
 & +\frac{1}{28}\left[\left(-6+4i\sqrt{3}\right)\eta^{3}+\left(6-4i\sqrt{3}\right)\eta^{2}\right]\nonumber \\
 & +\frac{1}{28}\left[\left(-3+9i\sqrt{3}\right)\eta+\sqrt{3}i-5\right],\\
e^{i\left(\theta_{3}-\theta_{1}\right)} & =\frac{1}{28}\left[\left(-9+\sqrt{3}i\right)\eta^{5}-2\left(9-\sqrt{3}i\right)\eta^{4}\right]\nonumber \\
 & +\frac{1}{28}\left[\left(-6-4i\sqrt{3}\right)\eta^{3}+\left(6+4i\sqrt{3}\right)\eta^{2}\right]\nonumber \\
 & +\frac{1}{28}\left[\left(-3-9i\sqrt{3}\right)\eta-\sqrt{3}i-5\right].
\end{align}
\end{subequations}Substituting eqs. (\ref{eq:Gen_U1phases}) and
$U_{2}=I$ into eq. (\ref{eq:S4_Unitary_Transform_U}), we obtain
the matrix $U$ in a complicated expression. Fortunately, applying
the unitary transformation with $U$ to $\tilde{A}^{\left[\rep{3}\right]}$
and $\tilde{B}^{\left[\rep{3}\right]}$ , we get simple expressions
of $A$ and $B$ in the desired basis\begin{subequations}

\begin{align}
A^{[\rep{3}]} & =\left(\begin{array}{ccc}
-\frac{1}{3} & \frac{2}{3}\omega & \frac{2}{3}\omega^{2}\\
\frac{2}{3}\omega^{2} & -\frac{1}{3} & \frac{2}{3}\omega\\
\frac{2}{3}\omega & \frac{2}{3}\omega^{2} & -\frac{1}{3}
\end{array}\right),\label{eq:S4_A3}\\
B^{[\rep{3}]} & =\left(\begin{array}{ccc}
\frac{2}{3} & -\frac{i\left(\sqrt{3}+\sqrt{7}\right)}{6}\omega^{2}\bar{b}_{7}^{2} & \frac{i\left(\sqrt{3}-\sqrt{7}\right)}{6}\omega\bar{b}_{7}^{2}\\
\frac{i\left(\sqrt{7}-\sqrt{3}\right)}{6}\omega b_{7}^{2} & -\frac{1}{3} & \frac{1+\sqrt{21}}{6}\omega^{2}\\
\frac{i\left(\sqrt{3}+\sqrt{7}\right)}{6}\omega^{2}b_{7}^{2} & \frac{1-\sqrt{21}}{6}\omega & -\frac{1}{3}
\end{array}\right),\label{eq:S4_B3}
\end{align}
 \end{subequations} where $b_{7}$ and $\bar{b}_{7}$ are pure phases
\begin{equation}
b_{7}=\frac{\eta+\eta^{2}+\eta^{4}}{\sqrt{2}}=\frac{-1+i\sqrt{7}}{2\sqrt{2}},\quad\bar{b}_{7}=b_{7}^{*}.
\end{equation}

The $\rep{\bar{3}}$ matrix realization is the complex conjugate of
$\rep{3}$. However, the projection matrix $P_{\bar{3}\to3_{2}}$
is not the identity matrix but equals the matrix $\Gamma_{3_{2}}$,
\[
P_{\bar{3}\to3_{2}}=\cpx{3_{2}}=\begin{pmatrix}1 & 0 & 0\\
0 & 0 & 1\\
0 & 1 & 0
\end{pmatrix}.
\]
It can be explained as follows. If $\rep{3_{2}}$ were a complex representation,
there would exist its complex conjugate $\rep{\bar{3}_{2}}$. The
decomposition of $\rep{3}$ of $\group$ to $\subga$ irreps would
be $\rep{3}=\rep{3_{2}}$, $\rep{\bar{3}}=\rep{\bar{3}_{2}}$ and
both of the projection matrices be the identity matrix. But now $\rep{3_{2}}$
and $\rep{\bar{3}_{2}}$ are equivalent and related by a similarity
transformation $\Gamma_{3_{2}}$. Therefore $\rep{3_{2}}$ and $\rep{\bar{3}}$
should also be related by the same similarity transformation, and
hence, $P_{\bar{3}\to3_{2}}=\Gamma_{3_{2}}$. 

\subsubsection{Sextet, Octet, and Septet Representations}

We now build representation matrices of high-dimensional irreps with
those of $\rep{3}$ and $\rep{\bar{3}}$. The generators in $\rep{6}$
irrep can be obtained from the tensor product $\rep{3}\otimes\rep{3}\to\rep{6}$
. The decompositions of $\group$ irreps into $\subga$ irreps 
\begin{equation}
\rep{6}=\rep{1_{0}}\oplus\rep{2}\oplus\rep{3_{1}},\quad\rep{3}=\rep{3_{2}},\label{eq:S4_embed6}
\end{equation}
and the tensor product of $\subga$ irreps
\[
\rep{3_{2}}\otimes\rep{3_{2}}\to\rep{1_{0}}\oplus\rep{2}\oplus\rep{3_{1}}
\]
determine the $\group$ contraction $\covp{3}{3}{6}$ to be 
\begin{equation}
\covp{3}{3}{6}=\begin{pmatrix}e^{i\theta_{1}}\covs{3_{2}}{3_{2}}{1_{0}}\\
e^{i\theta_{2}}\covs{3_{3}}{3_{2}}{2}\\
e^{i\theta_{3}}\covs{3_{2}}{3_{2}}{3_{1}}
\end{pmatrix},\label{eq:S4_CG336}
\end{equation}
 where the phases are to be determined. The generators of $\rep{6}$
can be extracted from the equations 
\[
O^{[\rep{6}]}\rep{6}=\left[O^{\left[\rep{3}\right]}\rep{3}\otimes O^{\left[\rep{3}\right]}\rep{3}\right]_{\rep{6}},\quad O=A,B.
\]
We then obtain matrices $A^{[\rep{6}]}$ and $B^{[\rep{6}]}$ with
unknown phases $\theta_{i}$. The phases $\theta_{i}$ can be determined
by the constraints 
\begin{equation}
\cpx{\rep{6}}O^{\left[\rep{6}\right]*}\cpx{\rep{6}}^{\dagger}=O^{\left[\rep{6}\right]},\quad O=A,B,\label{eq:S4_Gamma6}
\end{equation}
 where $\cpx{\rep{6}}$ can be determined by eq. (\ref{eq:Red_Gamma_Con3})
with the projection matrices in the simplest form. It turns out that
\[
\cpx{\rep{6}}=\left(1\right)\oplus\cpx{\rep{2}}^{\left(S_{4}\right)}\oplus\cpx{\rep{3_{1}}}^{\left(S_{4}\right)}
\]
 with $\Gamma^{\left(S_{4}\right)}$ given as eq. (\ref{eq:S4_Gamma}).
With above constraints, we find that $e^{i\theta_{2}}=-i\bar{b}_{7}^{2}e^{i\theta_{1}}$
and $e^{i\theta_{3}}=i\bar{b}_{7}e^{i\theta_{1}}$ with $\theta_{1}$
being a free unphysical phase. Choosing $\theta_{1}=0$, we obtain
the contraction 
\[
\covp{3}{3}{6}=\begin{pmatrix}\covs{3_{2}}{3_{2}}{1}\\
-i\bar{b}_{7}^{2}\covs{3_{3}}{3_{2}}{2}\\
i\bar{b}_{7}\covs{3_{2}}{3_{2}}{3_{1}}
\end{pmatrix},
\]
and the generators
\[
A^{[\rep{6}]}=\left(\begin{array}{cccccc}
1 & 0 & 0 & 0 & 0 & 0\\
0 & 1 & 0 & 0 & 0 & 0\\
0 & 0 & 1 & 0 & 0 & 0\\
0 & 0 & 0 & -\frac{1}{3} & \frac{2\omega}{3} & \frac{2\omega^{2}}{3}\\
0 & 0 & 0 & \frac{2\omega^{2}}{3} & -\frac{1}{3} & \frac{2\omega}{3}\\
0 & 0 & 0 & \frac{2\omega}{3} & \frac{2\omega^{2}}{3} & -\frac{1}{3}
\end{array}\right),
\]
 
\begin{multline*}
B^{[\rep{6}]}=\\
\left(\begin{array}{cccccc}
-\frac{1}{6} & -\frac{\sqrt{7}\omega}{6} & -\frac{\sqrt{7}\omega^{2}}{6} & \frac{\sqrt{7}}{6} & \frac{\sqrt{7}\omega}{6} & \frac{\sqrt{7}\omega^{2}}{6}\\
-\frac{\sqrt{7}}{6} & \frac{\omega}{3} & \frac{\omega^{2}}{3} & -\frac{2\omega^{2}}{3}-\frac{1}{6} & -\frac{\omega^{2}}{3} & -\frac{1}{3}\\
-\frac{\sqrt{7}}{6} & \frac{\omega}{3} & \frac{\omega^{2}}{3} & -\frac{2\omega}{3}-\frac{1}{6} & -\frac{1}{3} & -\frac{\omega}{3}\\
\frac{\sqrt{7}}{6} & -\frac{\omega}{6}-\frac{2}{3} & -\frac{\omega^{2}}{6}-\frac{2}{3} & \frac{1}{6} & \frac{\omega}{6} & \frac{\omega^{2}}{6}\\
\frac{\sqrt{7}}{6} & -\frac{\omega^{2}}{3} & -\frac{\omega}{3} & \frac{1}{6} & -\frac{\omega}{3} & \frac{2\omega^{2}}{3}\\
\frac{\sqrt{7}}{6} & -\frac{\omega^{2}}{3} & -\frac{\omega}{3} & \frac{1}{6} & \frac{2\omega}{3} & -\frac{\omega^{2}}{3}
\end{array}\right).
\end{multline*}

The representation matrices of $\rep{8}$ can be calculated with the
tensor product $\rep{3}\otimes\rep{\bar{3}}\to\rep{8}$ and the embedding
relations 
\[
\rep{8}=\rep{2}\oplus\rep{3_{1}}\oplus\rep{3_{2}},\quad\rep{3}=\rep{3_{2}},\quad\rep{\bar{3}}=\rep{3_{2}}.
\]
 Since the complex conjugate of $\rep{3}\otimes\rep{\bar{3}}\to\rep{8}$
is itself, the overall phase of the CG coefficients is fixed. By a
little algebra, we find that
\[
\covp{3}{\bar{3}}{8}=\begin{pmatrix}\covs{3_{2}}{3_{2}}{2}\\
\covs{3_{2}}{3_{2}}{3_{1}}\\
i\covs{3_{2}}{3_{2}}{3_{2}}
\end{pmatrix}.
\]
The generator $A^{[\rep{8}]}$ has a simple form 
\begin{align*}
A^{[\rep{8}]} & =\left(\begin{array}{cccccccc}
1 & 0 & 0 & 0 & 0 & 0 & 0 & 0\\
0 & 1 & 0 & 0 & 0 & 0 & 0 & 0\\
0 & 0 & -\frac{1}{3} & \frac{2\omega}{3} & \frac{2\omega^{2}}{3} & 0 & 0 & 0\\
0 & 0 & \frac{2\omega^{2}}{3} & -\frac{1}{3} & \frac{2\omega}{3} & 0 & 0 & 0\\
0 & 0 & \frac{2\omega}{3} & \frac{2\omega^{2}}{3} & -\frac{1}{3} & 0 & 0 & 0\\
0 & 0 & 0 & 0 & 0 & -\frac{1}{3} & \frac{2\omega}{3} & \frac{2\omega^{2}}{3}\\
0 & 0 & 0 & 0 & 0 & \frac{2\omega^{2}}{3} & -\frac{1}{3} & \frac{2\omega}{3}\\
0 & 0 & 0 & 0 & 0 & \frac{2\omega}{3} & \frac{2\omega^{2}}{3} & -\frac{1}{3}
\end{array}\right).\\
\end{align*}
The generator $B^{[\rep{8}]}$ is given by \begin{widetext}
\[
B^{[\rep{8}]}=\left(\begin{array}{cccccccc}
-\frac{\omega}{4} & -\frac{\omega^{2}}{4} & -\frac{i\omega^{2}}{\sqrt{6}} & \frac{i\left(3\omega^{2}-1\right)}{4\sqrt{6}} & \frac{2i-\sqrt{3}\omega^{2}}{4\sqrt{6}} & 0 & -\frac{1}{4}\sqrt{\frac{7}{2}}\omega & \frac{1}{4}\sqrt{\frac{7}{2}}\omega^{2}\\
-\frac{\omega}{4} & -\frac{\omega^{2}}{4} & \frac{i\omega}{\sqrt{6}} & -\frac{2i+\sqrt{3}\omega}{4\sqrt{6}} & \frac{i\left(1-3\omega\right)}{4\sqrt{6}} & 0 & \frac{1}{4}\sqrt{\frac{7}{2}}\omega & -\frac{1}{4}\sqrt{\frac{7}{2}}\omega^{2}\\
-\frac{i}{\sqrt{6}} & \frac{i}{\sqrt{6}} & \frac{1}{6} & \frac{\omega}{6} & \frac{\omega^{2}}{6} & \frac{\sqrt{7}}{6} & \frac{\sqrt{7}\omega}{6} & \frac{\sqrt{7}\omega^{2}}{6}\\
\frac{i\left(3\omega^{2}-1\right)}{4\sqrt{6}} & \frac{i\left(1-3\omega\right)}{4\sqrt{6}} & \frac{1}{6} & \frac{13\omega}{24} & -\frac{5\omega^{2}}{24} & \frac{\sqrt{7}\omega}{6} & \frac{i\sqrt{3}\omega^{2}+1}{24/\sqrt{7}} & \frac{3\omega+1}{24/\sqrt{7}}\\
\frac{i\left(3\omega^{2}-1\right)}{4\sqrt{6}} & \frac{i\left(1-3\omega\right)}{4\sqrt{6}} & \frac{1}{6} & -\frac{5\omega}{24} & \frac{13\omega^{2}}{24} & \frac{\sqrt{7}\omega^{2}}{6} & \frac{3\omega^{2}+1}{24/\sqrt{7}} & \frac{1-i\sqrt{3}\omega}{24/\sqrt{7}}\\
0 & 0 & -\frac{\sqrt{7}}{6} & -\frac{\sqrt{7}}{6}\omega^{2} & -\frac{\sqrt{7}\omega}{6} & -\frac{1}{2} & -\frac{i\omega^{2}}{2\sqrt{3}} & \frac{i\omega}{2\sqrt{3}}\\
\frac{1}{4}\sqrt{\frac{7}{2}}\omega & -\frac{1}{4}\sqrt{\frac{7}{2}}\omega^{2} & -\frac{\sqrt{7}}{6} & -\frac{i\sqrt{3}\omega^{2}+1}{24/\sqrt{7}} & \frac{i\sqrt{3}\omega-1}{24/\sqrt{7}} & -\frac{i\omega}{2\sqrt{3}} & \frac{3\sqrt{3}\omega^{2}+i}{8\sqrt{3}} & -\frac{\omega^{2}}{8}\\
-\frac{1}{4}\sqrt{\frac{7}{2}}\omega & \frac{1}{4}\sqrt{\frac{7}{2}}\omega^{2} & -\frac{\sqrt{7}}{6} & -\frac{i\sqrt{3}\omega^{2}+1}{24/\sqrt{7}} & \frac{i\sqrt{3}\omega-1}{24/\sqrt{7}} & \frac{i\omega^{2}}{2\sqrt{3}} & -\frac{\omega}{8} & \frac{3\sqrt{3}\omega-i}{8\sqrt{3}}
\end{array}\right).
\]

Finally, let us consider the $\rep{7}$ irrep. We can obtain the $\rep{7}$
irrep from $\rep{3}\otimes\rep{6}\to\rep{7}$. However, unlike the
$\rep{6}$ case, in which the contraction $\covp{3}{3}{6}$ is determined
up to phases solely by embedding relations and subgroup tensor products,
the absolute values of the embedding factors of $\covp{3}{6}{7}$
cannot be determined now. Instead, we have to first determine the
embedding factors of $\covp{3}{6}{\bar{3}}$ and $\covp{3}{6}{8}$,
and then obtain those of $\covp{3}{6}{7}$ by orthogonality. By a
straightforward calculation, we obtain
\[
P_{\rep{\bar{3}}\to\rep{3_{2}}}\covp{3}{6}{\bar{3}}=\frac{1}{\sqrt{6}}\covs{3_{2}}{1_{0}}{3_{2}}-\frac{i\bar{b}_{7}^{2}}{\sqrt{3}}\covs{3_{2}}{2}{3_{2}}+\frac{i\bar{b}_{7}}{\sqrt{2}}\covs{3_{2}}{3_{1}}{3_{2}},
\]
\[
\covp{3}{6}{8}=\begin{pmatrix}\covs{3_{2}}{3_{2}}{2}\\
-\sqrt{\frac{2}{3}}b_{7}\covs{3_{2}}{2}{3_{1}}+\frac{1}{\sqrt{3}}\covs{3_{2}}{3_{1}}{3_{1}}\\
\frac{2\bar{b}_{7}}{7}\covs{3_{2}}{1_{0}}{3_{2}}-\frac{\sqrt{2}i}{3}b_{7}\covs{3_{2}}{2}{3_{2}}+\frac{i}{\sqrt{3}}\covs{3_{2}}{3_{1}}{3_{2}}
\end{pmatrix},
\]
 up to overall phases. The orthogonality of embedding factors determines
$\covp{3}{6}{7}$ to be 
\[
\covp{3}{6}{7}=\begin{pmatrix}e^{i\theta_{1}}\covs{3_{2}}{3_{1}}{1_{1}}\\
e^{i\theta_{2}}\left(\frac{1}{\sqrt{3}}\covs{3_{2}}{2}{3_{1}}+\sqrt{\frac{2}{3}}\bar{b}_{7}\covs{3_{2}}{3_{1}}{3_{1}}\right)\\
e^{i\theta_{3}}\left(\frac{i}{3}\sqrt{\frac{7}{2}}b_{7}\covs{3_{2}}{1_{0}}{3_{2}}-\frac{2i}{3}\bar{b}_{7}^{3}\covs{3_{2}}{2}{3_{2}}-\frac{i}{\sqrt{6}}\covs{3_{2}}{3_{1}}{3_{2}}\right)
\end{pmatrix},
\]
 with $\theta_{i}$ being unknown phases. Applying a constraint in
analogy to (\ref{eq:S4_Gamma6}) with $\cpx{\rep{7}}=\left(1\right)\oplus\cpx{\rep{3_{1}}}\oplus\cpx{\rep{3_{2}}}$,
we arrive at $e^{i\left(\theta_{2}-\theta_{1}\right)}=\bar{b}_{7}^{3}$
and $e^{i\left(\theta_{3}-\theta_{1}\right)}=-i\bar{b}_{7}^{2}$,
which then give rise to 
\[
A^{[\rep{7}]}=\left(\begin{array}{ccccccc}
1 & 0 & 0 & 0 & 0 & 0 & 0\\
0 & -\frac{1}{3} & \frac{2\omega}{3} & \frac{2\omega^{2}}{3} & 0 & 0 & 0\\
0 & \frac{2\omega^{2}}{3} & -\frac{1}{3} & \frac{2\omega}{3} & 0 & 0 & 0\\
0 & \frac{2\omega}{3} & \frac{2\omega^{2}}{3} & -\frac{1}{3} & 0 & 0 & 0\\
0 & 0 & 0 & 0 & -\frac{1}{3} & \frac{2\omega}{3} & \frac{2\omega^{2}}{3}\\
0 & 0 & 0 & 0 & \frac{2\omega^{2}}{3} & -\frac{1}{3} & \frac{2\omega}{3}\\
0 & 0 & 0 & 0 & \frac{2\omega}{3} & \frac{2\omega^{2}}{3} & -\frac{1}{3}
\end{array}\right),
\]
\[
B^{[\rep{7}]}=\left(\begin{array}{ccccccc}
0 & -\frac{i\bar{b}_{7}^{3}}{\sqrt{3}} & \frac{i\omega\bar{b}_{7}^{3}}{2\sqrt{3}} & \frac{i\omega^{2}\bar{b}_{7}^{3}}{2\sqrt{3}} & 0 & -\frac{i\omega\bar{b}_{7}^{2}}{2} & \frac{i\omega^{2}\bar{b}_{7}^{2}}{2}\\
-\frac{ib_{7}^{3}}{\sqrt{3}} & -\frac{1}{3} & -\frac{\omega}{3} & -\frac{\omega^{2}}{3} & \frac{ib_{7}}{3} & \frac{i\omega b_{7}}{3} & \frac{i\omega^{2}b_{7}}{3}\\
\frac{ib_{7}^{3}}{2\sqrt{3}} & -\frac{1}{3} & -\frac{\omega}{3} & -\frac{\omega^{2}}{3} & \frac{ib_{7}\omega^{2}}{3}+\frac{b_{7}}{2\sqrt{3}} & \frac{i\omega^{2}b_{7}}{3} & \frac{ib_{7}}{3}\\
\frac{ib_{7}^{3}}{2\sqrt{3}} & -\frac{1}{3} & -\frac{\omega}{3} & -\frac{\omega^{2}}{3} & \frac{i\omega b_{7}}{3}-\frac{b_{7}}{2\sqrt{3}} & \frac{ib_{7}}{3} & \frac{i\omega b_{7}}{3}\\
0 & \frac{i\bar{b}_{7}}{3} & \frac{i\bar{b}_{7}}{3}+\frac{\omega\bar{b}_{7}}{2\sqrt{3}} & \frac{i\bar{b}_{7}}{3}-\frac{\omega^{2}\bar{b}_{7}}{2\sqrt{3}} & 0 & -\frac{i\omega}{2\sqrt{3}} & \frac{i\omega^{2}}{2\sqrt{3}}\\
-\frac{ib_{7}^{2}}{2} & \frac{i\bar{b}_{7}}{3} & \frac{i\omega^{2}\bar{b}_{7}}{3} & \frac{i\omega\bar{b}_{7}}{3} & -\frac{i}{2\sqrt{3}} & -\frac{i\omega}{\sqrt{3}} & 0\\
\frac{ib_{7}^{2}}{2} & \frac{i\bar{b}_{7}}{3} & \frac{i\omega^{2}\bar{b}_{7}}{3} & \frac{i\omega\bar{b}_{7}}{3} & \frac{i}{2\sqrt{3}} & 0 & \frac{i\omega^{2}}{\sqrt{3}}
\end{array}\right).
\]
\end{widetext}

\subsection{In the $\subgb$ basis}

We now find the representation matrices of $\group$ in the $\subgb$
basis. In fact, \olcite{Luhn:2007yr} has built representation matrices
of $\group$ with those of $\subgb$. Our calculation is similar to
the one of \olcite{Luhn:2007yr} but with slightly different bases.
We will take care of the phase ambiguities of the matrices which is
not covered in \olcite{Luhn:2007yr}.

Since all the $\subgb$ irreps are complex except the trivial singlet,
according to the same arguments leading to eq. (\ref{eq:S4_SMatrix}),
the representation matrices of $\group$ irreps have phase ambiguities.
We can obtain new representation matrices by performing similarity
transformations with diagonal pure phase matrices,
\[
O^{\left[\rep{R}\right]}\to\mathcal{U}^{\left[\rep{R}\right]}O^{\left[\rep{R}\right]}\left(\mathcal{U}^{\left[\rep{R}\right]}\right)^{\dagger},\quad O=A,\,B
\]
 where 
\begin{align*}
\mathcal{U}^{\left[\rep{6}\right]} & =\mathrm{diag}\left(e^{i\theta_{6}}\mathbf{I}_{3},e^{-i\theta_{6}}\mathbf{I}_{3}\right),\\
\mathcal{U}^{\left[\rep{7}\right]} & =\mathrm{diag}\left(\pm1,e^{i\theta_{7}}\mathbf{I}_{3},e^{-i\theta_{7}}\mathbf{I}_{3}\right),\\
\mathcal{U}^{\left[\rep{8}\right]} & =\mathrm{diag}\left(e^{i\theta_{81}},e^{-i\theta_{81}},e^{i\theta_{83}}\mathbf{I}_{3},e^{-i\theta_{83}}\mathbf{I}_{3}\right).
\end{align*}
Under these transformations, the embedding factors change according
to eq. (\ref{eq:SBCG_Tran_M}) with $\phi^{\left(ab\to c\right)}=0$
and $\theta_{a,b,c}^{\left(X,Y,Z\right)}$ being the angles in the
$\mathcal{U}$ matrices. In the following, we will set all the $\mathcal{U}$
matrices to be identity matrix for simplicity. 

The triplet representation is given by eq. (\ref{eq:Gen_AB3_old}),
in which the $\subgb$ generator $c=AB$ is a diagonal matrix
\[
(AB)^{[\rep{3}]}=\mathrm{diag}\left(\eta,\eta^{2},\eta^{4}\right)\equiv\mathcal{P}^{[3]}.
\]
Since $\rep{3}$ and $\rep{\bar{3}}$ are the only two non-singlet
irreps of $\subgb$, and in all singlets irreps, $AB$ are diagonal
matrices in all $\group$ irreps in the $\subgb$ basis
\begin{align*}
\left(AB\right)^{[\rep{6}]} & =\mathrm{diag}\left(\mathcal{P}^{\left[3\right]},\bar{\mathcal{P}}^{[3]}\right),\\
\left(AB\right)^{[\rep{7}]} & =\mathrm{diag}\left(1,\mathcal{P}^{[3]},\bar{\mathcal{P}}^{[3]}\right),\\
\left(AB\right)^{[\rep{8}]} & =\mathrm{diag}\left(\mathbf{I}_{2},\mathcal{P}^{[3]},\bar{\mathcal{P}}^{[3]}\right),
\end{align*}
 where $\bar{\mathcal{P}}^{[3]}=\left(\mathcal{P}^{[3]}\right)^{\dagger}$.
Given since $A^{2}=e$, the matrices of $B$ can be expressed as 
\[
B^{[\rep{R}]}=A^{[\rep{R}]}\left(AB\right)^{[\rep{R}]},\quad\rep{R}=\rep{3},\,\rep{\bar{3}},\,\rep{6},\,\rep{7},\,\rep{8}.
\]
 We therefore only need to find the matrices of $A$ in the following.

The $\rep{6}$ irrep can be extract from the tensor product $\rep{3}\otimes\rep{3}\to\rep{6}$.
Repeating the calculation for the $\subga$ case, up to a similarity
transformation of $\mathcal{U}^{[\rep{6}]}$, we obtain the matrix
of $A$ \begin{subequations}\label{eq:Gen_T7_A6}

\begin{equation}
A^{[\rep{6}]}=\frac{2}{7}\begin{pmatrix}\mathcal{R}_{1} & \sqrt{2}\mathcal{R}_{2}\\
\sqrt{2}\mathcal{R}_{2} & \mathcal{R}_{1}
\end{pmatrix},
\end{equation}
 where $\mathcal{R}_{1,2}$ are real $3\times3$ symmetric matrices
\begin{align}
\mathcal{R}_{1} & =\begin{pmatrix}1-c_{2} & 1-c_{1} & 1-c_{3}\\
1-c_{1} & 1-c_{3} & 1-c_{2}\\
1-c_{3} & 1-c_{2} & 1-c_{1}
\end{pmatrix},\\
\mathcal{R}_{2} & =\begin{pmatrix}c_{1}-c_{3} & c_{3}-c_{2} & c_{2}-c_{1}\\
c_{3}-c_{2} & c_{2}-c_{1} & c_{1}-c_{3}\\
c_{2}-c_{1} & c_{1}-c_{3} & c_{3}-c_{2}
\end{pmatrix}
\end{align}
 \end{subequations} with $c_{k}=\cos\frac{2k\pi}{7}$.

Similarly, the $\rep{8}$ irrep can be obtained from the tensor product
$\rep{3}\otimes\rep{\bar{3}}\to\rep{8}$,\begin{subequations}\label{eq:Gen_T7_a8}
\begin{equation}
A^{[\rep{8}]}=\frac{2}{7}\begin{pmatrix}\mathcal{C}_{1} & \mathcal{C}_{2} & \mathcal{C}_{2}\\
\mathcal{C}_{2}^{\dagger} & \mathcal{R}_{3} & \mathcal{R}_{1}\\
\mathcal{C}_{2}^{\dagger} & \mathcal{R}_{1} & \mathcal{R}_{3}
\end{pmatrix},
\end{equation}
 where $\mathcal{C}_{1}$ and $\mathcal{C}_{2}$ are complex matrices
\begin{align}
\mathcal{C}_{1} & =\frac{1}{2}\begin{pmatrix}0 & \omega\\
\omega^{2} & 0
\end{pmatrix}+i\sqrt{3}\begin{pmatrix}0 & c_{2}-c_{3}\omega^{2}\\
-c_{2}+c_{3}\omega & 0
\end{pmatrix},\\
\mathcal{C}_{2} & =\frac{i}{2}\begin{pmatrix}-\omega^{2} & -1 & -\omega\\
\omega & 1 & \omega^{2}
\end{pmatrix}\nonumber \\
 & +\sqrt{3}\begin{pmatrix}c_{2}\omega-c_{3} & c_{2}\omega^{2}-c_{3}\omega & c_{2}-\omega^{2}c_{3}\\
c_{2}\omega^{2}-c_{3} & c_{2}\omega-c_{3}\omega^{2} & \left(c_{2}-\omega c_{3}\right)
\end{pmatrix},
\end{align}
and $\mathcal{R}_{3}$ is a real $3\times3$ symmetric matrix 
\begin{equation}
\mathcal{R}_{3}=\begin{pmatrix}c_{2}-c_{1} & c_{1}-c_{3} & c_{3}-c_{2}\\
c_{1}-c_{3} & c_{3}-c_{2} & c_{2}-c_{1}\\
c_{3}-c_{2} & c_{2}-c_{1} & c_{1}-c_{3}
\end{pmatrix}.
\end{equation}
\end{subequations}Note that $\mathcal{R}_{3}$ can be obtained by
applying the cyclic permutation $c_{1}\to c_{2}\to c_{3}\to c_{1}$
on $\mathcal{R}_{2}$. 

The $\rep{7}$ irrep can be extracted from the tensor product $\rep{3}\otimes\rep{6}\to\rep{\bar{3}}\oplus\rep{8}\oplus\rep{7}$.
We first find the embedding factors of $\rep{3}\otimes\rep{6}\to\rep{\bar{3}}$
and $\rep{3}\otimes\rep{6}\to\rep{8}$, then embedding factors of
$\rep{3}\otimes\rep{6}\to\rep{7}$ are fixed up to phases. The generator
$A^{[\rep{7}]}$ is \begin{widetext}
\[
A^{[\rep{7}]}=\frac{2}{7}\begin{pmatrix}-\frac{1}{2} & \sqrt{2} & \sqrt{2} & \sqrt{2} & \sqrt{2} & \sqrt{2} & \sqrt{2}\\
\sqrt{2} & 2c_{1}+c_{2} & c_{1}+2c_{3} & 2c_{2}+c_{3} & 2c_{2}+1 & 2c_{1}+1 & 2c_{3}+1\\
\sqrt{2} & c_{1}+2c_{3} & 2c_{2}+c_{3} & 2c_{1}+c_{2} & 2c_{1}+1 & 2c_{3}+1 & 2c_{2}+1\\
\sqrt{2} & 2c_{2}+c_{3} & 2c_{1}+c_{2} & c_{1}+2c_{3} & 2c_{3}+1 & 2c_{2}+1 & 2c_{1}+1\\
\sqrt{2} & 2c_{2}+1 & 2c_{1}+1 & 2c_{3}+1 & 2c_{1}+c_{2} & c_{1}+2c_{3} & 2c_{2}+c_{3}\\
\sqrt{2} & 2c_{1}+1 & 2c_{3}+1 & 2c_{2}+1 & c_{1}+2c_{3} & 2c_{2}+c_{3} & 2c_{1}+c_{2}\\
\sqrt{2} & 2c_{3}+1 & 2c_{2}+1 & 2c_{1}+1 & 2c_{2}+c_{3} & 2c_{1}+c_{2} & c_{1}+2c_{3}
\end{pmatrix}.
\]
\end{widetext}

To conclude this section, let us discuss the relation between our
bases with those of \olcite{Luhn:2007yr}. The triplet and anti-triplet
are the same. For the sextet, octet and septet irreps, the similarity
transformations between our bases and those of \olcite{Luhn:2007yr}
are
\[
\tilde{O}^{[\rep{R}]}=U^{[\rep{R}]}O^{[\rep{R}]}U^{[\rep{R}]^{\dagger}},\quad\rep{R}=\rep{6},\,\rep{7},\,\rep{8},\quad O=A,\,B,
\]
 where $\tilde{O}$ denotes the generators in the basis of \olcite{Luhn:2007yr}
and the $U$ matrices are
\[
U^{[\rep{6}]}=\begin{pmatrix}0 & 1 & 0 & 0 & 0 & 0\\
0 & 0 & 1 & 0 & 0 & 0\\
1 & 0 & 0 & 0 & 0 & 0\\
0 & 0 & 0 & 0 & 0 & 1\\
0 & 0 & 0 & 0 & 1 & 0\\
0 & 0 & 0 & 1 & 0 & 0
\end{pmatrix},
\]
\[
U^{[\rep{7}]}=\left(\begin{array}{ccccccc}
1 & 0 & 0 & 0 & 0 & 0 & 0\\
0 & 1 & 0 & 0 & 0 & 0 & 0\\
0 & 0 & 1 & 0 & 0 & 0 & 0\\
0 & 0 & 0 & 0 & 0 & 0 & 1\\
0 & 0 & 0 & 1 & 0 & 0 & 0\\
0 & 0 & 0 & 0 & 0 & 1 & 0\\
0 & 0 & 0 & 0 & 1 & 0 & 0
\end{array}\right),
\]
\[
U^{[\rep{8}]}=\frac{1}{\sqrt{2}}\begin{pmatrix}0 & 0 & 1 & 0 & 0 & 1 & 0 & 0\\
0 & 0 & i & 0 & 0 & -i & 0 & 0\\
\frac{1-\omega}{\sqrt{3}} & \frac{1-\omega^{2}}{\sqrt{3}} & 0 & 0 & 0 & 0 & 0 & 0\\
0 & 0 & 0 & 0 & 1 & 0 & 0 & 1\\
0 & 0 & 0 & 0 & -i & 0 & 0 & i\\
0 & 0 & 0 & 1 & 0 & 0 & 1 & 0\\
0 & 0 & 0 & i & 0 & 0 & -i & 0\\
-\omega^{2} & -\omega & 0 & 0 & 0 & 0 & 0 & 0
\end{pmatrix}.
\]

\section{\label{sec:Calc_PSL27_CGC}$\group$ embedding factors}

With the generator matrices, we can now find the embedding factors
of $\group$. To demonstrate the procedure, we will take the tensor
product $\rep{6}\otimes\rep{6}\to\rep{6}_{s}^{\left(1\right)}\oplus\rep{6}_{s}^{\left(2\right)}$
as an example. Here the superscripts $\left(1\right)$ and $\left(2\right)$
indicates the multiplicity of irreps in the tensor product and the
subscript $s$ ($a$) indicates the symmetric (antisymmetric) part. 

We first consider the embedding factors for $\group$ and the $\subga$
subgroup. Using the embedding relation $\rep{6}=\rep{1_{0}}\oplus\rep{2}\oplus\rep{3_{1}}$,
we can write down the most general expression of the $\covp{6}{6}{6_{s}}$
built out of $S_{4}$ contractions,\begin{subequations}\label{eq:Ex_SixSix2Six}
\begin{align}
\rep{1_{0}} & =\mathcal{E}_{1,11}\covs{1_{0}}{1_{0}}{1_{0}}+\mathcal{E}_{1,22}\covs{2}{2}{1_{0}}\nonumber \\
 & +\mathcal{E}_{1,33}\covs{3_{1}}{3_{1}}{1_{0}},\\
\rep{2} & =\frac{\mathcal{E}_{2,12}}{\sqrt{2}}\left(\covs{1_{0}}{2}{2}+\covs{2}{1_{0}}{2}\right)\nonumber \\
 & +\mathcal{E}_{2,22}\covs{2}{2}{2}+\mathcal{E}_{2,33}\covs{3_{1}}{3_{1}}{2},\\
\rep{3_{1}} & =\frac{\mathcal{E}_{3,13}}{\sqrt{2}}\left(\covs{1_{0}}{3_{1}}{3_{1}}+\covs{3_{1}}{1_{0}}{3_{1}}\right)\nonumber \\
 & +\frac{\mathcal{E}_{3,23}}{\sqrt{2}}\left(\covs{2}{3_{1}}{3_{1}}+\covs{3_{1}}{2}{3_{1}}\right)\nonumber \\
 & +\mathcal{E}_{3,33}\covs{3_{1}}{3_{1}}{3_{1}},
\end{align}
\end{subequations} where $\mathcal{E}_{c,ab}$ are embedding factors
to be determined. In the above expressions, the notation $\covs{x}{y}{z}$
denotes the contraction of two $\subga$ irreps, where $\rep{x}$
components is embedded in the first $\rep{6}$ and $\rep{y}$ embedded
in the second $\rep{6}$. 

Since $\rep{6}\otimes\rep{6}\to\rep{6}_{s}$ has multiplicity $2$
and there are nine unknowns, we can setup seven homogeneous linear
equations by feeding values into $X$, $Y$, and $g$ of eq. (\ref{eq:SBCG_Def2}).
To do this, we choose $g$ to be the generator $B$ and $X$, $Y$
to be constant vectors, whose only one nonzero components are specified
by integer lists $\left\{ i_{p}\right\} $ and $\left\{ j_{p}\right\} $,
see eqs. (\ref{eq:Proc_XY}) and (\ref{eq:Proc_VW}). It turns out
that choosing $\left\{ i_{p}\right\} =\left\{ 1,3\right\} $ and $\left\{ j_{p}\right\} =\left\{ 1,4\right\} $
for $X$ and $Y$ is enough to generate the equations. By solving
these equations, we express all coefficients in terms of $\mathcal{E}_{1,11}$
and $\mathcal{M}_{1,22}$,\begin{subequations}\label{eq:Ex_sol_M}
\begin{align}
\mathcal{E}_{1,33} & =-\frac{\mathcal{E}_{1,11}}{\sqrt{3}}-\frac{2\mathcal{E}_{1,22}}{\sqrt{3}},\quad\mathcal{E}_{2,12}=\mathcal{E}_{1,22},\\
\mathcal{E}_{2,22} & =-\frac{2\mathcal{E}_{1,11}}{\sqrt{7}}-\frac{3\mathcal{E}_{1,22}}{\sqrt{14}},\\
\mathcal{E}_{2,33} & =\frac{\mathcal{E}_{1,22}}{\sqrt{42}}-\frac{4\mathcal{E}_{1,11}}{\sqrt{21}},\\
\mathcal{E}_{3,13} & =-\frac{\sqrt{2}}{3}\mathcal{E}_{1,11}-\frac{2\mathcal{E}_{1,22}}{3},\\
\mathcal{E}_{3,23} & =\frac{1}{3}\sqrt{\frac{2}{7}}\mathcal{E}_{1,22}-\frac{8\mathcal{E}_{1,11}}{3\sqrt{7}},\\
\mathcal{E}_{3,33} & =\sqrt{\frac{2}{21}}\mathcal{E}_{1,11}+\frac{5\mathcal{E}_{1,22}}{\sqrt{21}}.
\end{align}
\end{subequations}Since all the irreps of $\subga$ are real, the
constraints $\mathcal{E}_{c,ab}=\mathcal{E}_{\bar{c},\bar{a}\bar{b}}$
are satisfied if $\mathcal{E}_{1,11}$ and $\mathcal{E}_{1,22}$ are
real. To build two orthonormal sets of embedding factors, we consider
the embedding factors of the $\rep{1_{0}}$ component. We need to
find two orthonormal vectors $\left\{ \mathcal{E}_{1,11},\mathcal{E}_{1,22},\mathcal{E}_{1,33}\right\} $
corresponding to two independent solutions. By the Gram\textendash Schmidt
process, we find the solutions $\left(\mathcal{E}_{1,11},\mathcal{E}_{1,22}\right)=\left(\frac{\sqrt{3}}{2},0\right)$
and $\left(\mathcal{E}_{1,11},\mathcal{E}_{1,22}\right)=\left(-\frac{1}{2\sqrt{3}},\sqrt{\frac{2}{3}}\right)$
can generate such two vectors. Substituting the two solutions into
(\ref{eq:Ex_sol_M}) and (\ref{eq:Ex_SixSix2Six}), we obtain the
embedding factors of $\rep{6}\otimes\rep{6}\to\rep{6}_{s}^{\left(1\right)}\oplus\rep{6}_{s}^{\left(2\right)}$
shown in Appendix \ref{sec:App_CGC_PSL27_S4}.

The calculation of embedding factors in $\subgb$ basis is similar.
According to the decomposition $\rep{6}=\rep{3}\oplus\rep{\bar{3}}$
and tensor products of $\subgb$, $\covp{6}{6}{6_{s}}$ can be written
as
\begin{align*}
\rep{3} & =\mathcal{E}_{3,33}\covt{3}{3}{3}+\mathcal{E}_{3,\bar{3}\bar{3}}\covt{\bar{3}}{\bar{3}}{3_{s}}\\
 & +\frac{\mathcal{E}_{3,3\bar{3}}}{\sqrt{2}}\left(\covt{3}{\bar{3}}{3}+\covt{\bar{3}}{3}{3}\right),\\
\rep{\bar{3}} & =\mathcal{E}_{\bar{3},33}\covt{3}{3}{\bar{3}_{s}}+\mathcal{E}_{\bar{3},\bar{3}\bar{3}}\covt{\bar{3}_{s}}{\bar{3}}{\bar{3}}\\
 & +\frac{\mathcal{E}_{\bar{3},3\bar{3}}}{\sqrt{2}}\left(\covt{3}{\bar{3}}{\bar{3}}+\covt{\bar{3}}{3}{\bar{3}}\right),
\end{align*}
 where the curly brackets denote $\subgb$ contractions. The homogeneous
linear equations can be generated in the same way as $\subga$ case
with input $\left\{ i_{p}\right\} =\left\{ 1,3\right\} $ and $\left\{ j_{p}\right\} =\{2,4\}$
and generator $A$. Solving the linear equations yields
\begin{align*}
\mathcal{E}_{3,\bar{3}\bar{3}} & =-\sqrt{2}\mathcal{E}_{3,33}-\frac{\mathcal{E}_{3,\bar{3}\bar{3}}}{\sqrt{2}},\\
\mathcal{E}_{\bar{3},33} & =-\mathcal{E}_{3,33}-\mathcal{E}_{3,\bar{3}\bar{3}},\\
\mathcal{E}_{\bar{3},\bar{3}\bar{3}} & =\sqrt{2}\mathcal{E}_{3,33},\quad\mathcal{E}_{\bar{3},\bar{3}\bar{3}}=\frac{1}{\sqrt{2}}\mathcal{E}_{3,\bar{3}\bar{3}}.
\end{align*}
Now the constraints $\mathcal{E}_{c,ab}=\mathcal{E}_{\bar{c},\bar{a}\bar{b}}$
can be solved by expressing $\mathcal{E}_{3,33}$ and $\mathcal{E}_{3,\bar{3}\bar{3}}$
in terms of real and imaginary parts. It turns out that the solution
is $\mathcal{E}_{3,\bar{3}\bar{3}}=\mathcal{E}_{3,33}^{*}$. The Gram\textendash Schmidt
process then give rise to two sets of independent embedding factors
shown in Appendix \ref{sec:App_CGC_PSL27_T7}.

\section{Conclusion}

We have introduced the embedding factors, which express CG coefficients
of a discrete group in terms of CG coefficients of its subgroup. embedding
factors are fixed up to phase ambiguities and invariant under basis
transformations of irreps of the group and the subgroup. Their phase
ambiguities are reduced by a phase convention defined as (\ref{eq:CG_GammaCon}).
Particularly, the phase ambiguities are reduced to sign ambiguities
in the Case I and II of Section \ref{sec:RedPhaseAmbiguity}. We also
obtained complete sets of embedding factors of the group $\group$
in the bases of its subgroup $\subga$ and $\subgb$.

The work can be extended in several directions. One direction is to
apply the method to other discrete groups. To give a few examples,
the group $\Sigma\left(360\phi\right)$ of order 1080 has subgroups
$\Sigma\left(60\right)\simeq A_{5}$ of order $60$ and $\Sigma\left(36\phi\right)$
of order 108; the group $\Sigma\left(216\phi\right)$ of order 648
has the subgroup tree $\Sigma\left(216\phi\right)\supset\Sigma\left(72\phi\right)\supset\Sigma\left(36\phi\right)$,
where $\Sigma\left(72\phi\right)$ is of order 216 and $\Sigma\left(36\phi\right)$
of order 108. More subgroup examples can be found in \olcite{Merle:2011vy}.

Another possible direction is to simplify the procedure to find embedding
factors. As we have shown, embedding factors are basis independent
and fixed up to possible phases. It seems unnecessary to find representation
matrices of the group and subgroups in order to calculate these coefficients,
at least for their absolute values. A natural guess is that embedding
factors have something to do with the coset structure of the group.
The procedure to calculate these coefficients would be much simpler
if there exists a method to determine them without knowing the representation
matrices and CG coefficients of subgroups.
\begin{acknowledgments}
We thank Pierre Ramond for helpful discussions and comments on this
work. We are also very grateful to Jue Zhang for helpful comments
and careful reading of this manuscript. This research was supported
in part by the Department of Energy under Grant No. DE-SC0010296.
\end{acknowledgments}

\appendix
\numberwithin{equation}{section}
\counterwithin{table}{section}

\section{\label{sec:Group-Theory}Group Theory}

\subsection{\label{subsec:A4Group} The $A_{4}$ group}

The $A_{4}$ group is generated by two elements $s$ and $t$ which
fulfill 
\begin{equation}
s^{2}=t^{3}=\left(st\right)^{3}=e.\label{eq:A4_presentation}
\end{equation}
The group has four irreps: one trivial singlet representation $\rep{1}$,
two nontrivial singlets $\rep{1^{\prime}}$ and $\rep{\bar{1}^{\prime}}$
that are complex conjugate to each other, and one triplet $\rep{3}$.
The character table of $A_{4}$ is shown as table \ref{tab:A4_character}.
The $\rep{3}$ representation of the group can be written as 
\begin{equation}
s=\frac{1}{3}\begin{pmatrix}-1 & 2 & 2\\
2 & -1 & 2\\
2 & 2 & -1
\end{pmatrix},\quad t=\begin{pmatrix}1\\
 & \omega\\
 &  & \omega^{2}
\end{pmatrix},\quad\omega=e^{2\pi i/3}.\label{eq:A4-generators}
\end{equation}

\begin{center}
\begin{table}
\begin{centering}
\begin{tabular}{|c|c|c|c|c|}
\hline 
$A_{4}$ & $C_{1}$ & $4C_{2}^{\left[3\right]}$ & $3C_{4}^{\left[2\right]}$ & $4C_{3}^{\left[3\right]}$\tabularnewline
\hline 
\hline 
$\rep{1}$ & $1$ & $1$ & $1$ & $1$\tabularnewline
\hline 
$\rep{1^{\prime}}$ & $1$ & $\omega$ & $1$ & $\omega^{2}$\tabularnewline
\hline 
$\rep{\bar{1}^{\prime}}$ & $1$ & $\omega^{2}$ & $1$ & $\omega$\tabularnewline
\hline 
$\rep{3}$ & $3$ & $0$ & $-1$ & $0$\tabularnewline
\hline 
\end{tabular}
\par\end{centering}
\caption{\label{tab:A4_character}Character table of $A_{4}$, where $\omega=\exp\left(2\pi i/3\right)$.}
\end{table}
\par\end{center}

Let $\rep{x}$, $\rep{y}$ be two triplets, $\rep{s^{\prime}}$ and
$\rep{\bar{s}^{\prime}}$ be $\rep{1^{\prime}}$ and $\rep{\bar{1}^{\prime}}$
respectively. The the CG coefficients of $A_{4}$ in (\ref{eq:A4-generators})
representation are 
\begin{align*}
\left(\rep{x}\otimes\rep{y}\right)_{\rep{1}} & =\frac{1}{\sqrt{3}}\left(x_{1}y_{1}+x_{2}y_{3}+x_{3}y_{2}\right),\\
\left(\rep{x}\otimes\rep{y}\right)_{\rep{1^{\prime}}} & =\frac{1}{\sqrt{3}}\left(x_{3}y_{3}+x_{1}y_{2}+x_{2}y_{1}\right),\\
\left(\rep{x}\otimes\rep{y}\right)_{\rep{\bar{1}^{\prime}}} & =\frac{1}{\sqrt{3}}\left(x_{2}y_{2}+x_{1}y_{3}+x_{3}y_{1}\right),\\
\left(\rep{x}\otimes\rep{y}\right)_{\rep{3}_{s}} & =\frac{1}{\sqrt{6}}\begin{pmatrix}2x_{1}y_{1}-x_{2}y_{3}-x_{3}y_{2}\\
2x_{3}y_{3}-x_{1}y_{2}-x_{2}y_{1}\\
2x_{2}y_{2}-x_{1}y_{3}-x_{3}y_{1}
\end{pmatrix},\\
\left(\rep{x}\otimes\rep{y}\right)_{\rep{3}_{a}} & =\frac{1}{\sqrt{6}}\begin{pmatrix}x_{2}y_{3}-x_{3}y_{2}\\
x_{1}y_{2}-x_{2}y_{1}\\
x_{1}y_{3}-x_{3}y_{1}
\end{pmatrix},
\end{align*}
\[
\left(\rep{x}\otimes\rep{s^{\prime}}\right)_{\rep{3}}=s^{\prime}\begin{pmatrix}x_{3}\\
x_{1}\\
x_{2}
\end{pmatrix},\quad\left(\rep{x}\otimes\rep{s^{\prime\prime}}\right)_{\rep{3}}=s^{\prime\prime}\begin{pmatrix}x_{2}\\
x_{3}\\
x_{1}
\end{pmatrix}.
\]
 The subscript $s$ and $a$ in above expressions denote symmetric
and antisymmetric parts of tensor products respectively. The rest
CG coefficients of $A_{4}$ are trivial.

\subsection{\label{subsec:S4Group}The $S_{4}$ group}

The $S_{4}$ group is generated by two elements $a$ and $b$ which
fulfill
\begin{equation}
a^{4}=b^{2}=\left(ab\right)^{3}=e.\label{eq:S4_presentation}
\end{equation}
As $A_{4}$ is a subgroup of $S_{4}$, $s$ and $t$ in (\ref{eq:A4_presentation})
can be expressed as 
\begin{equation}
s=a^{2},\quad t=ab.\label{eq:App_S4A4_Gen_Relation}
\end{equation}
$S_{4}$ has five irreps: one trivial singlet $\rep{1}$, one nontrivial
singlet $\rep{1_{1}}$, one doublet $\rep{2}$, and two triplets $\rep{3_{1}}$
and $\rep{3_{2}}$. The character table of $S_{4}$ is shown as table
\ref{tab:S4_character}. The irreps of $S_{4}$ can be decomposed
into irreps of $A_{4}$:
\begin{align*}
\rep{1_{1}} & =\rep{1},\quad\rep{2}=\rep{1^{\prime}}\oplus\rep{\bar{1}^{\prime}},\\
\rep{3_{1}} & =\rep{3},\quad\rep{3_{2}}=\rep{3}.
\end{align*}

\begin{center}
\begin{table}
\begin{centering}
\begin{tabular}{|c|c|c|c|c|c|}
\hline 
$S_{4}$ & $C_{1}$ & $6C_{2}^{[2]}$ & $3C_{3}^{[2]}$ & $8C_{4}^{[3]}$ & $6C_{5}^{[4]}$\tabularnewline
\hline 
\hline 
$\rep{1_{0}}$ & 1 & 1 & 1 & 1 & 1\tabularnewline
\hline 
$\rep{1_{1}}$ & 1 & -1 & 1 & 1 & -1\tabularnewline
\hline 
$\rep{2}$ & 2 & 0 & 2 & -1 & 0\tabularnewline
\hline 
$\rep{3_{1}}$ & 3 & 1 & -1 & 0 & 1\tabularnewline
\hline 
$\rep{3_{2}}$ & 3 & -1 & -1 & 0 & 1\tabularnewline
\hline 
\end{tabular}
\par\end{centering}
\centering{}\caption{\label{tab:S4_character}Character table of $S_{4}$}
\end{table}
\par\end{center}

The non-singlet irreps of $S_{4}$ can be expressed as 
\begin{equation}
\rep{2}:\quad a=\begin{pmatrix}0 & 1\\
1 & 0
\end{pmatrix},\quad b=\left(\begin{array}{cc}
0 & \omega^{2}\\
\omega & 0
\end{array}\right),\label{eq:S4_irr2}
\end{equation}
\begin{subequations}\label{eq:S4_irr31}
\begin{align}
\rep{3_{1}}:\quad a^{[\rep{3_{1}}]} & =\frac{1}{3}\left(\begin{array}{ccc}
-1 & 2\omega^{2} & 2\omega\\
2\omega^{2} & 2\omega & -1\\
2\omega & -1 & 2\omega^{2}
\end{array}\right),\\
b^{[\rep{3_{1}}]} & =\frac{1}{3}\left(\begin{array}{ccc}
-1 & 2\omega^{2} & 2\omega\\
2\omega & 2 & -\omega^{2}\\
2\omega^{2} & -\omega & 2
\end{array}\right),\\
a^{[\rep{3_{2}}]} & =-a^{[\rep{3_{1}}]},\quad b^{[\rep{3_{2}}]}=-b^{[\rep{3_{1}}]}.
\end{align}
\end{subequations} Their $\Gamma$ matrices, see eq. (\ref{eq:Red_Gamma_rho})
for the definition, are 
\begin{equation}
\Gamma_{\rep{2}}=\begin{pmatrix}0 & 1\\
1 & 0
\end{pmatrix},\quad\Gamma_{\rep{3_{1}}}=\Gamma_{\rep{3_{2}}}=\begin{pmatrix}1 & 0 & 0\\
0 & 0 & 1\\
0 & 1 & 0
\end{pmatrix}.\label{eq:S4_Gamma}
\end{equation}
We remark that the matrices of (\ref{eq:S4_irr2}) and (\ref{eq:S4_irr31})
are in the $A_{4}$ basis, meaning that, in the $\rep{2}=\rep{1^{\prime}}\oplus\rep{\bar{1}^{\prime}}$
representation (\ref{eq:S4_irr2}), $A_{4}$ elements are diagonal
matrices, and in the $\rep{3_{1}}=\rep{3}$ and $\rep{3_{2}}=\rep{3}$
representations, $A_{4}$ elements are generated by the same matrices
as the ones in (\ref{eq:A4-generators}). To obtain these matrices,
we use GAP to obtain an arbitrary matrix realization of $S_{4}$ and
then perform similarity transformations to transform them to the desired
bases. The procedures are described in Section \ref{sec:Gen_PSL27}.

Using the CG coefficients of $A_{4}$, we obtain the CG coefficients
of two triplets of $S_{4}$. In the following CG coefficients, $\rep{x}$
and $\rep{y}$ are triplets, $\rep{z}=\left(\rep{s_{z}^{\prime}},\rep{\bar{s}_{z}^{\prime}}\right)$
and $\rep{w}=\left(\rep{s_{w}^{\prime}},\rep{\bar{s}_{w}^{\prime}}\right)$
are doublets, where $\rep{s^{\prime}}$ and $\rep{\bar{s}^{\prime}}$
are nontrivial singlets of $A_{4}$, and $\rep{s}$ is a trivial singlet
of $S_{4}$ and $A_{4}$.
\begin{itemize}
\item $\rep{3_{1}}\otimes\rep{3_{1}}\to\left(\rep{1_{0}}+\rep{2}+\rep{3_{1}}\right)_{s}+\rep{3_{2}}_{a}$:
\begin{align*}
\covs{x}{y}{1_{0}} & =\cova{x}{y}{1_{0}}\\
\covs{x}{y}{2} & =\begin{pmatrix}\cova{x}{y}{1^{\prime}}\\
\cova{x}{y}{\bar{1}^{\prime}}
\end{pmatrix}\\
\covs{x}{y}{3_{1}} & =\cova{x}{y}{3_{s}}\\
\covs{x}{y}{3_{2}} & =i\cova{x}{y}{3_{a}}
\end{align*}
\item $\rep{3_{2}}\otimes\rep{3_{2}}\to\left(\rep{1_{0}}+\rep{2}+\rep{3_{1}}\right)_{s}+\rep{3_{2}}_{a}$:
\begin{align*}
\covs{x}{y}{1_{0}} & =\cova{x}{y}{1_{0}}\\
\covs{x}{y}{2} & =\begin{pmatrix}\cova{x}{y}{1^{\prime}}\\
\cova{x}{y}{\bar{1}^{\prime}}
\end{pmatrix}\\
\covs{x}{y}{3_{1}} & =\cova{x}{y}{3_{s}}\\
\covs{x}{y}{3_{2}} & =i\cova{x}{y}{3_{a}}
\end{align*}
\item $\rep{3_{1}}\otimes\rep{3_{2}}\to\rep{1_{1}}+\rep{2}+\rep{3_{1}}+\rep{3_{2}}$:
\begin{align*}
\covs{x}{y}{1_{1}} & =\cova{x}{y}{1}\\
\covs{x}{y}{2} & =\begin{pmatrix}i\cova{x}{y}{1^{\prime}}\\
-i\cova{x}{y}{\bar{1}^{\prime}}
\end{pmatrix}\\
\covs{x}{y}{3_{1}} & =i\cova{x}{y}{3_{a}}\\
\covs{x}{y}{3_{2}} & =\cova{x}{y}{3_{s}}
\end{align*}
\item $\rep{3_{1}}\otimes\rep{2}\to\rep{3_{1}}+\rep{3_{2}}$: 
\begin{align*}
\covs{x}{z}{3_{1}} & =\frac{1}{\sqrt{2}}\cova{x}{s_{z}^{\prime}}{3}+\frac{1}{\sqrt{2}}\cova{x}{\bar{s}_{z}^{\prime}}{3}\\
\covs{x}{z}{3_{2}} & =\frac{i}{\sqrt{2}}\cova{x}{s_{z}^{\prime}}{3}-\frac{i}{\sqrt{2}}\cova{x}{\bar{s}_{z}^{\prime}}{3}
\end{align*}
\item $\rep{3_{2}}\otimes\rep{2}\to\rep{3_{1}}+\rep{3_{2}}$:
\begin{align*}
\covs{x}{z}{3_{1}} & =\frac{i}{\sqrt{2}}\cova{x}{s_{z}^{\prime}}{3}-\frac{i}{\sqrt{2}}\cova{x}{\bar{s}_{z}^{\prime}}{3}\\
\covs{x}{z}{3_{2}} & =\frac{1}{\sqrt{2}}\cova{x}{s_{z}^{\prime}}{3}+\frac{1}{\sqrt{2}}\cova{x}{\bar{s}_{z}^{\prime}}{3}
\end{align*}
\item $\rep{2}\otimes\rep{2}\to\left(\rep{1_{0}}+\rep{2}\right)_{s}+\rep{1_{1}}_{a}$:
\begin{align*}
\covs{z}{w}{1_{0}} & =\frac{1}{\sqrt{2}}\cova{s_{z}^{\prime}}{\bar{s}_{w}^{\prime}}{1}+\frac{1}{\sqrt{2}}\cova{\bar{s}_{z}^{\prime}}{s_{w}^{\prime}}{1}\\
\covs{z}{w}{1_{1}} & =\frac{i}{\sqrt{2}}\cova{s_{z}^{\prime}}{\bar{s}_{w}^{\prime}}{1}-\frac{i}{\sqrt{2}}\cova{\bar{s}_{z}^{\prime}}{s_{w}^{\prime}}{1}\\
\covs{z}{w}{2} & =\begin{pmatrix}\cova{\bar{s}_{z}^{\prime}}{\bar{s}_{w}^{\prime}}{1^{\prime}}\\
\cova{s_{z}^{\prime}}{s_{w}^{\prime}}{\bar{1}^{\prime}}
\end{pmatrix}
\end{align*}
\item $\rep{2}\otimes\rep{1_{1}}\to\rep{2}$:
\[
\covs{z}{s}{2}=\begin{pmatrix}i\cova{s_{z}^{\prime}}{s}{1^{\prime}}\\
-i\cova{\bar{s}_{z}^{\prime}}{s}{\bar{1}^{\prime}}
\end{pmatrix}.
\]
 
\end{itemize}
Note that the above CG coefficients in $A_{4}$ basis satisfy the
constraints of eqs. (\ref{eq:CG_GammaCon}).

\subsection{The $\subgb$ group}

The Frobenius group of order 21 is the smallest finite non-Abelian
subgroup of $SU\left(3\right)$. It contains elements of order three
and seven, with the presentation 
\begin{equation}
\left\langle c,d\big|c^{7}=d^{3}=1,d^{-1}cd=c^{4}\right\rangle .\label{eq:T7_presentation}
\end{equation}
Its irreps are, a real singlet, one complex triplet $\rep{3}$, a
complex singlet, $\rep{1^{\prime}}$, and their inequivalent conjugates,
$\rep{\bar{3}}$, and $\rep{\bar{1}^{\prime}}$. Their Kronecker products
are 
\begin{align*}
\rep{1^{\prime}}\otimes\rep{1^{\prime}} & =\rep{\bar{1}^{\prime}},\quad\quad\quad\quad\quad\;\,\rep{1^{\prime}}\otimes\rep{\bar{1}^{\prime}}=\rep{1}\\
\rep{3}\otimes\rep{1^{\prime}} & =\rep{3},\quad\quad\quad\quad\quad\quad\rep{3}\otimes\rep{\bar{1}^{\prime}}=\rep{3}\\
\rep{3}\otimes\rep{3}\; & =\left(\rep{3}+\rep{\bar{3}}\right)_{s}+\rep{\bar{3}}_{a},\;\;\,\rep{3}\otimes\rep{\bar{3}}\;=\rep{1}+\rep{1^{\prime}}+\rep{\bar{1}^{\prime}}+\rep{3}+\rep{\bar{3}}.
\end{align*}

The character table of $\subgb$ is shown as table \ref{tab:Character-T7}.

\begin{table}
\begin{centering}
\begin{tabular}{|c|c|c|c|c|c|}
\hline 
$\subgb$ & $C_{1}$ & $7C_{2}^{[3]}$ & $7C_{3}^{[2]}$ & $3C_{4}^{[7]}$ & $3C_{5}^{[7]}$\tabularnewline
\hline 
\hline 
$\rep{1}$ & 1 & 1 & 1 & 1 & 1\tabularnewline
\hline 
$\rep{1^{\prime}}$ & 1 & $\omega$ & $\omega^{2}$ & 1 & 1\tabularnewline
\hline 
$\rep{\bar{1}^{\prime}}$ & 1 & $\omega^{2}$ & $\omega$ & 1 & 1\tabularnewline
\hline 
$\rep{3_{1}}$ & 3 & 0 & 0 & $b_{7}$ & $\bar{b}_{7}$\tabularnewline
\hline 
$\rep{3_{2}}$ & 3 & -1 & -1 & $\bar{b}_{7}$ & $b_{7}$\tabularnewline
\hline 
\end{tabular}
\par\end{centering}
\caption{\label{tab:Character-T7}Character table of $\subgb$}

\end{table}

The CG coefficients of $\subgb$ are as follows\cite{Kile:2013gla}.
\begin{align*}
\covt{1^{\prime}}{\bar{3}}{3} & =\begin{pmatrix}\ket{1^{\prime}}\ket{1}\\
\omega\ket{1^{\prime}}\ket{2}\\
\omega^{2}\ket{1^{\prime}}\ket{3}
\end{pmatrix},\quad\covt{3}{3}{3}=\begin{pmatrix}\ket{3}\ket{3^{\prime}}\\
\ket{1}\ket{1^{\prime}}\\
\ket{2}\ket{2^{\prime}}
\end{pmatrix}\\
\covt{3}{3}{\bar{3}_{s}} & =\begin{pmatrix}\frac{1}{\sqrt{2}}\left(\ket{3}\ket{2^{\prime}}+\ket{2}\ket{3^{\prime}}\right)\\
\frac{1}{\sqrt{2}}\left(\ket{1}\ket{3^{\prime}}+\ket{3}\ket{1^{\prime}}\right)\\
\frac{1}{\sqrt{2}}\left(\ket{2}\ket{1^{\prime}}+\ket{1}\ket{2^{\prime}}\right)
\end{pmatrix},\\
\covt{3}{3}{\bar{3}_{s}} & =\begin{pmatrix}\frac{1}{\sqrt{2}}\left(\ket{3}\ket{2^{\prime}}-\ket{2}\ket{3^{\prime}}\right)\\
\frac{1}{\sqrt{2}}\left(\ket{1}\ket{3^{\prime}}-\ket{3}\ket{1^{\prime}}\right)\\
\frac{1}{\sqrt{2}}\left(\ket{2}\ket{1^{\prime}}-\ket{1}\ket{2^{\prime}}\right)
\end{pmatrix},\\
\covt{3}{\bar{3}}{3} & =\begin{pmatrix}\ket{2}\ket{\bar{1}}\\
\ket{3}\ket{\bar{2}}\\
\ket{1}\ket{\bar{3}}
\end{pmatrix},\quad\covt{3}{\bar{3}}{\bar{3}}=\begin{pmatrix}\ket{1}\ket{\bar{2}}\\
\ket{2}\ket{\bar{3}}\\
\ket{3}\ket{\bar{1}}
\end{pmatrix},\\
\covt{3}{\bar{3}}{1^{\prime}} & =\frac{1}{\sqrt{3}}\left(\ket{1}\ket{\bar{1}}+\omega^{2}\ket{2}\ket{\bar{2}}+\omega\ket{3}\ket{\bar{3}}\right),\\
\covt{3}{\bar{3}}{1^{\prime}} & =\frac{1}{\sqrt{3}}\left(\ket{1}\ket{\bar{1}}+\omega\ket{2}\ket{\bar{2}}+\omega^{2}\ket{3}\ket{\bar{3}}\right).
\end{align*}
 Since irreps of $\subgb$ are complex except the trivial singlet,
the rest CG coefficients can be obtained by taking complex conjugate
of above CG coefficients.

\subsection{The $\group$ group}

$\mathcal{PSL}_{2}\left(7\right)$ irreps can be decomposed into $S_{4}$
irreps as Table \ref{tab:PSL27-S4-T7}. 

\begin{table}
\begin{centering}
\begin{tabular}{|c|}
\hline 
$\group\supset\subga$\tabularnewline
\hline 
\hline 
$\rep{3}=\rep{3_{2}}$\tabularnewline
$\rep{\bar{3}}=\rep{3_{2}}$\tabularnewline
$\rep{6}=\rep{1}+\rep{2}+\rep{3_{1}}$\tabularnewline
$\rep{7}=\rep{1_{1}}+\rep{3_{1}}+\rep{3_{2}}$\tabularnewline
$\rep{8}=\rep{2}+\rep{3_{1}}+\rep{3_{2}}$\tabularnewline
\hline 
\end{tabular} %
\begin{tabular}{|c|}
\hline 
$\group\supset\subgb$\tabularnewline
\hline 
\hline 
$\rep{3}=\rep{3}$\tabularnewline
$\rep{\bar{3}}=\rep{\bar{3}}$\tabularnewline
$\rep{6}=\rep{3}+\rep{\bar{3}}$\tabularnewline
$\rep{7}=\rep{1}+\rep{3}+\rep{\bar{3}}$\tabularnewline
$\rep{8}=\rep{1^{\prime}}+\rep{\bar{1}^{\prime}}+\rep{3}+\rep{\bar{3}}$\tabularnewline
\hline 
\end{tabular} 
\par\end{centering}
\caption{\label{tab:PSL27-S4-T7}Embedding of $\subga$ and $\subgb$ in $\group$.}
\end{table}

The tensor products of $\group$ are as table \ref{tab:PSL27-Kronecker}.

\begin{table}
\begin{centering}
\begin{tabular}{|l|}
\hline 
$\group$ Tensor Products\tabularnewline
\hline 
\hline 
$\rep{3}\otimes\rep{3}=\rep{\bar{3}}_{a}+\rep{6}_{s}$\tabularnewline
$\rep{3}\otimes\rep{\bar{3}}=\rep{1}+\rep{8}$\tabularnewline
$\rep{3}\otimes\rep{6}=\rep{\bar{3}}+\rep{7}+\rep{8}$\tabularnewline
$\rep{\bar{3}}\otimes\rep{6}=\rep{3}+\rep{7}+\rep{8}$\tabularnewline
$\rep{3}\otimes\rep{7}=\rep{6}+\rep{7}+\rep{8}$\tabularnewline
$\rep{\bar{3}}\otimes\rep{7}=\rep{6}+\rep{7}+\rep{8}$\tabularnewline
$\rep{3}\otimes\rep{8}=\rep{3}+\rep{6}+\rep{7}+\rep{8}$\tabularnewline
$\rep{\bar{3}}\otimes\rep{8}=\rep{\bar{3}}+\rep{6}+\rep{7}+\rep{8}$\tabularnewline
$\rep{6}\otimes\rep{6}=\left(\rep{1}+\rep{6}+\rep{6}+\rep{8}\right)_{s}+\left(\rep{7}+\rep{8}\right)_{a}$\tabularnewline
$\rep{6}\otimes\rep{7}=\rep{3}+\rep{\bar{3}}+\rep{6}+\rep{7}+\rep{7}+\rep{8}+\rep{8}$\tabularnewline
$\rep{6}\otimes\rep{8}=\rep{3}+\rep{\bar{3}}+\rep{6}+\rep{6}+\rep{7}+\rep{7}+\rep{8}+\rep{8}$\tabularnewline
$\rep{7}\otimes\rep{7}=\left(\rep{1}+\rep{6}+\rep{6}+\rep{7}+\rep{8}\right)_{s}+\left(\rep{3}+\rep{\bar{3}}+\rep{7}+\rep{8}\right)_{a}$\tabularnewline
$\rep{7}\otimes\rep{8}=\rep{3}+\rep{\bar{3}}+\rep{6}+\rep{6}+\rep{7}+\rep{7}+\rep{8}+\rep{8}+\rep{8}$\tabularnewline
$\rep{8}\otimes\rep{8}=\left(\rep{1}+\rep{6}+\rep{6}+\rep{7}+\rep{8}+\rep{8}\right)_{s}+\left(\rep{3}+\rep{\bar{3}}+\rep{7}+\rep{7}+\rep{8}\right)_{a}$\tabularnewline
\hline 
\end{tabular}
\par\end{centering}
\caption{\label{tab:PSL27-Kronecker}Tensor products of $\group$.}
\end{table}

\section{\label{sec:Tech_Detail}Cyclotomic Numbers}

Cyclotomic numbers are elements of the cyclotomic field. The $n$th
cyclotomic field contains all $n$th roots of unity and the numbers
which can be expressed as a polynomials of $n$th roots of unity with
rational coefficients. The general form of an element in the field
is 
\begin{equation}
f=\sum_{k=0}^{n-1}q_{k}\exp\left(2k\pi i/n\right),\quad q_{k}\in\mathbb{Q}.\label{eq:CN_def}
\end{equation}
 Discrete group characters are usually cyclotomic numbers for fixed
$n$. We therefore need to perform arithmetic operators over cyclotomic
numbers in order to calculate CG coefficients of discrete groups.
Here we introduce the algorithm we used for calculations involving
cyclotomic numbers.

Cyclotomic fields are closed in the arithmetic of addition, multiplication,
and division. The addition and multiplication operators are trivially
the operators of polynomials. So we only need to discuss the division.
If we can find the reciprocal of a cyclotomic number (\ref{eq:CN_def}),
then the division operator becomes a multiplication operator and the
problem is solved.

Let us now consider how to find reciprocal of (\ref{eq:CN_def}).
For convenience, we now write $\exp\left(2\pi i/n\right)$ as $r_{n}$.
Let $g=\sum p_{k}r_{n}^{k}$ be the reciprocal of $f$. Collecting
$r_{n}^{k}$ terms in the product of $f$ and $g$, we have \begin{subequations}\label{eq:CN_equations}
\begin{align}
0 & =\sum_{i}q_{\left[k\right]-i}p_{i},\quad k>0,\,\left[k\right]=k,\,\text{or }n+k,\\
1 & =q_{0}p_{0}+\sum_{i}q_{n-i}p_{i}.
\end{align}
\end{subequations} We therefore have $n$ linear equations for $n$
unknown variables $p_{k}$. It seems enough to solve the equations.
However, these equations are in general not independent and therefore
have no solution. The root cause is that the expression (\ref{eq:CN_def})
is not unique. 

One of the ambiguities comes from that $r_{n}^{k}$ are not independent,
e.g,
\[
\sum_{k=0}^{n-1}r_{n}^{k}=0.
\]
We can always eliminate $r_{n}^{n-1}$ from the expression (\ref{eq:CN_def}).
Is it the only dependent relation among $r_{n}^{k}$? The answer is
no. If $n$ has a factor $1<p<n$, then
\begin{equation}
\sum_{k=0}^{n/p}r_{n}^{kp}=0\label{eq:CN_dependent_Relation}
\end{equation}
is also a dependent relation. If $n$ has $s$ positive factors (not
including $n$ itself), we can setup $s$ equations in the form of
(\ref{eq:CN_dependent_Relation}) and solve for $r_{n}^{k_{1}},\cdots,r_{n}^{k_{s}}$
in terms of the remaining $n-s$ $n$th root of unities. By doing
this, there are at most $n-s$ terms in (\ref{eq:CN_def}) and the
number of equations in (\ref{eq:CN_equations}) is $n-s$. It turns
out that, after doing this, the equations are always solvable and
has only one solution. 

\section{\label{sec:App_CGC_PSL27_S4}CG Coefficients of $\group$ in $\subga$
Basis}

The notations used in this appendix are as follows. If $\rep{X}$,
$\rep{Y}$, $\rep{Z}$ are decomposed to subgroup irreps as 
\[
\rep{X}=\bigoplus\rep{x_{a}},\quad\rep{Y}=\bigoplus\rep{y_{b}},\quad\rep{Z}=\bigoplus\rep{z_{c}}
\]
then the CG coefficients for $\rep{X}\otimes\rep{Y}\to\rep{Z}$ will
be written as a list of expressions of the form 
\begin{equation}
\rep{z_{c}}=\sum_{a,b}\mathcal{E}_{c,ab}\subcgs{x_{a}}{y_{b}}{z_{c}},\label{eq:App_CGList}
\end{equation}
 where $\mathcal{E}_{c,ab}$ are embedding factors and $\subcgs{x_{a}}{y_{b}}{z_{c}}$
are contractions of subgroup irreps. Note that, for each term $\subcgs{x_{a}}{y_{b}}{z_{c}}$
on the $\mathrm{rhs}$ of eq. (\ref{eq:App_CGList}), the first subgroup
irrep $\rep{x_{a}}$ always comes from $\rep{X}$ and second subgroup
irrep $\rep{y_{b}}$ always comes from $\rep{Y}$.

In the following, $b_{7}$ is a pure phase complex constant
\[
b_{7}=\frac{\eta+\eta^{2}+\eta^{4}}{\sqrt{2}}=\frac{-1+i\sqrt{7}}{2\sqrt{2}},\quad\eta=e^{2\pi i/7}.
\]

\paragraph*{\cgEqFontsize $\tsprod{3}{3}\to\rep{6}_{s}+\rep{\bar{3}}_{a}$}
\begin{itemize}
\item $\tsprodx{3}{3}\to\rep{\bar{3}}_{a}$: \bfl 
\begin{align*}
\rep{3}_{2} & =\subcgs{3_{2}}{3_{2}}{3_{2}}
\end{align*}
\efl 
\item $\tsprodx{3}{3}\to\rep{6}_{s}$: \bfl 
\begin{align*}
\rep{1}_{0} & =\subcgs{3_{2}}{3_{2}}{1_{0}}
\end{align*}
\begin{align*}
\rep{2} & =-\frac{i}{b_{7}^{2}}\subcgs{3_{2}}{3_{2}}{2}
\end{align*}
\begin{align*}
\rep{3}_{1} & =\frac{i}{b_{7}}\subcgs{3_{2}}{3_{2}}{3_{1}}
\end{align*}
\efl 
\end{itemize}

\paragraph*{\cgEqFontsize $\tsprod{3}{\bar{3}}\to\rep{1}+\rep{8}$}
\begin{itemize}
\item $\tsprodx{3}{\bar{3}}\to\rep{1}$: \bfl 
\begin{align*}
\rep{1}_{0} & =\subcgs{3_{2}}{3_{2}}{1_{0}}
\end{align*}
\efl 
\item $\tsprodx{3}{\bar{3}}\to\rep{8}$: \bfl 
\begin{align*}
\rep{2} & =\subcgs{3_{2}}{3_{2}}{2}
\end{align*}
\begin{align*}
\rep{3}_{1} & =\subcgs{3_{2}}{3_{2}}{3_{1}}
\end{align*}
\begin{align*}
\rep{3}_{2} & =i\subcgs{3_{2}}{3_{2}}{3_{2}}
\end{align*}
\efl 
\end{itemize}

\paragraph*{\cgEqFontsize $\tsprod{3}{6}\to\rep{\bar{3}}+\rep{7}+\rep{8}$}
\begin{itemize}
\item $\tsprodx{3}{6}\to\rep{\bar{3}}$: \bfl 
\begin{align*}
\rep{3}_{2} & =\frac{1}{\sqrt{6}}\subcgs{3_{2}}{1_{0}}{3_{2}}-\frac{i}{\sqrt{3}b_{7}^{2}}\subcgs{3_{2}}{2}{3_{2}}\\
 & +\frac{i}{\sqrt{2}b_{7}}\subcgs{3_{2}}{3_{1}}{3_{2}}
\end{align*}
\efl 
\item $\tsprodx{3}{6}\to\rep{7}$: \bfl 
\begin{align*}
\rep{1}_{1} & =\subcgs{3_{2}}{3_{1}}{1_{1}}
\end{align*}
\begin{align*}
\rep{3}_{1} & =\frac{1}{\sqrt{3}b_{7}^{3}}\subcgs{3_{2}}{2}{3_{1}}+\frac{\sqrt{\frac{2}{3}}}{b_{7}^{4}}\subcgs{3_{2}}{3_{1}}{3_{1}}
\end{align*}
\begin{align*}
\rep{3}_{2} & =\frac{\sqrt{\frac{7}{2}}}{3b_{7}}\subcgs{3_{2}}{1_{0}}{3_{2}}+\frac{2}{3b_{7}^{5}}\subcgs{3_{2}}{2}{3_{2}}\\
 & -\frac{1}{\sqrt{6}b_{7}^{2}}\subcgs{3_{2}}{3_{1}}{3_{2}}
\end{align*}
\efl 
\item $\tsprodx{3}{6}\to\rep{8}$: \bfl 
\begin{align*}
\rep{2} & =\subcgs{3_{2}}{3_{1}}{2}
\end{align*}
\begin{align*}
\rep{3}_{1} & =-\sqrt{\frac{2}{3}}b_{7}\subcgs{3_{2}}{2}{3_{1}}+\frac{1}{\sqrt{3}}\subcgs{3_{2}}{3_{1}}{3_{1}}
\end{align*}
\begin{align*}
\rep{3}_{2} & =\frac{2}{3b_{7}}\subcgs{3_{2}}{1_{0}}{3_{2}}-\frac{1}{3}i\sqrt{2}b_{7}\subcgs{3_{2}}{2}{3_{2}}\\
 & +\frac{i}{\sqrt{3}}\subcgs{3_{2}}{3_{1}}{3_{2}}
\end{align*}
\efl 
\end{itemize}

\paragraph*{\cgEqFontsize $\tsprod{3}{7}\to\rep{6}+\rep{7}+\rep{8}$}
\begin{itemize}
\item $\tsprodx{3}{7}\to\rep{6}$: \bfl 
\begin{align*}
\rep{1}_{0} & =\subcgs{3_{2}}{3_{2}}{1_{0}}
\end{align*}
\begin{align*}
\rep{2} & =-\frac{\sqrt{\frac{3}{7}}}{b_{7}^{2}}\subcgs{3_{2}}{3_{1}}{2}+\frac{2}{\sqrt{7}b_{7}^{4}}\subcgs{3_{2}}{3_{2}}{2}
\end{align*}
\begin{align*}
\rep{3}_{1} & =\sqrt{\frac{2}{7}}b_{7}\subcgs{3_{2}}{1_{1}}{3_{1}}-\frac{2}{\sqrt{7}b_{7}^{3}}\subcgs{3_{2}}{3_{1}}{3_{1}}\\
 & -\frac{1}{\sqrt{7}b_{7}}\subcgs{3_{2}}{3_{2}}{3_{1}}
\end{align*}
\efl 
\item $\tsprodx{3}{7}\to\rep{7}$: \bfl 
\begin{align*}
\rep{1}_{1} & =\subcgs{3_{2}}{3_{1}}{1_{1}}
\end{align*}
\begin{align*}
\rep{3}_{1} & =-\frac{1}{\sqrt{3}}\subcgs{3_{2}}{1_{1}}{3_{1}}-\frac{1}{\sqrt{3}b_{7}}\subcgs{3_{2}}{3_{1}}{3_{1}}\\
 & -\frac{1}{\sqrt{3}b_{7}}\subcgs{3_{2}}{3_{2}}{3_{1}}
\end{align*}
\begin{align*}
\rep{3}_{2} & =\frac{1}{\sqrt{3}b_{7}}\subcgs{3_{2}}{3_{1}}{3_{2}}+\frac{\sqrt{\frac{2}{3}}}{b_{7}^{2}}\subcgs{3_{2}}{3_{2}}{3_{2}}
\end{align*}
\efl 
\item $\tsprodx{3}{7}\to\rep{8}$: \bfl 
\begin{align*}
\rep{2} & =\frac{2}{\sqrt{7}}\subcgs{3_{2}}{3_{1}}{2}+\frac{\sqrt{\frac{3}{7}}}{b_{7}^{2}}\subcgs{3_{2}}{3_{2}}{2}
\end{align*}
\begin{align*}
\rep{3}_{1} & =\frac{2\sqrt{\frac{2}{21}}}{b_{7}^{4}}\subcgs{3_{2}}{1_{1}}{3_{1}}+\frac{\sqrt{\frac{2}{21}}}{b_{7}}\subcgs{3_{2}}{3_{1}}{3_{1}}\\
 & +\frac{13\sqrt{7}+7i}{28\sqrt{3}}\subcgs{3_{2}}{3_{2}}{3_{1}}
\end{align*}
\begin{align*}
\rep{3}_{2} & =\frac{\sqrt{\frac{2}{3}}}{b_{7}}\subcgs{3_{2}}{3_{1}}{3_{2}}-\frac{1}{\sqrt{3}b_{7}^{2}}\subcgs{3_{2}}{3_{2}}{3_{2}}
\end{align*}
\efl 
\end{itemize}

\paragraph*{\cgEqFontsize $\tsprod{3}{8}\to\rep{3}+\rep{6}+\rep{7}+\rep{8}$}
\begin{itemize}
\item $\tsprodx{3}{8}\to\rep{3}$: \bfl 
\begin{align*}
\rep{3}_{2} & =\frac{1}{2}\subcgs{3_{2}}{2}{3_{2}}+\frac{\sqrt{\frac{3}{2}}}{2}\subcgs{3_{2}}{3_{1}}{3_{2}}\\
 & -\frac{1}{2}i\sqrt{\frac{3}{2}}\subcgs{3_{2}}{3_{2}}{3_{2}}
\end{align*}
\efl 
\item $\tsprodx{3}{8}\to\rep{6}$: \bfl 
\begin{align*}
\rep{1}_{0} & =\subcgs{3_{2}}{3_{2}}{1_{0}}
\end{align*}
\begin{align*}
\rep{2} & =\frac{1}{2}\sqrt{3}b_{7}^{2}\subcgs{3_{2}}{3_{1}}{2}-\frac{1}{2}ib_{7}^{2}\subcgs{3_{2}}{3_{2}}{2}
\end{align*}
\begin{align*}
\rep{3}_{1} & =-\frac{b_{7}}{\sqrt{2}}\subcgs{3_{2}}{2}{3_{1}}-\frac{b_{7}}{2}\subcgs{3_{2}}{3_{1}}{3_{1}}\\
 & +\frac{ib_{7}}{2}\subcgs{3_{2}}{3_{2}}{3_{1}}
\end{align*}
\efl 
\item $\tsprodx{3}{8}\to\rep{7}$: \bfl 
\begin{align*}
\rep{1}_{1} & =\subcgs{3_{2}}{3_{1}}{1_{1}}
\end{align*}
\begin{align*}
\rep{3}_{1} & =-\frac{b_{7}^{4}}{\sqrt{3}}\subcgs{3_{2}}{2}{3_{1}}-\frac{b_{7}^{3}}{2\sqrt{3}}\subcgs{3_{2}}{3_{1}}{3_{1}}\\
 & +\frac{1}{2}\sqrt{\frac{7}{3}}b_{7}^{3}\subcgs{3_{2}}{3_{2}}{3_{1}}
\end{align*}
\begin{align*}
\rep{3}_{2} & =\frac{b_{7}^{2}}{2}\subcgs{3_{2}}{2}{3_{2}}+\frac{i\left(5\sqrt{7}+i\right)}{8\sqrt{6}}\subcgs{3_{2}}{3_{1}}{3_{2}}\\
 & +\frac{1}{2}\sqrt{\frac{7}{6}}b_{7}^{2}\subcgs{3_{2}}{3_{2}}{3_{2}}
\end{align*}
\efl 
\item $\tsprodx{3}{8}\to\rep{8}$: \bfl 
\begin{align*}
\rep{2} & =\frac{1}{2}\subcgs{3_{2}}{3_{1}}{2}+\frac{i\sqrt{3}}{2}\subcgs{3_{2}}{3_{2}}{2}
\end{align*}
\begin{align*}
\rep{3}_{1} & =\frac{1}{\sqrt{6}}\subcgs{3_{2}}{2}{3_{1}}-\sqrt{\frac{2}{3}}b_{7}^{3}\subcgs{3_{2}}{3_{1}}{3_{1}}\\
 & +\frac{ib_{7}}{\sqrt{6}}\subcgs{3_{2}}{3_{2}}{3_{1}}
\end{align*}
\begin{align*}
\rep{3}_{2} & =-\frac{i}{\sqrt{2}}\subcgs{3_{2}}{2}{3_{2}}-\frac{ib_{7}}{\sqrt{6}}\subcgs{3_{2}}{3_{1}}{3_{2}}\\
 & -\frac{b_{7}^{2}}{\sqrt{3}}\subcgs{3_{2}}{3_{2}}{3_{2}}
\end{align*}
\efl 
\end{itemize}

\paragraph*{\cgEqFontsize $\tsprod{6}{6}\to\left(\rep{1}+\rep{6}^{(1)}+\rep{6}^{(2)}+\rep{8}\right)_{s}+\left(\rep{7}+\rep{8}\right)_{a}$}
\begin{itemize}
\item $\tsprodx{6}{6}\to\rep{1}_{s}$: \bfl 
\begin{align*}
\rep{1}_{0} & =\frac{1}{\sqrt{6}}\subcgs{1_{0}}{1_{0}}{1_{0}}+\frac{1}{\sqrt{3}}\subcgs{2}{2}{1_{0}}\\
 & +\frac{1}{\sqrt{2}}\subcgs{3_{1}}{3_{1}}{1_{0}}
\end{align*}
\efl 
\item $\tsprodx{6}{6}\to\rep{6}_{s}^{(1)}$: \bfl 
\begin{align*}
\rep{1}_{0} & =\frac{\sqrt{3}}{2}\subcgs{1_{0}}{1_{0}}{1_{0}}-\frac{1}{2}\subcgs{3_{1}}{3_{1}}{1_{0}}
\end{align*}
\begin{align*}
\rep{2} & =-\sqrt{\frac{3}{7}}\subcgs{2}{2}{2}-\frac{2}{\sqrt{7}}\subcgs{3_{1}}{3_{1}}{2}
\end{align*}
\begin{align*}
\rep{3}_{1} & =-\frac{1}{2\sqrt{3}}\left(\subcgs{1_{0}}{3_{1}}{3_{1}}+\subcgs{3_{1}}{1_{0}}{3_{1}}\right)\\
 & -2\sqrt{\frac{2}{21}}\left(\subcgs{2}{3_{1}}{3_{1}}+\subcgs{3_{1}}{2}{3_{1}}\right)\\
 & +\frac{1}{\sqrt{14}}\subcgs{3_{1}}{3_{1}}{3_{1}}
\end{align*}
\efl 
\item $\tsprodx{6}{6}\to\rep{6}_{s}^{(2)}$: \bfl 
\begin{align*}
\rep{1}_{0} & =-\frac{1}{2\sqrt{3}}\subcgs{1_{0}}{1_{0}}{1_{0}}+\sqrt{\frac{2}{3}}\subcgs{2}{2}{1_{0}}\\
 & -\frac{1}{2}\subcgs{3_{1}}{3_{1}}{1_{0}}
\end{align*}
\begin{align*}
\rep{2} & =\frac{1}{\sqrt{3}}\left(\subcgs{1_{0}}{2}{2}+\subcgs{2}{1_{0}}{2}\right)\\
 & -\frac{2}{\sqrt{21}}\subcgs{2}{2}{2}+\frac{1}{\sqrt{7}}\subcgs{3_{1}}{3_{1}}{2}
\end{align*}
\begin{align*}
\rep{3}_{1} & =-\frac{1}{2\sqrt{3}}\left(\subcgs{1_{0}}{3_{1}}{3_{1}}+\subcgs{3_{1}}{1_{0}}{3_{1}}\right)\\
 & +\sqrt{\frac{2}{21}}\left(\subcgs{2}{3_{1}}{3_{1}}+\subcgs{3_{1}}{2}{3_{1}}\right)\\
 & +\frac{3}{\sqrt{14}}\subcgs{3_{1}}{3_{1}}{3_{1}}
\end{align*}
\efl 
\item $\tsprodx{6}{6}\to\rep{8}_{s}$: \bfl 
\begin{align*}
\rep{2} & =\frac{1}{\sqrt{6}}\left(\subcgs{1_{0}}{2}{2}+\subcgs{2}{1_{0}}{2}\right)\\
 & +2\sqrt{\frac{2}{21}}\subcgs{2}{2}{2}-\sqrt{\frac{2}{7}}\subcgs{3_{1}}{3_{1}}{2}
\end{align*}
\begin{align*}
\rep{3}_{1} & =\frac{1}{\sqrt{3}}\left(\subcgs{1_{0}}{3_{1}}{3_{1}}+\subcgs{3_{1}}{1_{0}}{3_{1}}\right)\\
 & -\frac{1}{\sqrt{42}}\left(\subcgs{2}{3_{1}}{3_{1}}+\subcgs{3_{1}}{2}{3_{1}}\right)\\
 & +\sqrt{\frac{2}{7}}\subcgs{3_{1}}{3_{1}}{3_{1}}
\end{align*}
\begin{align*}
\rep{3}_{2} & =-\frac{1}{\sqrt{2}}\left(\subcgs{2}{3_{1}}{3_{2}}+\subcgs{3_{1}}{2}{3_{2}}\right)
\end{align*}
\efl 
\item $\tsprodx{6}{6}\to\rep{7}_{a}$: \bfl 
\begin{align*}
\rep{1}_{1} & =\subcgs{2}{2}{1_{1}}
\end{align*}
\begin{align*}
\rep{3}_{1} & =\frac{\sqrt{\frac{7}{2}}}{3}\left(\subcgs{1_{0}}{3_{1}}{3_{1}}-\subcgs{3_{1}}{1_{0}}{3_{1}}\right)\\
 & +\frac{1}{3}\left(\subcgs{2}{3_{1}}{3_{1}}-\subcgs{3_{1}}{2}{3_{1}}\right)
\end{align*}
\begin{align*}
\rep{3}_{2} & =\frac{1}{\sqrt{3}}\left(\subcgs{2}{3_{1}}{3_{2}}-\subcgs{3_{1}}{2}{3_{2}}\right)\\
 & -\frac{1}{\sqrt{3}}\subcgs{3_{1}}{3_{1}}{3_{2}}
\end{align*}
\efl 
\item $\tsprodx{6}{6}\to\rep{8}_{a}$: \bfl 
\begin{align*}
\rep{2} & =\frac{1}{\sqrt{2}}\left(\subcgs{1_{0}}{2}{2}-\subcgs{2}{1_{0}}{2}\right)
\end{align*}
\begin{align*}
\rep{3}_{1} & =-\frac{1}{3}\left(\subcgs{1_{0}}{3_{1}}{3_{1}}-\subcgs{3_{1}}{1_{0}}{3_{1}}\right)\\
 & +\frac{\sqrt{\frac{7}{2}}}{3}\left(\subcgs{2}{3_{1}}{3_{1}}-\subcgs{3_{1}}{2}{3_{1}}\right)
\end{align*}
\begin{align*}
\rep{3}_{2} & =-\frac{1}{\sqrt{6}}\left(\subcgs{2}{3_{1}}{3_{2}}-\subcgs{3_{1}}{2}{3_{2}}\right)\\
 & -\sqrt{\frac{2}{3}}\subcgs{3_{1}}{3_{1}}{3_{2}}
\end{align*}
\efl 
\end{itemize}

\paragraph*{\cgEqFontsize $\tsprod{6}{7}\to\rep{3}+\rep{\bar{3}}+\rep{6}+\rep{7}^{(1)}+\rep{7}^{(2)}+\rep{8}^{(1)}+\rep{8}^{(2)}$}
\begin{itemize}
\item $\tsprodx{6}{7}\to\rep{3}$: \bfl 
\begin{align*}
\rep{3}_{2} & =\frac{1}{\sqrt{6}}\subcgs{1_{0}}{3_{2}}{3_{2}}+\frac{b_{7}^{2}}{\sqrt{7}}\subcgs{2}{3_{1}}{3_{2}}\\
 & +\frac{2b_{7}^{4}}{\sqrt{21}}\subcgs{2}{3_{2}}{3_{2}}+\frac{1}{\sqrt{7}b_{7}}\subcgs{3_{1}}{1_{1}}{3_{2}}\\
 & +\sqrt{\frac{2}{7}}b_{7}^{3}\subcgs{3_{1}}{3_{1}}{3_{2}}-\frac{b_{7}}{\sqrt{14}}\subcgs{3_{1}}{3_{2}}{3_{2}}
\end{align*}
\efl 
\item $\tsprodx{6}{7}\to\rep{\bar{3}}$: \bfl 
\begin{align*}
\rep{3}_{2} & =\frac{1}{\sqrt{6}}\subcgs{1_{0}}{3_{2}}{3_{2}}+\frac{1}{\sqrt{7}b_{7}^{2}}\subcgs{2}{3_{1}}{3_{2}}\\
 & +\frac{2}{\sqrt{21}b_{7}^{4}}\subcgs{2}{3_{2}}{3_{2}}+\frac{b_{7}}{\sqrt{7}}\subcgs{3_{1}}{1_{1}}{3_{2}}\\
 & +\frac{\sqrt{\frac{2}{7}}}{b_{7}^{3}}\subcgs{3_{1}}{3_{1}}{3_{2}}-\frac{1}{\sqrt{14}b_{7}}\subcgs{3_{1}}{3_{2}}{3_{2}}
\end{align*}
\efl 
\item $\tsprodx{6}{7}\to\rep{6}$: \bfl 
\begin{align*}
\rep{1}_{0} & =\subcgs{3_{1}}{3_{1}}{1_{0}}
\end{align*}
\begin{align*}
\rep{2} & =-\sqrt{\frac{3}{7}}\subcgs{2}{1_{1}}{2}+\frac{1}{\sqrt{7}}\subcgs{3_{1}}{3_{1}}{2}\\
 & -\sqrt{\frac{3}{7}}\subcgs{3_{1}}{3_{2}}{2}
\end{align*}
\begin{align*}
\rep{3}_{1} & =-\frac{1}{\sqrt{3}}\subcgs{1_{0}}{3_{1}}{3_{1}}-\sqrt{\frac{2}{21}}\subcgs{2}{3_{1}}{3_{1}}\\
 & -\sqrt{\frac{2}{7}}\subcgs{2}{3_{2}}{3_{1}}-\sqrt{\frac{2}{7}}\subcgs{3_{1}}{3_{2}}{3_{1}}
\end{align*}
\efl 
\item $\tsprodx{6}{7}\to\rep{7}^{(1)}$: \bfl 
\begin{align*}
\rep{1}_{1} & =\subcgs{1_{0}}{1_{1}}{1_{1}}
\end{align*}
\begin{align*}
\rep{3}_{1} & =-\frac{1}{6}\subcgs{1_{0}}{3_{1}}{3_{1}}+\frac{\sqrt{\frac{7}{2}}}{6}\subcgs{2}{3_{1}}{3_{1}}\\
 & +\frac{\sqrt{\frac{7}{6}}}{2}\subcgs{2}{3_{2}}{3_{1}}+\frac{\sqrt{\frac{7}{6}}}{2}\subcgs{3_{1}}{3_{1}}{3_{1}}\\
 & -\frac{\sqrt{\frac{7}{6}}}{2}\subcgs{3_{1}}{3_{2}}{3_{1}}
\end{align*}
\begin{align*}
\rep{3}_{2} & =-\frac{1}{6}\subcgs{1_{0}}{3_{2}}{3_{2}}+\frac{\sqrt{\frac{7}{6}}}{2}\subcgs{2}{3_{1}}{3_{2}}\\
 & +\frac{\sqrt{\frac{7}{2}}}{6}\subcgs{2}{3_{2}}{3_{2}}+\frac{\sqrt{\frac{7}{6}}}{2}\subcgs{3_{1}}{3_{1}}{3_{2}}\\
 & -\frac{\sqrt{\frac{7}{6}}}{2}\subcgs{3_{1}}{3_{2}}{3_{2}}
\end{align*}
\efl 
\item $\tsprodx{6}{7}\to\rep{7}^{(2)}$: \bfl 
\begin{align*}
\rep{1}_{1} & =\subcgs{3_{1}}{3_{2}}{1_{1}}
\end{align*}
\begin{align*}
\rep{3}_{1} & =-\frac{\sqrt{7}}{6}\subcgs{1_{0}}{3_{1}}{3_{1}}+\frac{7}{6\sqrt{2}}\subcgs{2}{3_{1}}{3_{1}}\\
 & -\frac{1}{2\sqrt{6}}\subcgs{2}{3_{2}}{3_{1}}-\frac{1}{2\sqrt{6}}\subcgs{3_{1}}{3_{1}}{3_{1}}\\
 & +\frac{1}{2\sqrt{6}}\subcgs{3_{1}}{3_{2}}{3_{1}}
\end{align*}
\begin{align*}
\rep{3}_{2} & =\frac{\sqrt{7}}{6}\subcgs{1_{0}}{3_{2}}{3_{2}}-\frac{1}{2\sqrt{6}}\subcgs{2}{3_{1}}{3_{2}}\\
 & -\frac{1}{6\sqrt{2}}\subcgs{2}{3_{2}}{3_{2}}+\frac{1}{\sqrt{3}}\subcgs{3_{1}}{1_{1}}{3_{2}}\\
 & -\frac{1}{2\sqrt{6}}\subcgs{3_{1}}{3_{1}}{3_{2}}-\frac{\sqrt{\frac{3}{2}}}{2}\subcgs{3_{1}}{3_{2}}{3_{2}}
\end{align*}
\efl 
\item $\tsprodx{6}{7}\to\rep{8}^{(1)}$: \bfl 
\begin{align*}
\rep{2} & =\frac{1}{\sqrt{2}}\subcgs{2}{1_{1}}{2}-\frac{1}{\sqrt{2}}\subcgs{3_{1}}{3_{2}}{2}
\end{align*}
\begin{align*}
\rep{3}_{1} & =\frac{1}{\sqrt{6}}\subcgs{2}{3_{2}}{3_{1}}-\sqrt{\frac{2}{3}}\subcgs{3_{1}}{3_{1}}{3_{1}}\\
 & -\frac{1}{\sqrt{6}}\subcgs{3_{1}}{3_{2}}{3_{1}}
\end{align*}
\begin{align*}
\rep{3}_{2} & =-\frac{2}{3}\subcgs{1_{0}}{3_{2}}{3_{2}}-2\sqrt{\frac{2}{21}}\subcgs{2}{3_{1}}{3_{2}}\\
 & -\frac{1}{3\sqrt{14}}\subcgs{2}{3_{2}}{3_{2}}+\frac{1}{\sqrt{21}}\subcgs{3_{1}}{1_{1}}{3_{2}}\\
 & +\sqrt{\frac{2}{21}}\subcgs{3_{1}}{3_{1}}{3_{2}}-\frac{1}{\sqrt{42}}\subcgs{3_{1}}{3_{2}}{3_{2}}
\end{align*}
\efl 
\item $\tsprodx{6}{7}\to\rep{8}^{(2)}$: \bfl 
\begin{align*}
\rep{2} & =\frac{1}{\sqrt{14}}\subcgs{2}{1_{1}}{2}+\sqrt{\frac{6}{7}}\subcgs{3_{1}}{3_{1}}{2}\\
 & +\frac{1}{\sqrt{14}}\subcgs{3_{1}}{3_{2}}{2}
\end{align*}
\begin{align*}
\rep{3}_{1} & =\frac{2}{3}\subcgs{1_{0}}{3_{1}}{3_{1}}+\frac{2\sqrt{\frac{2}{7}}}{3}\subcgs{2}{3_{1}}{3_{1}}\\
 & -\sqrt{\frac{3}{14}}\subcgs{2}{3_{2}}{3_{1}}-\sqrt{\frac{3}{14}}\subcgs{3_{1}}{3_{2}}{3_{1}}
\end{align*}
\begin{align*}
\rep{3}_{2} & =-\frac{1}{\sqrt{2}}\subcgs{2}{3_{2}}{3_{2}}-\frac{1}{\sqrt{3}}\subcgs{3_{1}}{1_{1}}{3_{2}}\\
 & -\frac{1}{\sqrt{6}}\subcgs{3_{1}}{3_{2}}{3_{2}}
\end{align*}
\efl 
\end{itemize}

\paragraph*{\cgEqFontsize $\tsprod{6}{8}\to\rep{3}+\rep{\bar{3}}+\rep{6}^{(1)}+\rep{6}^{(2)}+\rep{7}^{(1)}+\rep{7}^{(2)}+\rep{8}^{(1)}+\rep{8}^{(2)}$}
\begin{itemize}
\item $\tsprodx{6}{8}\to\rep{3}$: \bfl 
\begin{align*}
\rep{3}_{2} & =\frac{1}{\sqrt{6}}\subcgs{1_{0}}{3_{2}}{3_{2}}-\frac{1}{2b_{7}^{2}}\subcgs{2}{3_{1}}{3_{2}}\\
 & +\frac{i}{2\sqrt{3}b_{7}^{2}}\subcgs{2}{3_{2}}{3_{2}}-\frac{1}{2b_{7}}\subcgs{3_{1}}{2}{3_{2}}\\
 & +\frac{1}{2\sqrt{2}b_{7}}\subcgs{3_{1}}{3_{1}}{3_{2}}-\frac{i}{2\sqrt{2}b_{7}}\subcgs{3_{1}}{3_{2}}{3_{2}}
\end{align*}
\efl 
\item $\tsprodx{6}{8}\to\rep{\bar{3}}$: \bfl 
\begin{align*}
\rep{3}_{2} & =\frac{1}{\sqrt{6}}\subcgs{1_{0}}{3_{2}}{3_{2}}-\frac{b_{7}^{2}}{2}\subcgs{2}{3_{1}}{3_{2}}\\
 & -\frac{ib_{7}^{2}}{2\sqrt{3}}\subcgs{2}{3_{2}}{3_{2}}-\frac{b_{7}}{2}\subcgs{3_{1}}{2}{3_{2}}\\
 & +\frac{b_{7}}{2\sqrt{2}}\subcgs{3_{1}}{3_{1}}{3_{2}}+\frac{ib_{7}}{2\sqrt{2}}\subcgs{3_{1}}{3_{2}}{3_{2}}
\end{align*}
\efl 
\item $\tsprodx{6}{8}\to\rep{6}^{(1)}$: \bfl 
\begin{align*}
\rep{1}_{0} & =\subcgs{2}{2}{1_{0}}
\end{align*}
\begin{align*}
\rep{2} & =-\frac{1}{2\sqrt{2}}\subcgs{1_{0}}{2}{2}+\frac{1}{\sqrt{14}}\subcgs{2}{2}{2}\\
 & +\frac{3\sqrt{\frac{3}{7}}}{4}\subcgs{3_{1}}{3_{1}}{2}+\frac{3}{4}\subcgs{3_{1}}{3_{2}}{2}
\end{align*}
\begin{align*}
\rep{3}_{1} & =\frac{1}{2}\subcgs{1_{0}}{3_{1}}{3_{1}}-\sqrt{\frac{2}{7}}\subcgs{2}{3_{1}}{3_{1}}\\
 & -\frac{1}{2\sqrt{7}}\subcgs{3_{1}}{2}{3_{1}}+\frac{\sqrt{\frac{3}{14}}}{2}\subcgs{3_{1}}{3_{1}}{3_{1}}\\
 & -\frac{\sqrt{\frac{3}{2}}}{2}\subcgs{3_{1}}{3_{2}}{3_{1}}
\end{align*}
\efl 
\item $\tsprodx{6}{8}\to\rep{6}^{(2)}$: \bfl 
\begin{align*}
\rep{1}_{0} & =\subcgs{3_{1}}{3_{1}}{1_{0}}
\end{align*}
\begin{align*}
\rep{2} & =\frac{\sqrt{\frac{3}{2}}}{2}\subcgs{1_{0}}{2}{2}+\sqrt{\frac{3}{14}}\subcgs{2}{2}{2}\\
 & -\frac{5}{4\sqrt{7}}\subcgs{3_{1}}{3_{1}}{2}+\frac{\sqrt{3}}{4}\subcgs{3_{1}}{3_{2}}{2}
\end{align*}
\begin{align*}
\rep{3}_{1} & =\frac{1}{2\sqrt{3}}\subcgs{1_{0}}{3_{1}}{3_{1}}+\frac{1}{\sqrt{42}}\subcgs{2}{3_{1}}{3_{1}}\\
 & -\frac{1}{\sqrt{2}}\subcgs{2}{3_{2}}{3_{1}}-\frac{\sqrt{\frac{3}{7}}}{2}\subcgs{3_{1}}{2}{3_{1}}\\
 & +\frac{3}{2\sqrt{14}}\subcgs{3_{1}}{3_{1}}{3_{1}}+\frac{1}{2\sqrt{2}}\subcgs{3_{1}}{3_{2}}{3_{1}}
\end{align*}
\efl 
\item $\tsprodx{6}{8}\to\rep{7}^{(1)}$: \bfl 
\begin{align*}
\rep{1}_{1} & =\subcgs{2}{2}{1_{1}}
\end{align*}
\begin{align*}
\rep{3}_{1} & =\frac{\sqrt{7}}{12}\subcgs{1_{0}}{3_{1}}{3_{1}}+\frac{1}{6\sqrt{2}}\subcgs{2}{3_{1}}{3_{1}}\\
 & -\frac{\sqrt{\frac{7}{6}}}{2}\subcgs{2}{3_{2}}{3_{1}}+\frac{1}{4}\subcgs{3_{1}}{2}{3_{1}}\\
 & -\frac{7}{4\sqrt{6}}\subcgs{3_{1}}{3_{1}}{3_{1}}-\frac{\sqrt{\frac{7}{6}}}{4}\subcgs{3_{1}}{3_{2}}{3_{1}}
\end{align*}
\begin{align*}
\rep{3}_{2} & =-\frac{7}{12}\subcgs{1_{0}}{3_{2}}{3_{2}}+\frac{1}{2\sqrt{6}}\subcgs{2}{3_{1}}{3_{2}}\\
 & -\frac{\sqrt{\frac{7}{2}}}{6}\subcgs{2}{3_{2}}{3_{2}}+\frac{\sqrt{3}}{4}\subcgs{3_{1}}{2}{3_{2}}\\
 & +\frac{5}{4\sqrt{6}}\subcgs{3_{1}}{3_{1}}{3_{2}}-\frac{\sqrt{\frac{7}{6}}}{4}\subcgs{3_{1}}{3_{2}}{3_{2}}
\end{align*}
\efl 
\item $\tsprodx{6}{8}\to\rep{7}^{(2)}$: \bfl 
\begin{align*}
\rep{1}_{1} & =\subcgs{3_{1}}{3_{2}}{1_{1}}
\end{align*}
\begin{align*}
\rep{3}_{1} & =-\frac{7}{12}\subcgs{1_{0}}{3_{1}}{3_{1}}-\frac{\sqrt{\frac{7}{2}}}{6}\subcgs{2}{3_{1}}{3_{1}}\\
 & -\frac{1}{2\sqrt{6}}\subcgs{2}{3_{2}}{3_{1}}-\frac{\sqrt{7}}{4}\subcgs{3_{1}}{2}{3_{1}}\\
 & -\frac{\sqrt{\frac{7}{6}}}{4}\subcgs{3_{1}}{3_{1}}{3_{1}}-\frac{1}{4\sqrt{6}}\subcgs{3_{1}}{3_{2}}{3_{1}}
\end{align*}
\begin{align*}
\rep{3}_{2} & =-\frac{\sqrt{7}}{12}\subcgs{1_{0}}{3_{2}}{3_{2}}+\frac{\sqrt{\frac{7}{6}}}{2}\subcgs{2}{3_{1}}{3_{2}}\\
 & +\frac{5}{6\sqrt{2}}\subcgs{2}{3_{2}}{3_{2}}+\frac{\sqrt{\frac{7}{3}}}{4}\subcgs{3_{1}}{2}{3_{2}}\\
 & -\frac{\sqrt{\frac{7}{6}}}{4}\subcgs{3_{1}}{3_{1}}{3_{2}}+\frac{\sqrt{\frac{3}{2}}}{4}\subcgs{3_{1}}{3_{2}}{3_{2}}
\end{align*}
\efl 
\item $\tsprodx{6}{8}\to\rep{8}^{(1)}$: \bfl 
\begin{align*}
\rep{2} & =\frac{1}{2}\subcgs{1_{0}}{2}{2}-\frac{2}{\sqrt{7}}\subcgs{2}{2}{2}\\
 & +\frac{\sqrt{\frac{3}{14}}}{2}\subcgs{3_{1}}{3_{1}}{2}+\frac{1}{2\sqrt{2}}\subcgs{3_{1}}{3_{2}}{2}
\end{align*}
\begin{align*}
\rep{3}_{1} & =\frac{1}{6}\subcgs{1_{0}}{3_{1}}{3_{1}}-\frac{4\sqrt{\frac{2}{7}}}{3}\subcgs{2}{3_{1}}{3_{1}}\\
 & +\frac{1}{2\sqrt{7}}\subcgs{3_{1}}{2}{3_{1}}-\frac{\sqrt{\frac{3}{14}}}{2}\subcgs{3_{1}}{3_{1}}{3_{1}}\\
 & +\frac{\sqrt{\frac{3}{2}}}{2}\subcgs{3_{1}}{3_{2}}{3_{1}}
\end{align*}
\begin{align*}
\rep{3}_{2} & =-\frac{1}{2}\subcgs{1_{0}}{3_{2}}{3_{2}}-\frac{1}{2\sqrt{3}}\subcgs{3_{1}}{2}{3_{2}}\\
 & -\frac{\sqrt{\frac{3}{2}}}{2}\subcgs{3_{1}}{3_{1}}{3_{2}}-\frac{\sqrt{\frac{7}{6}}}{2}\subcgs{3_{1}}{3_{2}}{3_{2}}
\end{align*}
\efl 
\item $\tsprodx{6}{8}\to\rep{8}^{(2)}$: \bfl 
\begin{align*}
\rep{2} & =-\frac{1}{2}\subcgs{1_{0}}{2}{2}-\frac{1}{\sqrt{7}}\subcgs{2}{2}{2}\\
 & -\frac{3\sqrt{\frac{3}{14}}}{2}\subcgs{3_{1}}{3_{1}}{2}+\frac{1}{2\sqrt{2}}\subcgs{3_{1}}{3_{2}}{2}
\end{align*}
\begin{align*}
\rep{3}_{1} & =\frac{1}{2}\subcgs{1_{0}}{3_{1}}{3_{1}}+\frac{1}{\sqrt{14}}\subcgs{2}{3_{1}}{3_{1}}\\
 & +\frac{1}{\sqrt{6}}\subcgs{2}{3_{2}}{3_{1}}-\frac{3}{2\sqrt{7}}\subcgs{3_{1}}{2}{3_{1}}\\
 & -\frac{5}{2\sqrt{42}}\subcgs{3_{1}}{3_{1}}{3_{1}}+\frac{1}{2\sqrt{6}}\subcgs{3_{1}}{3_{2}}{3_{1}}
\end{align*}
\begin{align*}
\rep{3}_{2} & =-\frac{1}{6}\subcgs{1_{0}}{3_{2}}{3_{2}}+\frac{1}{\sqrt{6}}\subcgs{2}{3_{1}}{3_{2}}\\
 & -\frac{\sqrt{\frac{7}{2}}}{3}\subcgs{2}{3_{2}}{3_{2}}-\frac{1}{2\sqrt{3}}\subcgs{3_{1}}{2}{3_{2}}\\
 & -\frac{1}{2\sqrt{6}}\subcgs{3_{1}}{3_{1}}{3_{2}}+\frac{\sqrt{\frac{7}{6}}}{2}\subcgs{3_{1}}{3_{2}}{3_{2}}
\end{align*}
\efl 
\end{itemize}

\paragraph*{\cgEqFontsize $\tsprod{7}{7}\to\left(\rep{1}+\rep{6}^{(1)}+\rep{6}^{(2)}+\rep{7}+\rep{8}\right)_{s}+\left(\rep{3}+\rep{\bar{3}}+\rep{7}+\rep{8}\right)_{a}$}
\begin{itemize}
\item $\tsprodx{7}{7}\to\rep{1}_{s}$: \bfl 
\begin{align*}
\rep{1}_{0} & =\frac{1}{\sqrt{7}}\subcgs{1_{1}}{1_{1}}{1_{0}}+\sqrt{\frac{3}{7}}\subcgs{3_{1}}{3_{1}}{1_{0}}\\
 & +\sqrt{\frac{3}{7}}\subcgs{3_{2}}{3_{2}}{1_{0}}
\end{align*}
\efl 
\item $\tsprodx{7}{7}\to\rep{6}_{s}^{(1)}$: \bfl 
\begin{align*}
\rep{1}_{0} & =\frac{\sqrt{3}}{2}\subcgs{1_{1}}{1_{1}}{1_{0}}-\frac{1}{2}\subcgs{3_{2}}{3_{2}}{1_{0}}
\end{align*}
\begin{align*}
\rep{2} & =-\sqrt{\frac{3}{7}}\subcgs{3_{1}}{3_{2}}{2}-\sqrt{\frac{3}{7}}\subcgs{3_{2}}{3_{1}}{2}\\
 & +\frac{1}{\sqrt{7}}\subcgs{3_{2}}{3_{2}}{2}
\end{align*}
\begin{align*}
\rep{3}_{1} & =-\frac{1}{2\sqrt{7}}\left(\subcgs{1_{1}}{3_{2}}{3_{1}}+\subcgs{3_{2}}{1_{1}}{3_{1}}\right)\\
 & +\sqrt{\frac{2}{7}}\subcgs{3_{1}}{3_{1}}{3_{1}}+\sqrt{\frac{2}{7}}\subcgs{3_{1}}{3_{2}}{3_{1}}\\
 & -\sqrt{\frac{2}{7}}\subcgs{3_{2}}{3_{1}}{3_{1}}-\frac{1}{\sqrt{14}}\subcgs{3_{2}}{3_{2}}{3_{1}}
\end{align*}
\efl 
\item $\tsprodx{7}{7}\to\rep{6}_{s}^{(2)}$: \bfl 
\begin{align*}
\rep{1}_{0} & =-\frac{\sqrt{\frac{3}{7}}}{2}\subcgs{1_{1}}{1_{1}}{1_{0}}+\frac{2}{\sqrt{7}}\subcgs{3_{1}}{3_{1}}{1_{0}}\\
 & -\frac{3}{2\sqrt{7}}\subcgs{3_{2}}{3_{2}}{1_{0}}
\end{align*}
\begin{align*}
\rep{2} & =-\subcgs{3_{1}}{3_{1}}{2}
\end{align*}
\begin{align*}
\rep{3}_{1} & =-\frac{1}{2}\left(\subcgs{1_{1}}{3_{2}}{3_{1}}+\subcgs{3_{2}}{1_{1}}{3_{1}}\right)\\
 & +\frac{1}{\sqrt{2}}\subcgs{3_{2}}{3_{2}}{3_{1}}
\end{align*}
\efl 
\item $\tsprodx{7}{7}\to\rep{7}_{s}$: \bfl 
\begin{align*}
\rep{1}_{1} & =\frac{1}{\sqrt{2}}\subcgs{3_{1}}{3_{2}}{1_{1}}+\frac{1}{\sqrt{2}}\subcgs{3_{2}}{3_{1}}{1_{1}}
\end{align*}
\begin{align*}
\rep{3}_{1} & =\frac{1}{\sqrt{6}}\left(\subcgs{1_{1}}{3_{2}}{3_{1}}+\subcgs{3_{2}}{1_{1}}{3_{1}}\right)\\
 & +\frac{1}{\sqrt{3}}\subcgs{3_{1}}{3_{1}}{3_{1}}+\frac{1}{\sqrt{3}}\subcgs{3_{2}}{3_{2}}{3_{1}}
\end{align*}
\begin{align*}
\rep{3}_{2} & =\frac{1}{\sqrt{6}}\left(\subcgs{1_{1}}{3_{1}}{3_{2}}+\subcgs{3_{1}}{1_{1}}{3_{2}}\right)\\
 & +\frac{1}{\sqrt{3}}\subcgs{3_{1}}{3_{2}}{3_{2}}+\frac{1}{\sqrt{3}}\subcgs{3_{2}}{3_{1}}{3_{2}}
\end{align*}
\efl 
\item $\tsprodx{7}{7}\to\rep{8}_{s}$: \bfl 
\begin{align*}
\rep{2} & =\frac{1}{\sqrt{14}}\subcgs{3_{1}}{3_{2}}{2}+\frac{1}{\sqrt{14}}\subcgs{3_{2}}{3_{1}}{2}\\
 & +\sqrt{\frac{6}{7}}\subcgs{3_{2}}{3_{2}}{2}
\end{align*}
\begin{align*}
\rep{3}_{1} & =\frac{1}{\sqrt{21}}\left(\subcgs{1_{1}}{3_{2}}{3_{1}}+\subcgs{3_{2}}{1_{1}}{3_{1}}\right)\\
 & -2\sqrt{\frac{2}{21}}\subcgs{3_{1}}{3_{1}}{3_{1}}+\sqrt{\frac{3}{14}}\subcgs{3_{1}}{3_{2}}{3_{1}}\\
 & -\sqrt{\frac{3}{14}}\subcgs{3_{2}}{3_{1}}{3_{1}}+\sqrt{\frac{2}{21}}\subcgs{3_{2}}{3_{2}}{3_{1}}
\end{align*}
\begin{align*}
\rep{3}_{2} & =-\frac{1}{\sqrt{3}}\left(\subcgs{1_{1}}{3_{1}}{3_{2}}+\subcgs{3_{1}}{1_{1}}{3_{2}}\right)\\
 & +\frac{1}{\sqrt{6}}\subcgs{3_{1}}{3_{2}}{3_{2}}+\frac{1}{\sqrt{6}}\subcgs{3_{2}}{3_{1}}{3_{2}}
\end{align*}
\efl 
\item $\tsprodx{7}{7}\to\rep{3}_{a}$: \bfl 
\begin{align*}
\rep{3}_{2} & =\frac{1}{\sqrt{7}}\left(\subcgs{1_{1}}{3_{1}}{3_{2}}-\subcgs{3_{1}}{1_{1}}{3_{2}}\right)\\
 & +\frac{b_{7}}{\sqrt{7}}\subcgs{3_{1}}{3_{1}}{3_{2}}-\frac{b_{7}}{\sqrt{7}}\subcgs{3_{1}}{3_{2}}{3_{2}}\\
 & +\frac{b_{7}}{\sqrt{7}}\subcgs{3_{2}}{3_{1}}{3_{2}}-\sqrt{\frac{2}{7}}b_{7}^{2}\subcgs{3_{2}}{3_{2}}{3_{2}}
\end{align*}
\efl 
\item $\tsprodx{7}{7}\to\rep{\bar{3}}_{a}$: \bfl 
\begin{align*}
\rep{3}_{2} & =\frac{1}{\sqrt{7}}\left(\subcgs{1_{1}}{3_{1}}{3_{2}}-\subcgs{3_{1}}{1_{1}}{3_{2}}\right)\\
 & +\frac{1}{\sqrt{7}b_{7}}\subcgs{3_{1}}{3_{1}}{3_{2}}-\frac{1}{\sqrt{7}b_{7}}\subcgs{3_{1}}{3_{2}}{3_{2}}\\
 & +\frac{1}{\sqrt{7}b_{7}}\subcgs{3_{2}}{3_{1}}{3_{2}}-\frac{\sqrt{\frac{2}{7}}}{b_{7}^{2}}\subcgs{3_{2}}{3_{2}}{3_{2}}
\end{align*}
\efl 
\item $\tsprodx{7}{7}\to\rep{7}_{a}$: \bfl 
\begin{align*}
\rep{1}_{1} & =\frac{1}{\sqrt{2}}\subcgs{3_{1}}{3_{2}}{1_{1}}-\frac{1}{\sqrt{2}}\subcgs{3_{2}}{3_{1}}{1_{1}}
\end{align*}
\begin{align*}
\rep{3}_{1} & =-\frac{1}{\sqrt{6}}\left(\subcgs{1_{1}}{3_{2}}{3_{1}}-\subcgs{3_{2}}{1_{1}}{3_{1}}\right)\\
 & +\frac{1}{\sqrt{3}}\subcgs{3_{1}}{3_{2}}{3_{1}}+\frac{1}{\sqrt{3}}\subcgs{3_{2}}{3_{1}}{3_{1}}
\end{align*}
\begin{align*}
\rep{3}_{2} & =\frac{1}{\sqrt{6}}\left(\subcgs{1_{1}}{3_{1}}{3_{2}}-\subcgs{3_{1}}{1_{1}}{3_{2}}\right)\\
 & +\frac{1}{\sqrt{3}}\subcgs{3_{1}}{3_{1}}{3_{2}}-\frac{1}{\sqrt{3}}\subcgs{3_{2}}{3_{2}}{3_{2}}
\end{align*}
\efl 
\item $\tsprodx{7}{7}\to\rep{8}_{a}$: \bfl 
\begin{align*}
\rep{2} & =\frac{1}{\sqrt{2}}\subcgs{3_{1}}{3_{2}}{2}-\frac{1}{\sqrt{2}}\subcgs{3_{2}}{3_{1}}{2}
\end{align*}
\begin{align*}
\rep{3}_{1} & =-\frac{1}{\sqrt{3}}\left(\subcgs{1_{1}}{3_{2}}{3_{1}}-\subcgs{3_{2}}{1_{1}}{3_{1}}\right)\\
 & -\frac{1}{\sqrt{6}}\subcgs{3_{1}}{3_{2}}{3_{1}}-\frac{1}{\sqrt{6}}\subcgs{3_{2}}{3_{1}}{3_{1}}
\end{align*}
\begin{align*}
\rep{3}_{2} & =-\frac{1}{\sqrt{21}}\left(\subcgs{1_{1}}{3_{1}}{3_{2}}-\subcgs{3_{1}}{1_{1}}{3_{2}}\right)\\
 & +2\sqrt{\frac{2}{21}}\subcgs{3_{1}}{3_{1}}{3_{2}}+\sqrt{\frac{3}{14}}\subcgs{3_{1}}{3_{2}}{3_{2}}\\
 & -\sqrt{\frac{3}{14}}\subcgs{3_{2}}{3_{1}}{3_{2}}+\sqrt{\frac{2}{21}}\subcgs{3_{2}}{3_{2}}{3_{2}}
\end{align*}
\efl 
\end{itemize}

\paragraph*{\cgEqFontsize $\tsprod{7}{8}\to\rep{3}+\rep{\bar{3}}+\rep{6}^{(1)}+\rep{6}^{(2)}+\rep{7}^{(1)}+\rep{7}^{(2)}+\rep{8}^{(1)}+\rep{8}^{(2)}+\rep{8}^{(3)}$}
\begin{itemize}
\item $\tsprodx{7}{8}\to\rep{3}$: \bfl 
\begin{align*}
\rep{3}_{2} & =\frac{1}{\sqrt{7}}\subcgs{1_{1}}{3_{1}}{3_{2}}-\frac{1}{\sqrt{7}b_{7}^{4}}\subcgs{3_{1}}{2}{3_{2}}\\
 & +\frac{1}{2\sqrt{7}b_{7}^{3}}\subcgs{3_{1}}{3_{1}}{3_{2}}+\frac{1}{2b_{7}^{3}}\subcgs{3_{1}}{3_{2}}{3_{2}}\\
 & +\frac{\sqrt{\frac{3}{7}}}{2b_{7}^{2}}\subcgs{3_{2}}{2}{3_{2}}-\frac{\sqrt{7}+35i}{56\sqrt{2}}\subcgs{3_{2}}{3_{1}}{3_{2}}\\
 & -\frac{1}{2\sqrt{2}b_{7}^{2}}\subcgs{3_{2}}{3_{2}}{3_{2}}
\end{align*}
\efl 
\item $\tsprodx{7}{8}\to\rep{\bar{3}}$: \bfl 
\begin{align*}
\rep{3}_{2} & =\frac{1}{\sqrt{7}}\subcgs{1_{1}}{3_{1}}{3_{2}}-\frac{b_{7}^{4}}{\sqrt{7}}\subcgs{3_{1}}{2}{3_{2}}\\
 & +\frac{b_{7}^{3}}{2\sqrt{7}}\subcgs{3_{1}}{3_{1}}{3_{2}}+\frac{b_{7}^{3}}{2}\subcgs{3_{1}}{3_{2}}{3_{2}}\\
 & +\frac{1}{2}\sqrt{\frac{3}{7}}b_{7}^{2}\subcgs{3_{2}}{2}{3_{2}}-\frac{\sqrt{7}-35i}{56\sqrt{2}}\subcgs{3_{2}}{3_{1}}{3_{2}}\\
 & -\frac{b_{7}^{2}}{2\sqrt{2}}\subcgs{3_{2}}{3_{2}}{3_{2}}
\end{align*}
\efl 
\item $\tsprodx{7}{8}\to\rep{6}^{(1)}$: \bfl 
\begin{align*}
\rep{1}_{0} & =\subcgs{3_{1}}{3_{1}}{1_{0}}
\end{align*}
\begin{align*}
\rep{2} & =-\frac{\sqrt{\frac{3}{14}}}{2}\subcgs{1_{1}}{2}{2}+\frac{1}{\sqrt{7}}\subcgs{3_{1}}{3_{1}}{2}\\
 & +\frac{3\sqrt{\frac{3}{7}}}{4}\subcgs{3_{2}}{3_{1}}{2}-\frac{3}{4}\subcgs{3_{2}}{3_{2}}{2}
\end{align*}
\begin{align*}
\rep{3}_{1} & =-\frac{1}{2}\subcgs{1_{1}}{3_{2}}{3_{1}}+\sqrt{\frac{3}{7}}\subcgs{3_{1}}{2}{3_{1}}\\
 & -\frac{1}{2\sqrt{7}}\subcgs{3_{2}}{2}{3_{1}}-\frac{3}{2\sqrt{14}}\subcgs{3_{2}}{3_{1}}{3_{1}}\\
 & -\frac{1}{2\sqrt{2}}\subcgs{3_{2}}{3_{2}}{3_{1}}
\end{align*}
\efl 
\item $\tsprodx{7}{8}\to\rep{6}^{(2)}$: \bfl 
\begin{align*}
\rep{1}_{0} & =\subcgs{3_{2}}{3_{2}}{1_{0}}
\end{align*}
\begin{align*}
\rep{2} & =\frac{\sqrt{\frac{3}{2}}}{2}\subcgs{1_{1}}{2}{2}-\sqrt{\frac{3}{7}}\subcgs{3_{1}}{3_{2}}{2}\\
 & +\frac{\sqrt{3}}{4}\subcgs{3_{2}}{3_{1}}{2}+\frac{1}{4\sqrt{7}}\subcgs{3_{2}}{3_{2}}{2}
\end{align*}
\begin{align*}
\rep{3}_{1} & =-\frac{1}{2\sqrt{7}}\subcgs{1_{1}}{3_{2}}{3_{1}}+\frac{1}{\sqrt{2}}\subcgs{3_{1}}{3_{1}}{3_{1}}\\
 & -\frac{1}{\sqrt{14}}\subcgs{3_{1}}{3_{2}}{3_{1}}-\frac{1}{2}\subcgs{3_{2}}{2}{3_{1}}\\
 & +\frac{1}{2\sqrt{2}}\subcgs{3_{2}}{3_{1}}{3_{1}}+\frac{1}{2\sqrt{14}}\subcgs{3_{2}}{3_{2}}{3_{1}}
\end{align*}
\efl 
\item $\tsprodx{7}{8}\to\rep{7}^{(1)}$: \bfl 
\begin{align*}
\rep{1}_{1} & =\subcgs{3_{1}}{3_{2}}{1_{1}}
\end{align*}
\begin{align*}
\rep{3}_{1} & =\frac{\sqrt{3}}{4}\subcgs{1_{1}}{3_{2}}{3_{1}}+\frac{\sqrt{\frac{7}{6}}}{2}\subcgs{3_{1}}{3_{1}}{3_{1}}\\
 & -\frac{1}{2\sqrt{6}}\subcgs{3_{1}}{3_{2}}{3_{1}}+\frac{\sqrt{\frac{7}{3}}}{4}\subcgs{3_{2}}{2}{3_{1}}\\
 & -\frac{\sqrt{\frac{7}{6}}}{4}\subcgs{3_{2}}{3_{1}}{3_{1}}-\frac{5}{4\sqrt{6}}\subcgs{3_{2}}{3_{2}}{3_{1}}
\end{align*}
\begin{align*}
\rep{3}_{2} & =-\frac{\sqrt{\frac{7}{3}}}{4}\subcgs{1_{1}}{3_{1}}{3_{2}}+\frac{\sqrt{\frac{7}{6}}}{2}\subcgs{3_{1}}{3_{1}}{3_{2}}\\
 & -\frac{1}{2\sqrt{6}}\subcgs{3_{1}}{3_{2}}{3_{2}}-\frac{\sqrt{7}}{4}\subcgs{3_{2}}{2}{3_{2}}\\
 & -\frac{\sqrt{\frac{7}{6}}}{4}\subcgs{3_{2}}{3_{1}}{3_{2}}-\frac{1}{4\sqrt{6}}\subcgs{3_{2}}{3_{2}}{3_{2}}
\end{align*}
\efl 
\item $\tsprodx{7}{8}\to\rep{7}^{(2)}$: \bfl 
\begin{align*}
\rep{1}_{1} & =\subcgs{3_{2}}{3_{1}}{1_{1}}
\end{align*}
\begin{align*}
\rep{3}_{1} & =-\frac{\sqrt{\frac{7}{3}}}{4}\subcgs{1_{1}}{3_{2}}{3_{1}}-\frac{1}{2\sqrt{6}}\subcgs{3_{1}}{3_{1}}{3_{1}}\\
 & -\frac{\sqrt{\frac{7}{6}}}{2}\subcgs{3_{1}}{3_{2}}{3_{1}}+\frac{\sqrt{3}}{4}\subcgs{3_{2}}{2}{3_{1}}\\
 & +\frac{5}{4\sqrt{6}}\subcgs{3_{2}}{3_{1}}{3_{1}}-\frac{\sqrt{\frac{7}{6}}}{4}\subcgs{3_{2}}{3_{2}}{3_{1}}
\end{align*}
\begin{align*}
\rep{3}_{2} & =-\frac{\sqrt{3}}{4}\subcgs{1_{1}}{3_{1}}{3_{2}}-\frac{1}{\sqrt{3}}\subcgs{3_{1}}{2}{3_{2}}\\
 & +\frac{1}{2\sqrt{6}}\subcgs{3_{1}}{3_{1}}{3_{2}}+\frac{\sqrt{\frac{7}{6}}}{2}\subcgs{3_{1}}{3_{2}}{3_{2}}\\
 & +\frac{1}{4}\subcgs{3_{2}}{2}{3_{2}}+\frac{1}{4\sqrt{6}}\subcgs{3_{2}}{3_{1}}{3_{2}}\\
 & -\frac{\sqrt{\frac{7}{6}}}{4}\subcgs{3_{2}}{3_{2}}{3_{2}}
\end{align*}
\efl 
\item $\tsprodx{7}{8}\to\rep{8}^{(1)}$: \bfl 
\begin{align*}
\rep{2} & =\frac{3}{4\sqrt{2}}\subcgs{1_{1}}{2}{2}-\frac{5}{8}\subcgs{3_{2}}{3_{1}}{2}\\
 & -\frac{\sqrt{21}}{8}\subcgs{3_{2}}{3_{2}}{2}
\end{align*}
\begin{align*}
\rep{3}_{1} & =-\frac{\sqrt{\frac{7}{6}}}{4}\subcgs{1_{1}}{3_{2}}{3_{1}}+\frac{1}{2\sqrt{3}}\subcgs{3_{1}}{3_{1}}{3_{1}}\\
 & +\frac{\sqrt{\frac{7}{3}}}{2}\subcgs{3_{1}}{3_{2}}{3_{1}}+\frac{\sqrt{\frac{3}{2}}}{4}\subcgs{3_{2}}{2}{3_{1}}\\
 & +\frac{5}{8\sqrt{3}}\subcgs{3_{2}}{3_{1}}{3_{1}}-\frac{\sqrt{\frac{7}{3}}}{8}\subcgs{3_{2}}{3_{2}}{3_{1}}
\end{align*}
\begin{align*}
\rep{3}_{2} & =-\frac{3\sqrt{\frac{3}{14}}}{4}\subcgs{1_{1}}{3_{1}}{3_{2}}+2\sqrt{\frac{2}{21}}\subcgs{3_{1}}{2}{3_{2}}\\
 & +\frac{1}{2\sqrt{21}}\subcgs{3_{1}}{3_{1}}{3_{2}}+\frac{1}{2\sqrt{3}}\subcgs{3_{1}}{3_{2}}{3_{2}}\\
 & -\frac{1}{4\sqrt{14}}\subcgs{3_{2}}{2}{3_{2}}+\frac{23}{8\sqrt{21}}\subcgs{3_{2}}{3_{1}}{3_{2}}\\
 & +\frac{1}{8\sqrt{3}}\subcgs{3_{2}}{3_{2}}{3_{2}}
\end{align*}
\efl 
\item $\tsprodx{7}{8}\to\rep{8}^{(2)}$: \bfl 
\begin{align*}
\rep{2} & =\frac{1}{4\sqrt{7}}\subcgs{1_{1}}{2}{2}+\sqrt{\frac{6}{7}}\subcgs{3_{1}}{3_{1}}{2}\\
 & -\frac{3}{4\sqrt{14}}\subcgs{3_{2}}{3_{1}}{2}+\frac{\sqrt{\frac{3}{2}}}{4}\subcgs{3_{2}}{3_{2}}{2}
\end{align*}
\begin{align*}
\rep{3}_{1} & =-\frac{\sqrt{3}}{4}\subcgs{1_{1}}{3_{2}}{3_{1}}-\frac{2}{\sqrt{7}}\subcgs{3_{1}}{2}{3_{1}}\\
 & -\frac{\sqrt{\frac{3}{7}}}{4}\subcgs{3_{2}}{2}{3_{1}}-\frac{3\sqrt{\frac{3}{14}}}{4}\subcgs{3_{2}}{3_{1}}{3_{1}}\\
 & -\frac{\sqrt{\frac{3}{2}}}{4}\subcgs{3_{2}}{3_{2}}{3_{1}}
\end{align*}
\begin{align*}
\rep{3}_{2} & =\frac{\sqrt{3}}{4}\subcgs{1_{1}}{3_{1}}{3_{2}}-\frac{1}{4}\subcgs{3_{2}}{2}{3_{2}}\\
 & +\frac{\sqrt{\frac{3}{2}}}{4}\subcgs{3_{2}}{3_{1}}{3_{2}}-\frac{\sqrt{\frac{21}{2}}}{4}\subcgs{3_{2}}{3_{2}}{3_{2}}
\end{align*}
\efl 
\item $\tsprodx{7}{8}\to\rep{8}^{(3)}$: \bfl 
\begin{align*}
\rep{2} & =\frac{3}{4\sqrt{2}}\subcgs{1_{1}}{2}{2}+\frac{2}{\sqrt{7}}\subcgs{3_{1}}{3_{2}}{2}\\
 & +\frac{3}{8}\subcgs{3_{2}}{3_{1}}{2}+\frac{\sqrt{\frac{3}{7}}}{8}\subcgs{3_{2}}{3_{2}}{2}
\end{align*}
\begin{align*}
\rep{3}_{1} & =\frac{3\sqrt{\frac{3}{14}}}{4}\subcgs{1_{1}}{3_{2}}{3_{1}}-\frac{1}{2\sqrt{3}}\subcgs{3_{1}}{3_{1}}{3_{1}}\\
 & +\frac{1}{2\sqrt{21}}\subcgs{3_{1}}{3_{2}}{3_{1}}-\frac{5}{4\sqrt{6}}\subcgs{3_{2}}{2}{3_{1}}\\
 & +\frac{5}{8\sqrt{3}}\subcgs{3_{2}}{3_{1}}{3_{1}}-\frac{23}{8\sqrt{21}}\subcgs{3_{2}}{3_{2}}{3_{1}}
\end{align*}
\begin{align*}
\rep{3}_{2} & =\frac{\sqrt{\frac{7}{6}}}{4}\subcgs{1_{1}}{3_{1}}{3_{2}}+\frac{\sqrt{\frac{7}{3}}}{2}\subcgs{3_{1}}{3_{1}}{3_{2}}\\
 & -\frac{1}{2\sqrt{3}}\subcgs{3_{1}}{3_{2}}{3_{2}}+\frac{\sqrt{\frac{7}{2}}}{4}\subcgs{3_{2}}{2}{3_{2}}\\
 & +\frac{\sqrt{\frac{7}{3}}}{8}\subcgs{3_{2}}{3_{1}}{3_{2}}+\frac{1}{8\sqrt{3}}\subcgs{3_{2}}{3_{2}}{3_{2}}
\end{align*}
\efl 
\end{itemize}

\paragraph*{\cgEqFontsize $\tsprod{8}{8}\to\left(\rep{1}+\rep{6}^{(1)}+\rep{6}^{(2)}+\rep{7}+\rep{8}^{(1)}+\rep{8}^{(2)}\right)_{s}+\left(\rep{3}+\rep{\bar{3}}+\rep{7}^{(1)}+\rep{7}^{(2)}+\rep{8}\right)_{a}$}
\begin{itemize}
\item $\tsprodx{8}{8}\to\rep{1}_{s}$: \bfl 
\begin{align*}
\rep{1}_{0} & =\frac{1}{2}\subcgs{2}{2}{1_{0}}+\frac{\sqrt{\frac{3}{2}}}{2}\subcgs{3_{1}}{3_{1}}{1_{0}}\\
 & +\frac{\sqrt{\frac{3}{2}}}{2}\subcgs{3_{2}}{3_{2}}{1_{0}}
\end{align*}
\efl 
\item $\tsprodx{8}{8}\to\rep{6}_{s}^{(1)}$: \bfl 
\begin{align*}
\rep{1}_{0} & =\sqrt{\frac{3}{5}}\subcgs{2}{2}{1_{0}}-\sqrt{\frac{2}{5}}\subcgs{3_{2}}{3_{2}}{1_{0}}
\end{align*}
\begin{align*}
\rep{2} & =-\frac{\sqrt{\frac{15}{14}}}{2}\subcgs{2}{2}{2}-\frac{27}{4\sqrt{70}}\subcgs{3_{1}}{3_{1}}{2}\\
 & +\frac{\sqrt{\frac{3}{10}}}{4}\subcgs{3_{1}}{3_{2}}{2}+\frac{\sqrt{\frac{3}{10}}}{4}\subcgs{3_{2}}{3_{1}}{2}\\
 & +\frac{\sqrt{\frac{7}{10}}}{4}\subcgs{3_{2}}{3_{2}}{2}
\end{align*}
\begin{align*}
\rep{3}_{1} & =\frac{3\sqrt{\frac{3}{70}}}{2}\left(\subcgs{2}{3_{1}}{3_{1}}+\subcgs{3_{1}}{2}{3_{1}}\right)\\
 & -\frac{1}{2\sqrt{10}}\left(\subcgs{2}{3_{2}}{3_{1}}+\subcgs{3_{2}}{2}{3_{1}}\right)\\
 & -\frac{1}{2\sqrt{35}}\subcgs{3_{1}}{3_{1}}{3_{1}}-\frac{1}{\sqrt{5}}\subcgs{3_{1}}{3_{2}}{3_{1}}\\
 & +\frac{1}{\sqrt{5}}\subcgs{3_{2}}{3_{1}}{3_{1}}-\frac{\sqrt{\frac{7}{5}}}{2}\subcgs{3_{2}}{3_{2}}{3_{1}}
\end{align*}
\efl 
\item $\tsprodx{8}{8}\to\rep{6}_{s}^{(2)}$: \bfl 
\begin{align*}
\rep{1}_{0} & =-\frac{\sqrt{\frac{3}{5}}}{2}\subcgs{2}{2}{1_{0}}+\frac{\sqrt{\frac{5}{2}}}{2}\subcgs{3_{1}}{3_{1}}{1_{0}}\\
 & -\frac{3}{2\sqrt{10}}\subcgs{3_{2}}{3_{2}}{1_{0}}
\end{align*}
\begin{align*}
\rep{2} & =-\frac{\sqrt{\frac{15}{14}}}{2}\subcgs{2}{2}{2}+\frac{1}{4\sqrt{70}}\subcgs{3_{1}}{3_{1}}{2}\\
 & -\frac{3\sqrt{\frac{3}{10}}}{4}\subcgs{3_{1}}{3_{2}}{2}-\frac{3\sqrt{\frac{3}{10}}}{4}\subcgs{3_{2}}{3_{1}}{2}\\
 & -\frac{3\sqrt{\frac{7}{10}}}{4}\subcgs{3_{2}}{3_{2}}{2}
\end{align*}
\begin{align*}
\rep{3}_{1} & =-\sqrt{\frac{6}{35}}\left(\subcgs{2}{3_{1}}{3_{1}}+\subcgs{3_{1}}{2}{3_{1}}\right)\\
 & -\frac{1}{\sqrt{10}}\left(\subcgs{2}{3_{2}}{3_{1}}+\subcgs{3_{2}}{2}{3_{1}}\right)\\
 & -\frac{9}{4\sqrt{35}}\subcgs{3_{1}}{3_{1}}{3_{1}}-\frac{3}{4\sqrt{5}}\subcgs{3_{1}}{3_{2}}{3_{1}}\\
 & +\frac{3}{4\sqrt{5}}\subcgs{3_{2}}{3_{1}}{3_{1}}+\frac{\sqrt{\frac{7}{5}}}{4}\subcgs{3_{2}}{3_{2}}{3_{1}}
\end{align*}
\efl 
\item $\tsprodx{8}{8}\to\rep{7}_{s}$: \bfl 
\begin{align*}
\rep{1}_{1} & =\frac{1}{\sqrt{2}}\subcgs{3_{1}}{3_{2}}{1_{1}}+\frac{1}{\sqrt{2}}\subcgs{3_{2}}{3_{1}}{1_{1}}
\end{align*}
\begin{align*}
\rep{3}_{1} & =-\frac{1}{\sqrt{6}}\left(\subcgs{2}{3_{2}}{3_{1}}+\subcgs{3_{2}}{2}{3_{1}}\right)\\
 & -\frac{\sqrt{\frac{7}{3}}}{4}\subcgs{3_{1}}{3_{1}}{3_{1}}+\frac{\sqrt{3}}{4}\subcgs{3_{1}}{3_{2}}{3_{1}}\\
 & -\frac{\sqrt{3}}{4}\subcgs{3_{2}}{3_{1}}{3_{1}}-\frac{\sqrt{\frac{7}{3}}}{4}\subcgs{3_{2}}{3_{2}}{3_{1}}
\end{align*}
\begin{align*}
\rep{3}_{2} & =-\frac{\sqrt{\frac{7}{6}}}{2}\left(\subcgs{2}{3_{1}}{3_{2}}+\subcgs{3_{1}}{2}{3_{2}}\right)\\
 & +\frac{1}{2\sqrt{2}}\left(\subcgs{2}{3_{2}}{3_{2}}+\subcgs{3_{2}}{2}{3_{2}}\right)\\
 & -\frac{1}{2\sqrt{3}}\subcgs{3_{1}}{3_{2}}{3_{2}}-\frac{1}{2\sqrt{3}}\subcgs{3_{2}}{3_{1}}{3_{2}}
\end{align*}
\efl 
\item $\tsprodx{8}{8}\to\rep{8}_{s}^{(1)}$: \bfl 
\begin{align*}
\rep{2} & =\sqrt{\frac{2}{5}}\subcgs{2}{2}{2}-\sqrt{\frac{3}{10}}\subcgs{3_{1}}{3_{1}}{2}\\
 & -\sqrt{\frac{3}{10}}\subcgs{3_{2}}{3_{2}}{2}
\end{align*}
\begin{align*}
\rep{3}_{1} & =-\frac{1}{\sqrt{5}}\left(\subcgs{2}{3_{1}}{3_{1}}+\subcgs{3_{1}}{2}{3_{1}}\right)\\
 & +\sqrt{\frac{3}{10}}\subcgs{3_{1}}{3_{1}}{3_{1}}-\sqrt{\frac{3}{10}}\subcgs{3_{2}}{3_{2}}{3_{1}}
\end{align*}
\begin{align*}
\rep{3}_{2} & =-\frac{1}{\sqrt{5}}\left(\subcgs{2}{3_{2}}{3_{2}}+\subcgs{3_{2}}{2}{3_{2}}\right)\\
 & -\sqrt{\frac{3}{10}}\subcgs{3_{1}}{3_{2}}{3_{2}}-\sqrt{\frac{3}{10}}\subcgs{3_{2}}{3_{1}}{3_{2}}
\end{align*}
\efl 
\item $\tsprodx{8}{8}\to\rep{8}_{s}^{(2)}$: \bfl 
\begin{align*}
\rep{2} & =-\frac{3}{2\sqrt{35}}\subcgs{2}{2}{2}+\frac{3\sqrt{\frac{3}{35}}}{4}\subcgs{3_{1}}{3_{1}}{2}\\
 & +\frac{\sqrt{5}}{4}\subcgs{3_{1}}{3_{2}}{2}+\frac{\sqrt{5}}{4}\subcgs{3_{2}}{3_{1}}{2}\\
 & -\frac{\sqrt{\frac{21}{5}}}{4}\subcgs{3_{2}}{3_{2}}{2}
\end{align*}
\begin{align*}
\rep{3}_{1} & =\frac{3}{2\sqrt{70}}\left(\subcgs{2}{3_{1}}{3_{1}}+\subcgs{3_{1}}{2}{3_{1}}\right)\\
 & -\frac{\sqrt{\frac{5}{6}}}{2}\left(\subcgs{2}{3_{2}}{3_{1}}+\subcgs{3_{2}}{2}{3_{1}}\right)\\
 & +\frac{13}{2\sqrt{105}}\subcgs{3_{1}}{3_{1}}{3_{1}}+\frac{\sqrt{\frac{7}{15}}}{2}\subcgs{3_{2}}{3_{2}}{3_{1}}
\end{align*}
\begin{align*}
\rep{3}_{2} & =-\frac{\sqrt{\frac{5}{6}}}{2}\left(\subcgs{2}{3_{1}}{3_{2}}+\subcgs{3_{1}}{2}{3_{2}}\right)\\
 & -\frac{\sqrt{\frac{7}{10}}}{2}\left(\subcgs{2}{3_{2}}{3_{2}}+\subcgs{3_{2}}{2}{3_{2}}\right)\\
 & +\frac{\sqrt{\frac{7}{15}}}{2}\subcgs{3_{1}}{3_{2}}{3_{2}}+\frac{\sqrt{\frac{7}{15}}}{2}\subcgs{3_{2}}{3_{1}}{3_{2}}
\end{align*}
\efl 
\item $\tsprodx{8}{8}\to\rep{3}_{a}$: \bfl 
\begin{align*}
\rep{3}_{2} & =\frac{1}{4}\left(\subcgs{2}{3_{1}}{3_{2}}-\subcgs{3_{1}}{2}{3_{2}}\right)\\
 & +\frac{i\sqrt{3}}{4}\left(\subcgs{2}{3_{2}}{3_{2}}-\subcgs{3_{2}}{2}{3_{2}}\right)\\
 & -\frac{1}{2b_{7}^{3}}\subcgs{3_{1}}{3_{1}}{3_{2}}+\frac{i}{4b_{7}}\subcgs{3_{1}}{3_{2}}{3_{2}}\\
 & -\frac{i}{4b_{7}}\subcgs{3_{2}}{3_{1}}{3_{2}}-\frac{1}{2\sqrt{2}b_{7}^{2}}\subcgs{3_{2}}{3_{2}}{3_{2}}
\end{align*}
\efl 
\item $\tsprodx{8}{8}\to\rep{\bar{3}}_{a}$: \bfl 
\begin{align*}
\rep{3}_{2} & =\frac{1}{4}\left(\subcgs{2}{3_{1}}{3_{2}}-\subcgs{3_{1}}{2}{3_{2}}\right)\\
 & -\frac{i\sqrt{3}}{4}\left(\subcgs{2}{3_{2}}{3_{2}}-\subcgs{3_{2}}{2}{3_{2}}\right)\\
 & -\frac{b_{7}^{3}}{2}\subcgs{3_{1}}{3_{1}}{3_{2}}-\frac{ib_{7}}{4}\subcgs{3_{1}}{3_{2}}{3_{2}}\\
 & +\frac{ib_{7}}{4}\subcgs{3_{2}}{3_{1}}{3_{2}}-\frac{b_{7}^{2}}{2\sqrt{2}}\subcgs{3_{2}}{3_{2}}{3_{2}}
\end{align*}
\efl 
\item $\tsprodx{8}{8}\to\rep{7}_{a}^{(1)}$: \bfl 
\begin{align*}
\rep{1}_{1} & =\subcgs{2}{2}{1_{1}}
\end{align*}
\begin{align*}
\rep{3}_{1} & =-\frac{1}{8\sqrt{2}}\left(\subcgs{2}{3_{1}}{3_{1}}-\subcgs{3_{1}}{2}{3_{1}}\right)\\
 & +\frac{\sqrt{\frac{21}{2}}}{8}\left(\subcgs{2}{3_{2}}{3_{1}}-\subcgs{3_{2}}{2}{3_{1}}\right)\\
 & +\frac{\sqrt{21}}{8}\subcgs{3_{1}}{3_{2}}{3_{1}}+\frac{\sqrt{21}}{8}\subcgs{3_{2}}{3_{1}}{3_{1}}
\end{align*}
\begin{align*}
\rep{3}_{2} & =-\frac{\sqrt{\frac{3}{2}}}{8}\left(\subcgs{2}{3_{1}}{3_{2}}-\subcgs{3_{1}}{2}{3_{2}}\right)\\
 & +\frac{\sqrt{\frac{7}{2}}}{8}\left(\subcgs{2}{3_{2}}{3_{2}}-\subcgs{3_{2}}{2}{3_{2}}\right)\\
 & +\frac{\sqrt{3}}{4}\subcgs{3_{1}}{3_{1}}{3_{2}}+\frac{\sqrt{21}}{8}\subcgs{3_{1}}{3_{2}}{3_{2}}\\
 & -\frac{\sqrt{21}}{8}\subcgs{3_{2}}{3_{1}}{3_{2}}
\end{align*}
\efl 
\item $\tsprodx{8}{8}\to\rep{7}_{a}^{(2)}$: \bfl 
\begin{align*}
\rep{1}_{1} & =\frac{1}{\sqrt{2}}\subcgs{3_{1}}{3_{2}}{1_{1}}-\frac{1}{\sqrt{2}}\subcgs{3_{2}}{3_{1}}{1_{1}}
\end{align*}
\begin{align*}
\rep{3}_{1} & =-\frac{3\sqrt{\frac{7}{2}}}{8}\left(\subcgs{2}{3_{1}}{3_{1}}-\subcgs{3_{1}}{2}{3_{1}}\right)\\
 & -\frac{1}{8\sqrt{6}}\left(\subcgs{2}{3_{2}}{3_{1}}-\subcgs{3_{2}}{2}{3_{1}}\right)\\
 & -\frac{1}{8\sqrt{3}}\subcgs{3_{1}}{3_{2}}{3_{1}}-\frac{1}{8\sqrt{3}}\subcgs{3_{2}}{3_{1}}{3_{1}}
\end{align*}
\begin{align*}
\rep{3}_{2} & =-\frac{\sqrt{\frac{7}{6}}}{8}\left(\subcgs{2}{3_{1}}{3_{2}}-\subcgs{3_{1}}{2}{3_{2}}\right)\\
 & -\frac{3}{8\sqrt{2}}\left(\subcgs{2}{3_{2}}{3_{2}}-\subcgs{3_{2}}{2}{3_{2}}\right)\\
 & -\frac{\sqrt{\frac{7}{3}}}{4}\subcgs{3_{1}}{3_{1}}{3_{2}}+\frac{\sqrt{3}}{8}\subcgs{3_{1}}{3_{2}}{3_{2}}\\
 & -\frac{\sqrt{3}}{8}\subcgs{3_{2}}{3_{1}}{3_{2}}-\frac{\sqrt{\frac{7}{3}}}{2}\subcgs{3_{2}}{3_{2}}{3_{2}}
\end{align*}
\efl 
\item $\tsprodx{8}{8}\to\rep{8}_{a}$: \bfl 
\begin{align*}
\rep{2} & =\frac{1}{\sqrt{2}}\subcgs{3_{1}}{3_{2}}{2}-\frac{1}{\sqrt{2}}\subcgs{3_{2}}{3_{1}}{2}
\end{align*}
\begin{align*}
\rep{3}_{1} & =\frac{1}{\sqrt{3}}\left(\subcgs{2}{3_{2}}{3_{1}}-\subcgs{3_{2}}{2}{3_{1}}\right)\\
 & -\frac{1}{\sqrt{6}}\subcgs{3_{1}}{3_{2}}{3_{1}}-\frac{1}{\sqrt{6}}\subcgs{3_{2}}{3_{1}}{3_{1}}
\end{align*}
\begin{align*}
\rep{3}_{2} & =-\frac{1}{\sqrt{3}}\left(\subcgs{2}{3_{1}}{3_{2}}-\subcgs{3_{1}}{2}{3_{2}}\right)\\
 & -\frac{1}{\sqrt{6}}\subcgs{3_{1}}{3_{1}}{3_{2}}+\frac{1}{\sqrt{6}}\subcgs{3_{2}}{3_{2}}{3_{2}}
\end{align*}
\efl 
\end{itemize}

\section{\label{sec:App_CGC_PSL27_T7}CG Coefficients of $\group$ in $\subgb$
Basis}

The notations used in this appendix are the same as those of Appendix
\ref{sec:App_CGC_PSL27_S4}. Two constant angles will be used in the
following results
\[
\alpha=\arctan\frac{\sqrt{3}}{2},\quad\beta=\arctan\left(3\sqrt{3}+2\sqrt{6}\right).
\]

\paragraph*{\cgEqFontsize $\tsprod{3}{3}\to\rep{6}_{s}+\rep{\bar{3}}_{a}$}
\begin{itemize}
\item $\tsprodx{3}{3}\to\rep{\bar{3}}_{a}$: \bfl 
\begin{align*}
\rep{\bar{3}} & =\subcgt{3}{3}{\bar{3}_{a}}
\end{align*}
\efl 
\item $\tsprodx{3}{3}\to\rep{6}_{s}$: \bfl 
\begin{align*}
\rep{3} & =\subcgt{3}{3}{3}
\end{align*}
\begin{align*}
\rep{\bar{3}} & =\subcgt{3}{3}{\bar{3}_{s}}
\end{align*}
\efl 
\end{itemize}

\paragraph*{\cgEqFontsize $\tsprod{3}{\bar{3}}\to\rep{1}+\rep{8}$}
\begin{itemize}
\item $\tsprodx{3}{\bar{3}}\to\rep{1}$: \bfl 
\begin{align*}
\rep{1} & =\subcgt{3}{\bar{3}}{1}
\end{align*}
\efl 
\item $\tsprodx{3}{\bar{3}}\to\rep{8}$: \bfl 
\begin{align*}
\rep{1^{\prime}} & =\subcgt{3}{\bar{3}}{1^{\prime}}
\end{align*}
\begin{align*}
\rep{\bar{1}^{\prime}} & =\subcgt{3}{\bar{3}}{\bar{1}^{\prime}}
\end{align*}
\begin{align*}
\rep{3} & =\subcgt{3}{\bar{3}}{3}
\end{align*}
\begin{align*}
\rep{\bar{3}} & =\subcgt{3}{\bar{3}}{\bar{3}}
\end{align*}
\efl 
\end{itemize}

\paragraph*{\cgEqFontsize $\tsprod{3}{6}\to\rep{\bar{3}}+\rep{7}+\rep{8}$}
\begin{itemize}
\item $\tsprodx{3}{6}\to\rep{\bar{3}}$: \bfl 
\begin{align*}
\rep{\bar{3}} & =\frac{1}{\sqrt{2}}\subcgt{3}{3}{\bar{3}_{s}}+\frac{1}{\sqrt{2}}\subcgt{3}{\bar{3}}{\bar{3}}
\end{align*}
\efl 
\item $\tsprodx{3}{6}\to\rep{7}$: \bfl 
\begin{align*}
\rep{1} & =\subcgt{3}{\bar{3}}{1}
\end{align*}
\begin{align*}
\rep{3} & =\frac{1}{\sqrt{3}}\subcgt{3}{3}{3}+\sqrt{\frac{2}{3}}\subcgt{3}{\bar{3}}{3}
\end{align*}
\begin{align*}
\rep{\bar{3}} & =-\frac{1}{\sqrt{6}}\subcgt{3}{3}{\bar{3}_{s}}+\sqrt{\frac{2}{3}}\subcgt{3}{3}{\bar{3}_{a}}\\
 & +\frac{1}{\sqrt{6}}\subcgt{3}{\bar{3}}{\bar{3}}
\end{align*}
\efl 
\item $\tsprodx{3}{6}\to\rep{8}$: \bfl 
\begin{align*}
\rep{1^{\prime}} & =\subcgt{3}{\bar{3}}{1^{\prime}}
\end{align*}
\begin{align*}
\rep{\bar{1}^{\prime}} & =-\subcgt{3}{\bar{3}}{\bar{1}^{\prime}}
\end{align*}
\begin{align*}
\rep{3} & =i\sqrt{\frac{2}{3}}\subcgt{3}{3}{3}-\frac{i}{\sqrt{3}}\subcgt{3}{\bar{3}}{3}
\end{align*}
\begin{align*}
\rep{\bar{3}} & =-\frac{i}{\sqrt{3}}\subcgt{3}{3}{\bar{3}_{s}}-\frac{i}{\sqrt{3}}\subcgt{3}{3}{\bar{3}_{a}}\\
 & +\frac{i}{\sqrt{3}}\subcgt{3}{\bar{3}}{\bar{3}}
\end{align*}
\efl 
\end{itemize}

\paragraph*{\cgEqFontsize $\tsprod{3}{7}\to\rep{6}+\rep{7}+\rep{8}$}
\begin{itemize}
\item $\tsprodx{3}{7}\to\rep{6}$: \bfl 
\begin{align*}
\rep{3} & =\sqrt{\frac{2}{7}}\subcgt{3}{1}{3}+\frac{1}{\sqrt{7}}\subcgt{3}{3}{3}\\
 & +\frac{2}{\sqrt{7}}\subcgt{3}{\bar{3}}{3}
\end{align*}
\begin{align*}
\rep{\bar{3}} & =-\frac{1}{\sqrt{7}}\subcgt{3}{3}{\bar{3}_{s}}-\frac{2}{\sqrt{7}}\subcgt{3}{3}{\bar{3}_{a}}\\
 & +\sqrt{\frac{2}{7}}\subcgt{3}{\bar{3}}{\bar{3}}
\end{align*}
\efl 
\item $\tsprodx{3}{7}\to\rep{7}$: \bfl 
\begin{align*}
\rep{1} & =\subcgt{3}{\bar{3}}{1}
\end{align*}
\begin{align*}
\rep{3} & =-\frac{1}{\sqrt{3}}\subcgt{3}{1}{3}+\sqrt{\frac{2}{3}}\subcgt{3}{3}{3}
\end{align*}
\begin{align*}
\rep{\bar{3}} & =-\frac{1}{\sqrt{3}}\subcgt{3}{3}{\bar{3}_{a}}-\sqrt{\frac{2}{3}}\subcgt{3}{\bar{3}}{\bar{3}}
\end{align*}
\efl 
\item $\tsprodx{3}{7}\to\rep{8}$: \bfl 
\begin{align*}
\rep{1^{\prime}} & =\subcgt{3}{\bar{3}}{1^{\prime}}
\end{align*}
\begin{align*}
\rep{\bar{1}^{\prime}} & =-e^{2i\alpha}\subcgt{3}{\bar{3}}{\bar{1}^{\prime}}
\end{align*}
\begin{align*}
\rep{3} & =-2i\sqrt{\frac{2}{21}}e^{i\alpha}\subcgt{3}{1}{3}-\frac{2ie^{i\alpha}}{\sqrt{21}}\subcgt{3}{3}{3}\\
 & +i\sqrt{\frac{3}{7}}e^{i\alpha}\subcgt{3}{\bar{3}}{3}
\end{align*}
\begin{align*}
\rep{\bar{3}} & =-i\sqrt{\frac{6}{7}}e^{i\alpha}\subcgt{3}{3}{\bar{3}_{s}}+i\sqrt{\frac{2}{21}}e^{i\alpha}\subcgt{3}{3}{\bar{3}_{a}}\\
 & -\frac{ie^{i\alpha}}{\sqrt{21}}\subcgt{3}{\bar{3}}{\bar{3}}
\end{align*}
\efl 
\end{itemize}

\paragraph*{\cgEqFontsize $\tsprod{3}{8}\to\rep{3}+\rep{6}+\rep{7}+\rep{8}$}
\begin{itemize}
\item $\tsprodx{3}{8}\to\rep{3}$: \bfl 
\begin{align*}
\rep{3} & =\frac{1}{2\sqrt{2}}\subcgt{3}{1^{\prime}}{3}+\frac{1}{2\sqrt{2}}\subcgt{3}{\bar{1}^{\prime}}{3}\\
 & +\frac{\sqrt{\frac{3}{2}}}{2}\subcgt{3}{3}{3}+\frac{\sqrt{\frac{3}{2}}}{2}\subcgt{3}{\bar{3}}{3}
\end{align*}
\efl 
\item $\tsprodx{3}{8}\to\rep{6}$: \bfl 
\begin{align*}
\rep{3} & =\frac{1}{2}\subcgt{3}{1^{\prime}}{3}-\frac{1}{2}\subcgt{3}{\bar{1}^{\prime}}{3}\\
 & -\frac{i}{2}\subcgt{3}{3}{3}+\frac{i}{2}\subcgt{3}{\bar{3}}{3}
\end{align*}
\begin{align*}
\rep{\bar{3}} & =\frac{i}{2}\subcgt{3}{3}{\bar{3}_{s}}-\frac{i}{2}\subcgt{3}{3}{\bar{3}_{a}}\\
 & -\frac{i}{\sqrt{2}}\subcgt{3}{\bar{3}}{\bar{3}}
\end{align*}
\efl 
\item $\tsprodx{3}{8}\to\rep{7}$: \bfl 
\begin{align*}
\rep{1} & =\subcgt{3}{\bar{3}}{1}
\end{align*}
\begin{align*}
\rep{3} & =-\frac{1}{2}i\sqrt{\frac{7}{6}}e^{i\alpha}\subcgt{3}{1^{\prime}}{3}+\frac{1}{2}i\sqrt{\frac{7}{6}}e^{-i\alpha}\subcgt{3}{\bar{1}^{\prime}}{3}\\
 & +\frac{1}{2\sqrt{6}}\subcgt{3}{3}{3}-\frac{\sqrt{\frac{3}{2}}}{2}\subcgt{3}{\bar{3}}{3}
\end{align*}
\begin{align*}
\rep{\bar{3}} & =\frac{\sqrt{3}}{2}\subcgt{3}{3}{\bar{3}_{s}}+\frac{1}{2\sqrt{3}}\subcgt{3}{3}{\bar{3}_{a}}\\
 & +\frac{1}{\sqrt{6}}\subcgt{3}{\bar{3}}{\bar{3}}
\end{align*}
\efl 
\item $\tsprodx{3}{8}\to\rep{8}$: \bfl 
\begin{align*}
\rep{1^{\prime}} & =\subcgt{3}{\bar{3}}{1^{\prime}}
\end{align*}
\begin{align*}
\rep{\bar{1}^{\prime}} & =-\omega^{2}\subcgt{3}{\bar{3}}{\bar{1}^{\prime}}
\end{align*}
\begin{align*}
\rep{3} & =\frac{\omega^{2}}{\sqrt{3}}\subcgt{3}{1^{\prime}}{3}-\frac{1}{\sqrt{3}}\subcgt{3}{\bar{1}^{\prime}}{3}\\
 & -\frac{i\omega}{\sqrt{3}}\subcgt{3}{3}{3}
\end{align*}
\begin{align*}
\rep{\bar{3}} & =-i\sqrt{\frac{2}{3}}\omega\subcgt{3}{3}{\bar{3}_{a}}+\frac{i\omega}{\sqrt{3}}\subcgt{3}{\bar{3}}{\bar{3}}
\end{align*}
\efl 
\end{itemize}

\paragraph*{\cgEqFontsize $\tsprod{6}{6}\to\left(\rep{1}+\rep{6}^{(1)}+\rep{6}^{(2)}+\rep{8}\right)_{s}+\left(\rep{7}+\rep{8}\right)_{a}$}
\begin{itemize}
\item $\tsprodx{6}{6}\to\rep{1}_{s}$: \bfl 
\begin{align*}
\rep{1} & =\frac{1}{\sqrt{2}}\left(\subcgt{3}{\bar{3}}{1}+\subcgt{\bar{3}}{3}{1}\right)
\end{align*}
\efl 
\item $\tsprodx{6}{6}\to\rep{6}_{s}^{(1)}$: \bfl 
\begin{align*}
\rep{3} & =\sqrt{\frac{1}{14}\left(3-\sqrt{2}\right)}\subcgt{3}{3}{3}\\
 & +\sqrt{\frac{1}{14}\left(3-\sqrt{2}\right)}\left(\subcgt{3}{\bar{3}}{3}+\subcgt{\bar{3}}{3}{3}\right)\\
 & -\sqrt{\frac{1}{14}\left(5+3\sqrt{2}\right)}\subcgt{\bar{3}}{\bar{3}}{3_{s}}
\end{align*}
\begin{align*}
\rep{\bar{3}} & =-\sqrt{\frac{1}{14}\left(5+3\sqrt{2}\right)}\subcgt{3}{3}{\bar{3}_{s}}\\
 & +\sqrt{\frac{1}{14}\left(3-\sqrt{2}\right)}\left(\subcgt{3}{\bar{3}}{\bar{3}}+\subcgt{\bar{3}}{3}{\bar{3}}\right)\\
 & +\sqrt{\frac{1}{14}\left(3-\sqrt{2}\right)}\subcgt{\bar{3}}{\bar{3}}{\bar{3}}
\end{align*}
\efl 
\item $\tsprodx{6}{6}\to\rep{6}_{s}^{(2)}$: \bfl 
\begin{align*}
\rep{3} & =i\sqrt{\frac{1}{14}\left(3+\sqrt{2}\right)}\subcgt{3}{3}{3}\\
 & -i\sqrt{\frac{1}{14}\left(3+\sqrt{2}\right)}\left(\subcgt{3}{\bar{3}}{3}+\subcgt{\bar{3}}{3}{3}\right)\\
 & -i\sqrt{\frac{1}{14}\left(5-3\sqrt{2}\right)}\subcgt{\bar{3}}{\bar{3}}{3_{s}}
\end{align*}
\begin{align*}
\rep{\bar{3}} & =i\sqrt{\frac{1}{14}\left(5-3\sqrt{2}\right)}\subcgt{3}{3}{\bar{3}_{s}}\\
 & +i\sqrt{\frac{1}{14}\left(3+\sqrt{2}\right)}\left(\subcgt{3}{\bar{3}}{\bar{3}}+\subcgt{\bar{3}}{3}{\bar{3}}\right)\\
 & -i\sqrt{\frac{1}{14}\left(3+\sqrt{2}\right)}\subcgt{\bar{3}}{\bar{3}}{\bar{3}}
\end{align*}
\efl 
\item $\tsprodx{6}{6}\to\rep{8}_{s}$: \bfl 
\begin{align*}
\rep{1^{\prime}} & =\frac{e^{-i\alpha}}{\sqrt{2}}\left(\subcgt{3}{\bar{3}}{1^{\prime}}+\subcgt{\bar{3}}{3}{1^{\prime}}\right)
\end{align*}
\begin{align*}
\rep{\bar{1}^{\prime}} & =\frac{e^{i\alpha}}{\sqrt{2}}\left(\subcgt{3}{\bar{3}}{\bar{1}^{\prime}}+\subcgt{\bar{3}}{3}{\bar{1}^{\prime}}\right)
\end{align*}
\begin{align*}
\rep{3} & =-\frac{2}{\sqrt{7}}\subcgt{3}{3}{3}\\
 & -\frac{1}{\sqrt{14}}\left(\subcgt{3}{\bar{3}}{3}+\subcgt{\bar{3}}{3}{3}\right)\\
 & -\sqrt{\frac{2}{7}}\subcgt{\bar{3}}{\bar{3}}{3_{s}}
\end{align*}
\begin{align*}
\rep{\bar{3}} & =-\sqrt{\frac{2}{7}}\subcgt{3}{3}{\bar{3}_{s}}\\
 & -\frac{1}{\sqrt{14}}\left(\subcgt{3}{\bar{3}}{\bar{3}}+\subcgt{\bar{3}}{3}{\bar{3}}\right)\\
 & -\frac{2}{\sqrt{7}}\subcgt{\bar{3}}{\bar{3}}{\bar{3}}
\end{align*}
\efl 
\item $\tsprodx{6}{6}\to\rep{7}_{a}$: \bfl 
\begin{align*}
\rep{1} & =\frac{i}{\sqrt{2}}\left(\subcgt{3}{\bar{3}}{1}-\subcgt{\bar{3}}{3}{1}\right)
\end{align*}
\begin{align*}
\rep{3} & =-\frac{i}{\sqrt{3}}\left(\subcgt{3}{\bar{3}}{3}-\subcgt{\bar{3}}{3}{3}\right)\\
 & -\frac{i}{\sqrt{3}}\subcgt{\bar{3}}{\bar{3}}{3_{a}}
\end{align*}
\begin{align*}
\rep{\bar{3}} & =\frac{i}{\sqrt{3}}\subcgt{3}{3}{\bar{3}_{a}}\\
 & -\frac{i}{\sqrt{3}}\left(\subcgt{3}{\bar{3}}{\bar{3}}-\subcgt{\bar{3}}{3}{\bar{3}}\right)
\end{align*}
\efl 
\item $\tsprodx{6}{6}\to\rep{8}_{a}$: \bfl 
\begin{align*}
\rep{1^{\prime}} & =\frac{1}{\sqrt{2}}\left(\subcgt{3}{\bar{3}}{1^{\prime}}-\subcgt{\bar{3}}{3}{1^{\prime}}\right)
\end{align*}
\begin{align*}
\rep{\bar{1}^{\prime}} & =-\frac{1}{\sqrt{2}}\left(\subcgt{3}{\bar{3}}{\bar{1}^{\prime}}-\subcgt{\bar{3}}{3}{\bar{1}^{\prime}}\right)
\end{align*}
\begin{align*}
\rep{3} & =\frac{i}{\sqrt{6}}\left(\subcgt{3}{\bar{3}}{3}-\subcgt{\bar{3}}{3}{3}\right)\\
 & -i\sqrt{\frac{2}{3}}\subcgt{\bar{3}}{\bar{3}}{3_{a}}
\end{align*}
\begin{align*}
\rep{\bar{3}} & =i\sqrt{\frac{2}{3}}\subcgt{3}{3}{\bar{3}_{a}}\\
 & +\frac{i}{\sqrt{6}}\left(\subcgt{3}{\bar{3}}{\bar{3}}-\subcgt{\bar{3}}{3}{\bar{3}}\right)
\end{align*}
\efl 
\end{itemize}

\paragraph*{\cgEqFontsize $\tsprod{6}{7}\to\rep{3}+\rep{\bar{3}}+\rep{6}+\rep{7}^{(1)}+\rep{7}^{(2)}+\rep{8}^{(1)}+\rep{8}^{(2)}$}
\begin{itemize}
\item $\tsprodx{6}{7}\to\rep{3}$: \bfl 
\begin{align*}
\rep{3} & =\frac{1}{\sqrt{7}}\subcgt{3}{1}{3}+\sqrt{\frac{2}{7}}\subcgt{3}{3}{3}\\
 & +\frac{1}{\sqrt{14}}\subcgt{3}{\bar{3}}{3}+\frac{1}{\sqrt{7}}\subcgt{\bar{3}}{3}{3}\\
 & -\frac{1}{\sqrt{14}}\subcgt{\bar{3}}{\bar{3}}{3_{s}}+\sqrt{\frac{2}{7}}\subcgt{\bar{3}}{\bar{3}}{3_{a}}
\end{align*}
\efl 
\item $\tsprodx{6}{7}\to\rep{\bar{3}}$: \bfl 
\begin{align*}
\rep{\bar{3}} & =\frac{1}{\sqrt{14}}\subcgt{3}{3}{\bar{3}_{s}}-\sqrt{\frac{2}{7}}\subcgt{3}{3}{\bar{3}_{a}}\\
 & -\frac{1}{\sqrt{7}}\subcgt{3}{\bar{3}}{\bar{3}}-\frac{1}{\sqrt{7}}\subcgt{\bar{3}}{1}{\bar{3}}\\
 & -\frac{1}{\sqrt{14}}\subcgt{\bar{3}}{3}{\bar{3}}-\sqrt{\frac{2}{7}}\subcgt{\bar{3}}{\bar{3}}{\bar{3}}
\end{align*}
\efl 
\item $\tsprodx{6}{7}\to\rep{6}$: \bfl 
\begin{align*}
\rep{3} & =\frac{i}{\sqrt{7}}\subcgt{3}{1}{3}-i\sqrt{\frac{2}{7}}\subcgt{3}{3}{3}\\
 & -i\sqrt{\frac{2}{7}}\subcgt{3}{\bar{3}}{3}+i\sqrt{\frac{2}{7}}\subcgt{\bar{3}}{\bar{3}}{3_{a}}
\end{align*}
\begin{align*}
\rep{\bar{3}} & =-i\sqrt{\frac{2}{7}}\subcgt{3}{3}{\bar{3}_{a}}-\frac{i}{\sqrt{7}}\subcgt{\bar{3}}{1}{\bar{3}}\\
 & +i\sqrt{\frac{2}{7}}\subcgt{\bar{3}}{3}{\bar{3}}+i\sqrt{\frac{2}{7}}\subcgt{\bar{3}}{\bar{3}}{\bar{3}}
\end{align*}
\efl 
\item $\tsprodx{6}{7}\to\rep{7}^{(1)}$: \bfl 
\begin{align*}
\rep{1} & =\frac{1}{\sqrt{2}}\subcgt{3}{\bar{3}}{1}+\frac{1}{\sqrt{2}}\subcgt{\bar{3}}{3}{1}
\end{align*}
\begin{align*}
\rep{3} & =\frac{1}{\sqrt{6}}\subcgt{3}{1}{3}-\frac{1}{\sqrt{6}}\subcgt{3}{3}{3}\\
 & +\frac{1}{6}\left(\sqrt{3}+\sqrt{6}\right)\subcgt{3}{\bar{3}}{3}-\frac{1}{\sqrt{6}}\subcgt{\bar{3}}{3}{3}\\
 & +\frac{1}{6}\left(\sqrt{3}-\sqrt{6}\right)\subcgt{\bar{3}}{\bar{3}}{3_{s}}
\end{align*}
\begin{align*}
\rep{\bar{3}} & =\frac{1}{6}\left(\sqrt{3}-\sqrt{6}\right)\subcgt{3}{3}{\bar{3}_{s}}-\frac{1}{\sqrt{6}}\subcgt{3}{\bar{3}}{\bar{3}}\\
 & +\frac{1}{\sqrt{6}}\subcgt{\bar{3}}{1}{\bar{3}}+\frac{1}{6}\left(\sqrt{3}+\sqrt{6}\right)\subcgt{\bar{3}}{3}{\bar{3}}\\
 & -\frac{1}{\sqrt{6}}\subcgt{\bar{3}}{\bar{3}}{\bar{3}}
\end{align*}
\efl 
\item $\tsprodx{6}{7}\to\rep{7}^{(2)}$: \bfl 
\begin{align*}
\rep{1} & =\frac{i}{\sqrt{2}}\subcgt{3}{\bar{3}}{1}-\frac{i}{\sqrt{2}}\subcgt{\bar{3}}{3}{1}
\end{align*}
\begin{align*}
\rep{3} & =\frac{i}{\sqrt{6}}\subcgt{3}{1}{3}+\frac{i}{\sqrt{6}}\subcgt{3}{3}{3}\\
 & -\frac{i\left(\sqrt{2}-1\right)}{2\sqrt{3}}\subcgt{3}{\bar{3}}{3}-\frac{i}{\sqrt{6}}\subcgt{\bar{3}}{3}{3}\\
 & +\frac{1}{6}i\left(\sqrt{3}+\sqrt{6}\right)\subcgt{\bar{3}}{\bar{3}}{3_{s}}
\end{align*}
\begin{align*}
\rep{\bar{3}} & =-\frac{1}{6}i\left(\sqrt{3}+\sqrt{6}\right)\subcgt{3}{3}{\bar{3}_{s}}+\frac{i}{\sqrt{6}}\subcgt{3}{\bar{3}}{\bar{3}}\\
 & -\frac{i}{\sqrt{6}}\subcgt{\bar{3}}{1}{\bar{3}}+\frac{i\left(\sqrt{2}-1\right)}{2\sqrt{3}}\subcgt{\bar{3}}{3}{\bar{3}}\\
 & -\frac{i}{\sqrt{6}}\subcgt{\bar{3}}{\bar{3}}{\bar{3}}
\end{align*}
\efl 
\item $\tsprodx{6}{7}\to\rep{8}^{(1)}$: \bfl 
\begin{align*}
\rep{1^{\prime}} & =-\frac{i\omega e^{2i\beta-i\alpha}}{\sqrt{2}}\subcgt{3}{\bar{3}}{1^{\prime}}-\frac{i\omega e^{2i\beta-i\alpha}}{\sqrt{2}}\subcgt{\bar{3}}{3}{1^{\prime}}
\end{align*}
\begin{align*}
\rep{\bar{1}^{\prime}} & =\frac{i\omega^{2}e^{i\alpha-2i\beta}}{\sqrt{2}}\subcgt{3}{\bar{3}}{\bar{1}^{\prime}}+\frac{i\omega^{2}e^{i\alpha-2i\beta}}{\sqrt{2}}\subcgt{\bar{3}}{3}{\bar{1}^{\prime}}
\end{align*}
\begin{align*}
\rep{3} & =\frac{2}{\sqrt{21}}\subcgt{3}{1}{3}+\frac{1}{\sqrt{21}}\subcgt{3}{3}{3}\\
 & -\frac{2+\sqrt{2}}{2\sqrt{21}}\subcgt{3}{\bar{3}}{3}+\frac{2-3\sqrt{2}}{2\sqrt{21}}\subcgt{\bar{3}}{3}{3}\\
 & -\frac{4+\sqrt{2}}{2\sqrt{21}}\subcgt{\bar{3}}{\bar{3}}{3_{s}}-\sqrt{\frac{3}{14}}\subcgt{\bar{3}}{\bar{3}}{3_{a}}
\end{align*}
\begin{align*}
\rep{\bar{3}} & =-\frac{4+\sqrt{2}}{2\sqrt{21}}\subcgt{3}{3}{\bar{3}_{s}}-\sqrt{\frac{3}{14}}\subcgt{3}{3}{\bar{3}_{a}}\\
 & +\frac{2-3\sqrt{2}}{2\sqrt{21}}\subcgt{3}{\bar{3}}{\bar{3}}+\frac{2}{\sqrt{21}}\subcgt{\bar{3}}{1}{\bar{3}}\\
 & -\frac{2+\sqrt{2}}{2\sqrt{21}}\subcgt{\bar{3}}{3}{\bar{3}}+\frac{1}{\sqrt{21}}\subcgt{\bar{3}}{\bar{3}}{\bar{3}}
\end{align*}
\efl 
\item $\tsprodx{6}{7}\to\rep{8}^{(2)}$: \bfl 
\begin{align*}
\rep{1^{\prime}} & =-\frac{\omega^{2}e^{i\alpha-2i\beta}}{\sqrt{2}}\subcgt{3}{\bar{3}}{1^{\prime}}+\frac{\omega^{2}e^{i\alpha-2i\beta}}{\sqrt{2}}\subcgt{\bar{3}}{3}{1^{\prime}}
\end{align*}
\begin{align*}
\rep{\bar{1}^{\prime}} & =\frac{\omega e^{2i\beta-i\alpha}}{\sqrt{2}}\subcgt{3}{\bar{3}}{\bar{1}^{\prime}}-\frac{\omega e^{2i\beta-i\alpha}}{\sqrt{2}}\subcgt{\bar{3}}{3}{\bar{1}^{\prime}}
\end{align*}
\begin{align*}
\rep{3} & =-\frac{2i}{\sqrt{21}}\subcgt{3}{1}{3}+\frac{i}{\sqrt{21}}\subcgt{3}{3}{3}\\
 & +\frac{i\left(\sqrt{2}-2\right)}{2\sqrt{21}}\subcgt{3}{\bar{3}}{3}-\frac{i\left(2+3\sqrt{2}\right)}{2\sqrt{21}}\subcgt{\bar{3}}{3}{3}\\
 & +\frac{i\left(\sqrt{2}-4\right)}{2\sqrt{21}}\subcgt{\bar{3}}{\bar{3}}{3_{s}}+i\sqrt{\frac{3}{14}}\subcgt{\bar{3}}{\bar{3}}{3_{a}}
\end{align*}
\begin{align*}
\rep{\bar{3}} & =-\frac{i\left(\sqrt{2}-4\right)}{2\sqrt{21}}\subcgt{3}{3}{\bar{3}_{s}}-i\sqrt{\frac{3}{14}}\subcgt{3}{3}{\bar{3}_{a}}\\
 & +\frac{i\left(2+3\sqrt{2}\right)}{2\sqrt{21}}\subcgt{3}{\bar{3}}{\bar{3}}+\frac{2i}{\sqrt{21}}\subcgt{\bar{3}}{1}{\bar{3}}\\
 & -\frac{i\left(\sqrt{2}-2\right)}{2\sqrt{21}}\subcgt{\bar{3}}{3}{\bar{3}}-\frac{i}{\sqrt{21}}\subcgt{\bar{3}}{\bar{3}}{\bar{3}}
\end{align*}
\efl 
\end{itemize}

\paragraph*{\cgEqFontsize $\tsprod{6}{8}\to\rep{3}+\rep{\bar{3}}+\rep{6}^{(1)}+\rep{6}^{(2)}+\rep{7}^{(1)}+\rep{7}^{(2)}+\rep{8}^{(1)}+\rep{8}^{(2)}$}
\begin{itemize}
\item $\tsprodx{6}{8}\to\rep{3}$: \bfl 
\begin{align*}
\rep{3} & =\frac{1}{2\sqrt{2}}\subcgt{3}{1^{\prime}}{3}-\frac{1}{2\sqrt{2}}\subcgt{3}{\bar{1}^{\prime}}{3}\\
 & +\frac{i}{2\sqrt{2}}\subcgt{3}{3}{3}-\frac{i}{2\sqrt{2}}\subcgt{3}{\bar{3}}{3}\\
 & -\frac{i}{2}\subcgt{\bar{3}}{3}{3}+\frac{i}{2\sqrt{2}}\subcgt{\bar{3}}{\bar{3}}{3_{s}}\\
 & +\frac{i}{2\sqrt{2}}\subcgt{\bar{3}}{\bar{3}}{3_{a}}
\end{align*}
\efl 
\item $\tsprodx{6}{8}\to\rep{\bar{3}}$: \bfl 
\begin{align*}
\rep{\bar{3}} & =\frac{1}{2\sqrt{2}}\subcgt{3}{3}{\bar{3}_{s}}+\frac{1}{2\sqrt{2}}\subcgt{3}{3}{\bar{3}_{a}}\\
 & -\frac{1}{2}\subcgt{3}{\bar{3}}{\bar{3}}-\frac{i}{2\sqrt{2}}\subcgt{\bar{3}}{1^{\prime}}{\bar{3}}\\
 & +\frac{i}{2\sqrt{2}}\subcgt{\bar{3}}{\bar{1}^{\prime}}{\bar{3}}-\frac{1}{2\sqrt{2}}\subcgt{\bar{3}}{3}{\bar{3}}\\
 & +\frac{1}{2\sqrt{2}}\subcgt{\bar{3}}{\bar{3}}{\bar{3}}
\end{align*}
\efl 
\item $\tsprodx{6}{8}\to\rep{6}^{(1)}$: \bfl 
\begin{align*}
\rep{3} & =-\frac{e^{i\alpha}}{2\sqrt{2}}\subcgt{3}{1^{\prime}}{3}-\frac{e^{-i\alpha}}{2\sqrt{2}}\subcgt{3}{\bar{1}^{\prime}}{3}\\
 & +\frac{\sqrt{\frac{3}{14}}}{2}\subcgt{3}{3}{3}+\frac{\sqrt{\frac{3}{14}}}{2}\subcgt{3}{\bar{3}}{3}\\
 & +\sqrt{\frac{3}{7}}\subcgt{\bar{3}}{3}{3}+\sqrt{\frac{3}{14}}\subcgt{\bar{3}}{\bar{3}}{3_{s}}
\end{align*}
\begin{align*}
\rep{\bar{3}} & =\sqrt{\frac{3}{14}}\subcgt{3}{3}{\bar{3}_{s}}+\sqrt{\frac{3}{7}}\subcgt{3}{\bar{3}}{\bar{3}}\\
 & -\frac{e^{i\alpha}}{2\sqrt{2}}\subcgt{\bar{3}}{1^{\prime}}{\bar{3}}-\frac{e^{-i\alpha}}{2\sqrt{2}}\subcgt{\bar{3}}{\bar{1}^{\prime}}{\bar{3}}\\
 & +\frac{\sqrt{\frac{3}{14}}}{2}\subcgt{\bar{3}}{3}{\bar{3}}+\frac{\sqrt{\frac{3}{14}}}{2}\subcgt{\bar{3}}{\bar{3}}{\bar{3}}
\end{align*}
\efl 
\item $\tsprodx{6}{8}\to\rep{6}^{(2)}$: \bfl 
\begin{align*}
\rep{3} & =-\frac{1}{2\sqrt{2}}\subcgt{3}{1^{\prime}}{3}+\frac{1}{2\sqrt{2}}\subcgt{3}{\bar{1}^{\prime}}{3}\\
 & +\frac{i}{2\sqrt{2}}\subcgt{3}{3}{3}+\frac{i}{2\sqrt{2}}\subcgt{3}{\bar{3}}{3}\\
 & +\frac{i}{\sqrt{2}}\subcgt{\bar{3}}{\bar{3}}{3_{a}}
\end{align*}
\begin{align*}
\rep{\bar{3}} & =-\frac{i}{\sqrt{2}}\subcgt{3}{3}{\bar{3}_{a}}+\frac{1}{2\sqrt{2}}\subcgt{\bar{3}}{1^{\prime}}{\bar{3}}\\
 & -\frac{1}{2\sqrt{2}}\subcgt{\bar{3}}{\bar{1}^{\prime}}{\bar{3}}-\frac{i}{2\sqrt{2}}\subcgt{\bar{3}}{3}{\bar{3}}\\
 & -\frac{i}{2\sqrt{2}}\subcgt{\bar{3}}{\bar{3}}{\bar{3}}
\end{align*}
\efl 
\item $\tsprodx{6}{8}\to\rep{7}^{(1)}$: \bfl 
\begin{align*}
\rep{1} & =\frac{1}{\sqrt{2}}\subcgt{3}{\bar{3}}{1}+\frac{1}{\sqrt{2}}\subcgt{\bar{3}}{3}{1}
\end{align*}
\begin{align*}
\rep{3} & =\frac{i\sqrt{7}\omega^{2}}{4\sqrt{3}}e^{i\alpha-2i\beta}\subcgt{3}{1^{\prime}}{3}-\frac{i\sqrt{7}\omega}{4\sqrt{3}}e^{2i\beta-i\alpha}\subcgt{3}{\bar{1}^{\prime}}{3}\\
 & +\frac{\sqrt{2}-3}{4\sqrt{3}}\subcgt{3}{3}{3}-\frac{1+\sqrt{2}}{4\sqrt{3}}\subcgt{3}{\bar{3}}{3}\\
 & +\frac{1}{2\sqrt{6}}\subcgt{\bar{3}}{3}{3}-\frac{1+2\sqrt{2}}{4\sqrt{3}}\subcgt{\bar{3}}{\bar{3}}{3_{s}}\\
 & +\frac{\sqrt{3}}{4}\subcgt{\bar{3}}{\bar{3}}{3_{a}}
\end{align*}
\begin{align*}
\rep{\bar{3}} & =-\frac{1+2\sqrt{2}}{4\sqrt{3}}\subcgt{3}{3}{\bar{3}_{s}}+\frac{\sqrt{3}}{4}\subcgt{3}{3}{\bar{3}_{a}}\\
 & +\frac{1}{2\sqrt{6}}\subcgt{3}{\bar{3}}{\bar{3}}+\frac{1}{4}i\sqrt{\frac{7}{3}}\omega^{2}e^{i\alpha-2i\beta}\subcgt{\bar{3}}{1^{\prime}}{\bar{3}}\\
 & -\frac{1}{4}i\sqrt{\frac{7}{3}}\omega e^{2i\beta-i\alpha}\subcgt{\bar{3}}{\bar{1}^{\prime}}{\bar{3}}-\frac{1+\sqrt{2}}{4\sqrt{3}}\subcgt{\bar{3}}{3}{\bar{3}}\\
 & +\frac{\sqrt{2}-3}{4\sqrt{3}}\subcgt{\bar{3}}{\bar{3}}{\bar{3}}
\end{align*}
\efl 
\item $\tsprodx{6}{8}\to\rep{7}^{(2)}$: \bfl 
\begin{align*}
\rep{1} & =\frac{i}{\sqrt{2}}\subcgt{3}{\bar{3}}{1}-\frac{i}{\sqrt{2}}\subcgt{\bar{3}}{3}{1}
\end{align*}
\begin{align*}
\rep{3} & =-\frac{\sqrt{7}\omega^{2}}{4\sqrt{3}}e^{2i\beta-i\alpha}\subcgt{3}{1^{\prime}}{3}+\frac{\sqrt{7}\omega^{2}}{4\sqrt{3}}e^{i\alpha-2i\beta}\subcgt{3}{\bar{1}^{\prime}}{3}\\
 & -\frac{i\left(3+\sqrt{2}\right)}{4\sqrt{3}}\subcgt{3}{3}{3}+\frac{i\left(\sqrt{2}-1\right)}{4\sqrt{3}}\subcgt{3}{\bar{3}}{3}\\
 & +\frac{i}{2\sqrt{6}}\subcgt{\bar{3}}{3}{3}+\frac{1}{12}i\left(2\sqrt{6}-\sqrt{3}\right)\subcgt{\bar{3}}{\bar{3}}{3_{s}}\\
 & +\frac{i\sqrt{3}}{4}\subcgt{\bar{3}}{\bar{3}}{3_{a}}
\end{align*}
\begin{align*}
\rep{\bar{3}} & =-\frac{i\left(2\sqrt{2}-1\right)}{4\sqrt{3}}\subcgt{3}{3}{\bar{3}_{s}}-\frac{i\sqrt{3}}{4}\subcgt{3}{3}{\bar{3}_{a}}\\
 & -\frac{i}{2\sqrt{6}}\subcgt{3}{\bar{3}}{\bar{3}}+\frac{1}{4}\sqrt{\frac{7}{3}}\omega e^{2i\beta-i\alpha}\subcgt{\bar{3}}{1^{\prime}}{\bar{3}}\\
 & -\frac{1}{4}\sqrt{\frac{7}{3}}\omega^{2}e^{i\alpha-2i\beta}\subcgt{\bar{3}}{\bar{1}^{\prime}}{\bar{3}}-\frac{i\left(\sqrt{2}-1\right)}{4\sqrt{3}}\subcgt{\bar{3}}{3}{\bar{3}}\\
 & +\frac{i\left(3+\sqrt{2}\right)}{4\sqrt{3}}\subcgt{\bar{3}}{\bar{3}}{\bar{3}}
\end{align*}
\efl 
\item $\tsprodx{6}{8}\to\rep{8}^{(1)}$: \bfl 
\begin{align*}
\rep{1^{\prime}} & =-\frac{ie^{-i\beta}}{\sqrt{2}}\subcgt{3}{\bar{3}}{1^{\prime}}-\frac{ie^{-i\beta}}{\sqrt{2}}\subcgt{\bar{3}}{3}{1^{\prime}}
\end{align*}
\begin{align*}
\rep{\bar{1}^{\prime}} & =\frac{ie^{i\beta}}{\sqrt{2}}\subcgt{3}{\bar{3}}{\bar{1}^{\prime}}+\frac{ie^{i\beta}}{\sqrt{2}}\subcgt{\bar{3}}{3}{\bar{1}^{\prime}}
\end{align*}
\begin{align*}
\rep{3} & =\frac{ie^{i\beta}}{\sqrt{6}}\subcgt{3}{1^{\prime}}{3}-\frac{ie^{-i\beta}}{\sqrt{6}}\subcgt{3}{\bar{1}^{\prime}}{3}\\
 & -\sqrt{\frac{5+3\sqrt{2}}{42}}\subcgt{3}{3}{3}+\sqrt{\frac{6-2\sqrt{2}}{21}}\subcgt{3}{\bar{3}}{3}\\
 & -\sqrt{\frac{5+3\sqrt{2}}{42}}\subcgt{\bar{3}}{3}{3}+\sqrt{\frac{3-\sqrt{2}}{21}}\subcgt{\bar{3}}{\bar{3}}{3_{s}}
\end{align*}
\begin{align*}
\rep{\bar{3}} & =\sqrt{\frac{3-\sqrt{2}}{21}}\subcgt{3}{3}{\bar{3}_{s}}-\sqrt{\frac{5+3\sqrt{2}}{42}}\subcgt{3}{\bar{3}}{\bar{3}}\\
 & +\frac{ie^{i\beta}}{\sqrt{6}}\subcgt{\bar{3}}{1^{\prime}}{\bar{3}}-\frac{ie^{-i\beta}}{\sqrt{6}}\subcgt{\bar{3}}{\bar{1}^{\prime}}{\bar{3}}\\
 & +\sqrt{\frac{6-2\sqrt{2}}{21}}\subcgt{\bar{3}}{3}{\bar{3}}-\sqrt{\frac{5+3\sqrt{2}}{42}}\subcgt{\bar{3}}{\bar{3}}{\bar{3}}
\end{align*}
\efl 
\item $\tsprodx{6}{8}\to\rep{8}^{(2)}$: \bfl 
\begin{align*}
\rep{1^{\prime}} & =\frac{\omega e^{i\beta-i\alpha}}{\sqrt{2}}\subcgt{3}{\bar{3}}{1^{\prime}}-\frac{\omega e^{i\beta-i\alpha}}{\sqrt{2}}\subcgt{\bar{3}}{3}{1^{\prime}}
\end{align*}
\begin{align*}
\rep{\bar{1}^{\prime}} & =-\frac{\omega^{2}e^{i\alpha-i\beta}}{\sqrt{2}}\subcgt{3}{\bar{3}}{\bar{1}^{\prime}}+\frac{\omega^{2}e^{i\alpha-i\beta}}{\sqrt{2}}\subcgt{\bar{3}}{3}{\bar{1}^{\prime}}
\end{align*}
\begin{align*}
\rep{3} & =-\frac{\omega^{2}e^{i\alpha-i\beta}}{\sqrt{6}}\subcgt{3}{1^{\prime}}{3}+\frac{\omega e^{i\beta-i\alpha}}{\sqrt{6}}\subcgt{3}{\bar{1}^{\prime}}{3}\\
 & +i\sqrt{\frac{5-3\sqrt{2}}{42}}\subcgt{3}{3}{3}+i\sqrt{\frac{6+2\sqrt{2}}{21}}\subcgt{3}{\bar{3}}{3}\\
 & -i\sqrt{\frac{5-3\sqrt{2}}{42}}\subcgt{\bar{3}}{3}{3}-i\sqrt{\frac{3+\sqrt{2}}{21}}\subcgt{\bar{3}}{\bar{3}}{3_{s}}
\end{align*}
\begin{align*}
\rep{\bar{3}} & =i\sqrt{\frac{3+\sqrt{2}}{21}}\subcgt{3}{3}{\bar{3}_{s}}+i\sqrt{\frac{5-3\sqrt{2}}{42}}\subcgt{3}{\bar{3}}{\bar{3}}\\
 & +\frac{\omega^{2}e^{i\alpha-i\beta}}{\sqrt{6}}\subcgt{\bar{3}}{1^{\prime}}{\bar{3}}-\frac{\omega e^{i\beta-i\alpha}}{\sqrt{6}}\subcgt{\bar{3}}{\bar{1}^{\prime}}{\bar{3}}\\
 & -i\sqrt{\frac{6+2\sqrt{2}}{21}}\subcgt{\bar{3}}{3}{\bar{3}}-i\sqrt{\frac{5-3\sqrt{2}}{42}}\subcgt{\bar{3}}{\bar{3}}{\bar{3}}
\end{align*}
\efl 
\end{itemize}

\paragraph*{\cgEqFontsize $\tsprod{7}{7}\to\left(\rep{1}+\rep{6}^{(1)}+\rep{6}^{(2)}+\rep{7}+\rep{8}\right)_{s}+\left(\rep{3}+\rep{\bar{3}}+\rep{7}+\rep{8}\right)_{a}$}
\begin{itemize}
\item $\tsprodx{7}{7}\to\rep{1}_{s}$: \bfl 
\begin{align*}
\rep{1} & =\frac{1}{\sqrt{7}}\subcgt{1}{1}{1}\\
 & +\sqrt{\frac{3}{7}}\left(\subcgt{3}{\bar{3}}{1}+\subcgt{\bar{3}}{3}{1}\right)
\end{align*}
\efl 
\item $\tsprodx{7}{7}\to\rep{6}_{s}^{(1)}$: \bfl 
\begin{align*}
\rep{3} & =\frac{1}{\sqrt{7}}\left(\subcgt{1}{3}{3}+\subcgt{3}{1}{3}\right)\\
 & +\frac{1}{14}\left(2\sqrt{7}+\sqrt{14}\right)\subcgt{3}{3}{3}\\
 & -\frac{1}{\sqrt{7}}\left(\subcgt{3}{\bar{3}}{3}+\subcgt{\bar{3}}{3}{3}\right)\\
 & +\frac{1}{14}\left(\sqrt{14}-2\sqrt{7}\right)\subcgt{\bar{3}}{\bar{3}}{3_{s}}
\end{align*}
\begin{align*}
\rep{\bar{3}} & =\frac{1}{\sqrt{7}}\left(\subcgt{1}{\bar{3}}{\bar{3}}+\subcgt{\bar{3}}{1}{\bar{3}}\right)\\
 & +\frac{1}{14}\left(\sqrt{14}-2\sqrt{7}\right)\subcgt{3}{3}{\bar{3}_{s}}\\
 & -\frac{1}{\sqrt{7}}\left(\subcgt{3}{\bar{3}}{\bar{3}}+\subcgt{\bar{3}}{3}{\bar{3}}\right)\\
 & +\frac{1}{14}\left(2\sqrt{7}+\sqrt{14}\right)\subcgt{\bar{3}}{\bar{3}}{\bar{3}}
\end{align*}
\efl 
\item $\tsprodx{7}{7}\to\rep{6}_{s}^{(2)}$: \bfl 
\begin{align*}
\rep{3} & =\frac{i}{\sqrt{7}}\left(\subcgt{1}{3}{3}+\subcgt{3}{1}{3}\right)\\
 & +\frac{i\left(\sqrt{2}-2\right)}{2\sqrt{7}}\subcgt{3}{3}{3}\\
 & +\frac{i}{\sqrt{7}}\left(\subcgt{3}{\bar{3}}{3}+\subcgt{\bar{3}}{3}{3}\right)\\
 & -\frac{1}{14}i\left(2\sqrt{7}+\sqrt{14}\right)\subcgt{\bar{3}}{\bar{3}}{3_{s}}
\end{align*}
\begin{align*}
\rep{\bar{3}} & =-\frac{i}{\sqrt{7}}\left(\subcgt{1}{\bar{3}}{\bar{3}}+\subcgt{\bar{3}}{1}{\bar{3}}\right)\\
 & +\frac{1}{14}i\left(2\sqrt{7}+\sqrt{14}\right)\subcgt{3}{3}{\bar{3}_{s}}\\
 & -\frac{i}{\sqrt{7}}\left(\subcgt{3}{\bar{3}}{\bar{3}}+\subcgt{\bar{3}}{3}{\bar{3}}\right)\\
 & -\frac{1}{14}i\left(\sqrt{14}-2\sqrt{7}\right)\subcgt{\bar{3}}{\bar{3}}{\bar{3}}
\end{align*}
\efl 
\item $\tsprodx{7}{7}\to\rep{7}_{s}$: \bfl 
\begin{align*}
\rep{1} & =\sqrt{\frac{6}{7}}\subcgt{1}{1}{1}\\
 & -\frac{1}{\sqrt{14}}\left(\subcgt{3}{\bar{3}}{1}+\subcgt{\bar{3}}{3}{1}\right)
\end{align*}
\begin{align*}
\rep{3} & =-\frac{1}{\sqrt{42}}\left(\subcgt{1}{3}{3}+\subcgt{3}{1}{3}\right)\\
 & -\frac{2}{\sqrt{21}}\subcgt{3}{3}{3}\\
 & -\frac{2}{\sqrt{21}}\left(\subcgt{3}{\bar{3}}{3}+\subcgt{\bar{3}}{3}{3}\right)\\
 & -2\sqrt{\frac{2}{21}}\subcgt{\bar{3}}{\bar{3}}{3_{s}}
\end{align*}
\begin{align*}
\rep{\bar{3}} & =-\frac{1}{\sqrt{42}}\left(\subcgt{1}{\bar{3}}{\bar{3}}+\subcgt{\bar{3}}{1}{\bar{3}}\right)\\
 & -2\sqrt{\frac{2}{21}}\subcgt{3}{3}{\bar{3}_{s}}\\
 & -\frac{2}{\sqrt{21}}\left(\subcgt{3}{\bar{3}}{\bar{3}}+\subcgt{\bar{3}}{3}{\bar{3}}\right)\\
 & -\frac{2}{\sqrt{21}}\subcgt{\bar{3}}{\bar{3}}{\bar{3}}
\end{align*}
\efl 
\item $\tsprodx{7}{7}\to\rep{8}_{s}$: \bfl 
\begin{align*}
\rep{1^{\prime}} & =\frac{ie^{-i\alpha}}{\sqrt{2}}\left(\subcgt{3}{\bar{3}}{1^{\prime}}+\subcgt{\bar{3}}{3}{1^{\prime}}\right)
\end{align*}
\begin{align*}
\rep{\bar{1}^{\prime}} & =-\frac{ie^{i\alpha}}{\sqrt{2}}\left(\subcgt{3}{\bar{3}}{\bar{1}^{\prime}}+\subcgt{\bar{3}}{3}{\bar{1}^{\prime}}\right)
\end{align*}
\begin{align*}
\rep{3} & =-\frac{2}{\sqrt{21}}\left(\subcgt{1}{3}{3}+\subcgt{3}{1}{3}\right)\\
 & +2\sqrt{\frac{2}{21}}\subcgt{3}{3}{3}\\
 & +\frac{1}{\sqrt{42}}\left(\subcgt{3}{\bar{3}}{3}+\subcgt{\bar{3}}{3}{3}\right)\\
 & -\frac{2}{\sqrt{21}}\subcgt{\bar{3}}{\bar{3}}{3_{s}}
\end{align*}
\begin{align*}
\rep{\bar{3}} & =-\frac{2}{\sqrt{21}}\left(\subcgt{1}{\bar{3}}{\bar{3}}+\subcgt{\bar{3}}{1}{\bar{3}}\right)\\
 & -\frac{2}{\sqrt{21}}\subcgt{3}{3}{\bar{3}_{s}}\\
 & +\frac{1}{\sqrt{42}}\left(\subcgt{3}{\bar{3}}{\bar{3}}+\subcgt{\bar{3}}{3}{\bar{3}}\right)\\
 & +2\sqrt{\frac{2}{21}}\subcgt{\bar{3}}{\bar{3}}{\bar{3}}
\end{align*}
\efl 
\item $\tsprodx{7}{7}\to\rep{3}_{a}$: \bfl 
\begin{align*}
\rep{3} & =\frac{1}{\sqrt{7}}\left(\subcgt{1}{3}{3}-\subcgt{3}{1}{3}\right)\\
 & +\sqrt{\frac{2}{7}}\left(\subcgt{3}{\bar{3}}{3}-\subcgt{\bar{3}}{3}{3}\right)\\
 & +\frac{1}{\sqrt{7}}\subcgt{\bar{3}}{\bar{3}}{3_{a}}
\end{align*}
\efl 
\item $\tsprodx{7}{7}\to\rep{\bar{3}}_{a}$: \bfl 
\begin{align*}
\rep{\bar{3}} & =\frac{1}{\sqrt{7}}\left(\subcgt{1}{\bar{3}}{\bar{3}}-\subcgt{\bar{3}}{1}{\bar{3}}\right)\\
 & +\frac{1}{\sqrt{7}}\subcgt{3}{3}{\bar{3}_{a}}\\
 & -\sqrt{\frac{2}{7}}\left(\subcgt{3}{\bar{3}}{\bar{3}}-\subcgt{\bar{3}}{3}{\bar{3}}\right)
\end{align*}
\efl 
\item $\tsprodx{7}{7}\to\rep{7}_{a}$: \bfl 
\begin{align*}
\rep{1} & =\frac{i}{\sqrt{2}}\left(\subcgt{3}{\bar{3}}{1}-\subcgt{\bar{3}}{3}{1}\right)
\end{align*}
\begin{align*}
\rep{3} & =\frac{i}{\sqrt{6}}\left(\subcgt{1}{3}{3}-\subcgt{3}{1}{3}\right)\\
 & -i\sqrt{\frac{2}{3}}\subcgt{\bar{3}}{\bar{3}}{3_{a}}
\end{align*}
\begin{align*}
\rep{\bar{3}} & =-\frac{i}{\sqrt{6}}\left(\subcgt{1}{\bar{3}}{\bar{3}}-\subcgt{\bar{3}}{1}{\bar{3}}\right)\\
 & +i\sqrt{\frac{2}{3}}\subcgt{3}{3}{\bar{3}_{a}}
\end{align*}
\efl 
\item $\tsprodx{7}{7}\to\rep{8}_{a}$: \bfl 
\begin{align*}
\rep{1^{\prime}} & =\frac{e^{i\alpha}}{\sqrt{2}}\left(\subcgt{3}{\bar{3}}{1^{\prime}}-\subcgt{\bar{3}}{3}{1^{\prime}}\right)
\end{align*}
\begin{align*}
\rep{\bar{1}^{\prime}} & =-\frac{e^{-i\alpha}}{\sqrt{2}}\left(\subcgt{3}{\bar{3}}{\bar{1}^{\prime}}-\subcgt{\bar{3}}{3}{\bar{1}^{\prime}}\right)
\end{align*}
\begin{align*}
\rep{3} & =\frac{2i}{\sqrt{21}}\left(\subcgt{1}{3}{3}-\subcgt{3}{1}{3}\right)\\
 & -i\sqrt{\frac{3}{14}}\left(\subcgt{3}{\bar{3}}{3}-\subcgt{\bar{3}}{3}{3}\right)\\
 & +\frac{2i}{\sqrt{21}}\subcgt{\bar{3}}{\bar{3}}{3_{a}}
\end{align*}
\begin{align*}
\rep{\bar{3}} & =-\frac{2i}{\sqrt{21}}\left(\subcgt{1}{\bar{3}}{\bar{3}}-\subcgt{\bar{3}}{1}{\bar{3}}\right)\\
 & -\frac{2i}{\sqrt{21}}\subcgt{3}{3}{\bar{3}_{a}}\\
 & -i\sqrt{\frac{3}{14}}\left(\subcgt{3}{\bar{3}}{\bar{3}}-\subcgt{\bar{3}}{3}{\bar{3}}\right)
\end{align*}
\efl 
\end{itemize}

\paragraph*{\cgEqFontsize $\tsprod{7}{8}\to\rep{3}+\rep{\bar{3}}+\rep{6}^{(1)}+\rep{6}^{(2)}+\rep{7}^{(1)}+\rep{7}^{(2)}+\rep{8}^{(1)}+\rep{8}^{(2)}+\rep{8}^{(3)}$}
\begin{itemize}
\item $\tsprodx{7}{8}\to\rep{3}$: \bfl 
\begin{align*}
\rep{3} & =\frac{1}{\sqrt{7}}\subcgt{1}{3}{3}-\frac{ie^{i\alpha}}{2\sqrt{2}}\subcgt{3}{1^{\prime}}{3}\\
 & +\frac{ie^{-i\alpha}}{2\sqrt{2}}\subcgt{3}{\bar{1}^{\prime}}{3}-\frac{3}{2\sqrt{14}}\subcgt{3}{3}{3}\\
 & +\frac{1}{2\sqrt{14}}\subcgt{3}{\bar{3}}{3}+\frac{1}{\sqrt{14}}\subcgt{\bar{3}}{3}{3}\\
 & +\frac{3}{2\sqrt{7}}\subcgt{\bar{3}}{\bar{3}}{3_{s}}-\frac{1}{2\sqrt{7}}\subcgt{\bar{3}}{\bar{3}}{3_{a}}
\end{align*}
\efl 
\item $\tsprodx{7}{8}\to\rep{\bar{3}}$: \bfl 
\begin{align*}
\rep{\bar{3}} & =\frac{1}{\sqrt{7}}\subcgt{1}{\bar{3}}{\bar{3}}+\frac{3}{2\sqrt{7}}\subcgt{3}{3}{\bar{3}_{s}}\\
 & -\frac{1}{2\sqrt{7}}\subcgt{3}{3}{\bar{3}_{a}}+\frac{1}{\sqrt{14}}\subcgt{3}{\bar{3}}{\bar{3}}\\
 & -\frac{ie^{i\alpha}}{2\sqrt{2}}\subcgt{\bar{3}}{1^{\prime}}{\bar{3}}+\frac{ie^{-i\alpha}}{2\sqrt{2}}\subcgt{\bar{3}}{\bar{1}^{\prime}}{\bar{3}}\\
 & +\frac{1}{2\sqrt{14}}\subcgt{\bar{3}}{3}{\bar{3}}-\frac{3}{2\sqrt{14}}\subcgt{\bar{3}}{\bar{3}}{\bar{3}}
\end{align*}
\efl 
\item $\tsprodx{7}{8}\to\rep{6}^{(1)}$: \bfl 
\begin{align*}
\rep{3} & =\frac{1}{\sqrt{7}}\subcgt{1}{3}{3}+\frac{i\omega^{2}e^{i\alpha-2i\beta}}{2\sqrt{2}}\subcgt{3}{1^{\prime}}{3}\\
 & -\frac{i\omega e^{2i\beta-i\alpha}}{2\sqrt{2}}\subcgt{3}{\bar{1}^{\prime}}{3}-\frac{2+\sqrt{2}}{4\sqrt{7}}\subcgt{3}{3}{3}\\
 & -\frac{3\sqrt{2}-2}{4\sqrt{7}}\subcgt{3}{\bar{3}}{3}+\frac{1}{2\sqrt{7}}\subcgt{\bar{3}}{3}{3}\\
 & -\frac{4+\sqrt{2}}{4\sqrt{7}}\subcgt{\bar{3}}{\bar{3}}{3_{s}}-\frac{3}{2\sqrt{14}}\subcgt{\bar{3}}{\bar{3}}{3_{a}}
\end{align*}
\begin{align*}
\rep{\bar{3}} & =\frac{1}{\sqrt{7}}\subcgt{1}{\bar{3}}{\bar{3}}-\frac{4+\sqrt{2}}{4\sqrt{7}}\subcgt{3}{3}{\bar{3}_{s}}\\
 & -\frac{3}{2\sqrt{14}}\subcgt{3}{3}{\bar{3}_{a}}+\frac{1}{2\sqrt{7}}\subcgt{3}{\bar{3}}{\bar{3}}\\
 & +\frac{i\omega^{2}e^{i\alpha-2i\beta}}{2\sqrt{2}}\subcgt{\bar{3}}{1^{\prime}}{\bar{3}}-\frac{i\omega e^{2i\beta-i\alpha}}{2\sqrt{2}}\subcgt{\bar{3}}{\bar{1}^{\prime}}{\bar{3}}\\
 & -\frac{3\sqrt{2}-2}{4\sqrt{7}}\subcgt{\bar{3}}{3}{\bar{3}}-\frac{2+\sqrt{2}}{4\sqrt{7}}\subcgt{\bar{3}}{\bar{3}}{\bar{3}}
\end{align*}
\efl 
\item $\tsprodx{7}{8}\to\rep{6}^{(2)}$: \bfl 
\begin{align*}
\rep{3} & =\frac{i}{\sqrt{7}}\subcgt{1}{3}{3}-\frac{\omega e^{2i\beta-i\alpha}}{2\sqrt{2}}\subcgt{3}{1^{\prime}}{3}\\
 & +\frac{\omega^{2}e^{i\alpha-2i\beta}}{2\sqrt{2}}\subcgt{3}{\bar{1}^{\prime}}{3}-\frac{i\left(\sqrt{2}-2\right)}{4\sqrt{7}}\subcgt{3}{3}{3}\\
 & -\frac{i\left(2+3\sqrt{2}\right)}{4\sqrt{7}}\subcgt{3}{\bar{3}}{3}-\frac{i}{2\sqrt{7}}\subcgt{\bar{3}}{3}{3}\\
 & +\frac{i\left(\sqrt{2}-4\right)}{4\sqrt{7}}\subcgt{\bar{3}}{\bar{3}}{3_{s}}+\frac{3i}{2\sqrt{14}}\subcgt{\bar{3}}{\bar{3}}{3_{a}}
\end{align*}
\begin{align*}
\rep{\bar{3}} & =-\frac{i}{\sqrt{7}}\subcgt{1}{\bar{3}}{\bar{3}}-\frac{i\left(\sqrt{2}-4\right)}{4\sqrt{7}}\subcgt{3}{3}{\bar{3}_{s}}\\
 & -\frac{3i}{2\sqrt{14}}\subcgt{3}{3}{\bar{3}_{a}}+\frac{i}{2\sqrt{7}}\subcgt{3}{\bar{3}}{\bar{3}}\\
 & +\frac{\omega e^{2i\beta-i\alpha}}{2\sqrt{2}}\subcgt{\bar{3}}{1^{\prime}}{\bar{3}}-\frac{\omega^{2}e^{i\alpha-2i\beta}}{2\sqrt{2}}\subcgt{\bar{3}}{\bar{1}^{\prime}}{\bar{3}}\\
 & +\frac{i\left(2+3\sqrt{2}\right)}{4\sqrt{7}}\subcgt{\bar{3}}{3}{\bar{3}}+\frac{i\left(\sqrt{2}-2\right)}{4\sqrt{7}}\subcgt{\bar{3}}{\bar{3}}{\bar{3}}
\end{align*}
\efl 
\item $\tsprodx{7}{8}\to\rep{7}^{(1)}$: \bfl 
\begin{align*}
\rep{1} & =\frac{1}{\sqrt{2}}\subcgt{3}{\bar{3}}{1}+\frac{1}{\sqrt{2}}\subcgt{\bar{3}}{3}{1}
\end{align*}
\begin{align*}
\rep{3} & =\frac{1}{\sqrt{6}}\subcgt{1}{3}{3}+\frac{1}{4}i\sqrt{\frac{7}{3}}e^{i\alpha}\subcgt{3}{1^{\prime}}{3}\\
 & -\frac{1}{4}i\sqrt{\frac{7}{3}}e^{-i\alpha}\subcgt{3}{\bar{1}^{\prime}}{3}-\frac{1}{4\sqrt{3}}\subcgt{3}{3}{3}\\
 & -\frac{1}{4\sqrt{3}}\subcgt{3}{\bar{3}}{3}-\frac{1}{\sqrt{3}}\subcgt{\bar{3}}{3}{3}\\
 & +\frac{1}{\sqrt{6}}\subcgt{\bar{3}}{\bar{3}}{3_{s}}
\end{align*}
\begin{align*}
\rep{\bar{3}} & =\frac{1}{\sqrt{6}}\subcgt{1}{\bar{3}}{\bar{3}}+\frac{1}{\sqrt{6}}\subcgt{3}{3}{\bar{3}_{s}}\\
 & -\frac{1}{\sqrt{3}}\subcgt{3}{\bar{3}}{\bar{3}}+\frac{1}{4}i\sqrt{\frac{7}{3}}e^{i\alpha}\subcgt{\bar{3}}{1^{\prime}}{\bar{3}}\\
 & -\frac{1}{4}i\sqrt{\frac{7}{3}}e^{-i\alpha}\subcgt{\bar{3}}{\bar{1}^{\prime}}{\bar{3}}-\frac{1}{4\sqrt{3}}\subcgt{\bar{3}}{3}{\bar{3}}\\
 & -\frac{1}{4\sqrt{3}}\subcgt{\bar{3}}{\bar{3}}{\bar{3}}
\end{align*}
\efl 
\item $\tsprodx{7}{8}\to\rep{7}^{(2)}$: \bfl 
\begin{align*}
\rep{1} & =\frac{i}{\sqrt{2}}\subcgt{3}{\bar{3}}{1}-\frac{i}{\sqrt{2}}\subcgt{\bar{3}}{3}{1}
\end{align*}
\begin{align*}
\rep{3} & =\frac{i}{\sqrt{6}}\subcgt{1}{3}{3}+\frac{1}{4}\sqrt{\frac{7}{3}}e^{-i\alpha}\subcgt{3}{1^{\prime}}{3}\\
 & -\frac{1}{4}\sqrt{\frac{7}{3}}e^{i\alpha}\subcgt{3}{\bar{1}^{\prime}}{3}+\frac{i\sqrt{3}}{4}\subcgt{3}{3}{3}\\
 & +\frac{i\sqrt{3}}{4}\subcgt{3}{\bar{3}}{3}+\frac{i}{\sqrt{6}}\subcgt{\bar{3}}{\bar{3}}{3_{a}}
\end{align*}
\begin{align*}
\rep{\bar{3}} & =-\frac{i}{\sqrt{6}}\subcgt{1}{\bar{3}}{\bar{3}}-\frac{i}{\sqrt{6}}\subcgt{3}{3}{\bar{3}_{a}}\\
 & -\frac{1}{4}\sqrt{\frac{7}{3}}e^{-i\alpha}\subcgt{\bar{3}}{1^{\prime}}{\bar{3}}+\frac{1}{4}\sqrt{\frac{7}{3}}e^{i\alpha}\subcgt{\bar{3}}{\bar{1}^{\prime}}{\bar{3}}\\
 & -\frac{i\sqrt{3}}{4}\subcgt{\bar{3}}{3}{\bar{3}}-\frac{i\sqrt{3}}{4}\subcgt{\bar{3}}{\bar{3}}{\bar{3}}
\end{align*}
\efl 
\item $\tsprodx{7}{8}\to\rep{8}^{(1)}$: \bfl 
\begin{align*}
\rep{1^{\prime}} & =\frac{3\sqrt{3}-4i}{\sqrt{91}}\subcgt{1}{1^{\prime}}{1^{\prime}}-2\sqrt{\frac{6}{91}}\subcgt{3}{\bar{3}}{1^{\prime}}\\
 & -2\sqrt{\frac{6}{91}}\subcgt{\bar{3}}{3}{1^{\prime}}
\end{align*}
\begin{align*}
\rep{\bar{1}^{\prime}} & =\frac{3\sqrt{3}+4i}{\sqrt{91}}\subcgt{1}{\bar{1}^{\prime}}{\bar{1}^{\prime}}-2\sqrt{\frac{6}{91}}\subcgt{3}{\bar{3}}{\bar{1}^{\prime}}\\
 & -2\sqrt{\frac{6}{91}}\subcgt{\bar{3}}{3}{\bar{1}^{\prime}}
\end{align*}
\begin{align*}
\rep{3} & =-\sqrt{\frac{3}{91}}\subcgt{1}{3}{3}+2\sqrt{\frac{2}{91}}\omega^{2}\subcgt{3}{1^{\prime}}{3}\\
 & +2\sqrt{\frac{2}{91}}\omega\subcgt{3}{\bar{1}^{\prime}}{3}+2\sqrt{\frac{6}{91}}\subcgt{3}{3}{3}\\
 & -2\sqrt{\frac{6}{91}}\subcgt{3}{\bar{3}}{3}+2\sqrt{\frac{3}{91}}\subcgt{\bar{3}}{\bar{3}}{3_{s}}\\
 & -2\sqrt{\frac{3}{91}}\subcgt{\bar{3}}{\bar{3}}{3_{a}}
\end{align*}
\begin{align*}
\rep{\bar{3}} & =-\sqrt{\frac{3}{91}}\subcgt{1}{\bar{3}}{\bar{3}}+2\sqrt{\frac{3}{91}}\subcgt{3}{3}{\bar{3}_{s}}\\
 & -2\sqrt{\frac{3}{91}}\subcgt{3}{3}{\bar{3}_{a}}+2\sqrt{\frac{2}{91}}\omega^{2}\subcgt{\bar{3}}{1^{\prime}}{\bar{3}}\\
 & +2\sqrt{\frac{2}{91}}\omega\subcgt{\bar{3}}{\bar{1}^{\prime}}{\bar{3}}-2\sqrt{\frac{6}{91}}\subcgt{\bar{3}}{3}{\bar{3}}\\
 & +2\sqrt{\frac{6}{91}}\subcgt{\bar{3}}{\bar{3}}{\bar{3}}
\end{align*}
\efl 
\item $\tsprodx{7}{8}\to\rep{8}^{(2)}$: \bfl 
\begin{align*}
\rep{1^{\prime}} & =-4\sqrt{\frac{3}{91}}\omega^{2}\subcgt{1}{1^{\prime}}{1^{\prime}}-\frac{\sqrt{3}-13i}{2\sqrt{182}}\subcgt{3}{\bar{3}}{1^{\prime}}\\
 & -\frac{\sqrt{3}-13i}{2\sqrt{182}}\subcgt{\bar{3}}{3}{1^{\prime}}
\end{align*}
\begin{align*}
\rep{\bar{1}^{\prime}} & =-4\sqrt{\frac{3}{91}}\omega\subcgt{1}{\bar{1}^{\prime}}{\bar{1}^{\prime}}-\frac{\sqrt{3}+13i}{2\sqrt{182}}\subcgt{3}{\bar{3}}{\bar{1}^{\prime}}\\
 & -\frac{\sqrt{3}+13i}{2\sqrt{182}}\subcgt{\bar{3}}{3}{\bar{1}^{\prime}}
\end{align*}
\begin{align*}
\rep{3} & =-\frac{2}{\sqrt{273}}\subcgt{1}{3}{3}+\frac{i\left(5\sqrt{3}+21i\right)}{6\sqrt{182}}\subcgt{3}{1^{\prime}}{3}\\
 & -\frac{i\left(5\sqrt{3}-21i\right)}{6\sqrt{182}}\subcgt{3}{\bar{1}^{\prime}}{3}-\frac{5}{\sqrt{546}}\subcgt{3}{3}{3}\\
 & -4\sqrt{\frac{2}{273}}\subcgt{3}{\bar{3}}{3}+\sqrt{\frac{13}{42}}\subcgt{\bar{3}}{3}{3}\\
 & +\frac{4}{\sqrt{273}}\subcgt{\bar{3}}{\bar{3}}{3_{s}}+3\sqrt{\frac{3}{91}}\subcgt{\bar{3}}{\bar{3}}{3_{a}}
\end{align*}
\begin{align*}
\rep{\bar{3}} & =-\frac{2}{\sqrt{273}}\subcgt{1}{\bar{3}}{\bar{3}}+\frac{4}{\sqrt{273}}\subcgt{3}{3}{\bar{3}_{s}}\\
 & +3\sqrt{\frac{3}{91}}\subcgt{3}{3}{\bar{3}_{a}}+\sqrt{\frac{13}{42}}\subcgt{3}{\bar{3}}{\bar{3}}\\
 & +\frac{i\left(5\sqrt{3}+21i\right)}{6\sqrt{182}}\subcgt{\bar{3}}{1^{\prime}}{\bar{3}}-\frac{i\left(5\sqrt{3}-21i\right)}{6\sqrt{182}}\subcgt{\bar{3}}{\bar{1}^{\prime}}{\bar{3}}\\
 & -4\sqrt{\frac{2}{273}}\subcgt{\bar{3}}{3}{\bar{3}}-\frac{5}{\sqrt{546}}\subcgt{\bar{3}}{\bar{3}}{\bar{3}}
\end{align*}
\efl 
\item $\tsprodx{7}{8}\to\rep{8}^{(3)}$: \bfl 
\begin{align*}
\rep{1^{\prime}} & =-\frac{e^{-i\alpha}\omega^{2}}{\sqrt{2}}\subcgt{3}{\bar{3}}{1^{\prime}}+\frac{e^{-i\alpha}\omega^{2}}{\sqrt{2}}\subcgt{\bar{3}}{3}{1^{\prime}}
\end{align*}
\begin{align*}
\rep{\bar{1}^{\prime}} & =\frac{e^{i\alpha}\omega}{\sqrt{2}}\subcgt{3}{\bar{3}}{\bar{1}^{\prime}}-\frac{e^{i\alpha}\omega}{\sqrt{2}}\subcgt{\bar{3}}{3}{\bar{1}^{\prime}}
\end{align*}
\begin{align*}
\rep{3} & =-\frac{2i}{\sqrt{21}}\subcgt{1}{3}{3}-\frac{e^{i\alpha}\omega}{\sqrt{6}}\subcgt{3}{1^{\prime}}{3}\\
 & +\frac{e^{-i\alpha}\omega^{2}}{\sqrt{6}}\subcgt{3}{\bar{1}^{\prime}}{3}-i\sqrt{\frac{3}{14}}\subcgt{3}{3}{3}\\
 & -i\sqrt{\frac{3}{14}}\subcgt{\bar{3}}{3}{3}+\frac{i}{\sqrt{21}}\subcgt{\bar{3}}{\bar{3}}{3_{a}}
\end{align*}
\begin{align*}
\rep{\bar{3}} & =\frac{2i}{\sqrt{21}}\subcgt{1}{\bar{3}}{\bar{3}}-\frac{i}{\sqrt{21}}\subcgt{3}{3}{\bar{3}_{a}}\\
 & +i\sqrt{\frac{3}{14}}\subcgt{3}{\bar{3}}{\bar{3}}+\frac{e^{i\alpha}\omega}{\sqrt{6}}\subcgt{\bar{3}}{1^{\prime}}{\bar{3}}\\
 & -\frac{e^{-i\alpha}\omega^{2}}{\sqrt{6}}\subcgt{\bar{3}}{\bar{1}^{\prime}}{\bar{3}}+i\sqrt{\frac{3}{14}}\subcgt{\bar{3}}{\bar{3}}{\bar{3}}
\end{align*}
\efl 
\end{itemize}

\paragraph*{\cgEqFontsize $\tsprod{8}{8}\to\left(\rep{1}+\rep{6}^{(1)}+\rep{6}^{(2)}+\rep{7}+\rep{8}^{(1)}+\rep{8}^{(2)}\right)_{s}+\left(\rep{3}+\rep{\bar{3}}+\rep{7}^{(1)}+\rep{7}^{(2)}+\rep{8}\right)_{a}$}
\begin{itemize}
\item $\tsprodx{8}{8}\to\rep{1}_{s}$: \bfl 
\begin{align*}
\rep{1} & =\frac{1}{2\sqrt{2}}\left(\subcgt{1^{\prime}}{\bar{1}^{\prime}}{1}+\subcgt{\bar{1}^{\prime}}{1^{\prime}}{1}\right)\\
 & +\frac{\sqrt{\frac{3}{2}}}{2}\left(\subcgt{3}{\bar{3}}{1}+\subcgt{\bar{3}}{3}{1}\right)
\end{align*}
\efl 
\item $\tsprodx{8}{8}\to\rep{6}_{s}^{(1)}$: \bfl 
\begin{align*}
\rep{3} & =\frac{ie^{i\beta}}{2\sqrt{2}}\left(\subcgt{1^{\prime}}{3}{3}+\subcgt{3}{1^{\prime}}{3}\right)\\
 & -\frac{ie^{-i\beta}}{2\sqrt{2}}\left(\subcgt{\bar{1}^{\prime}}{3}{3}+\subcgt{3}{\bar{1}^{\prime}}{3}\right)\\
 & +\sqrt{\frac{1}{14}\left(3-\sqrt{2}\right)}\subcgt{3}{3}{3}\\
 & -\frac{1}{2}\sqrt{\frac{1}{14}\left(5+3\sqrt{2}\right)}\left(\subcgt{3}{\bar{3}}{3}+\subcgt{\bar{3}}{3}{3}\right)\\
 & +\frac{1}{2}\sqrt{\frac{1}{7}\left(3-\sqrt{2}\right)}\subcgt{\bar{3}}{\bar{3}}{3_{s}}
\end{align*}
\begin{align*}
\rep{\bar{3}} & =\frac{ie^{i\beta}}{2\sqrt{2}}\left(\subcgt{1^{\prime}}{\bar{3}}{\bar{3}}+\subcgt{\bar{3}}{1^{\prime}}{\bar{3}}\right)\\
 & -\frac{ie^{-i\beta}}{2\sqrt{2}}\left(\subcgt{\bar{1}^{\prime}}{\bar{3}}{\bar{3}}+\subcgt{\bar{3}}{\bar{1}^{\prime}}{\bar{3}}\right)\\
 & +\frac{1}{2}\sqrt{\frac{1}{7}\left(3-\sqrt{2}\right)}\subcgt{3}{3}{\bar{3}_{s}}\\
 & -\frac{1}{2}\sqrt{\frac{1}{14}\left(5+3\sqrt{2}\right)}\left(\subcgt{3}{\bar{3}}{\bar{3}}+\subcgt{\bar{3}}{3}{\bar{3}}\right)\\
 & +\sqrt{\frac{1}{14}\left(3-\sqrt{2}\right)}\subcgt{\bar{3}}{\bar{3}}{\bar{3}}
\end{align*}
\efl 
\item $\tsprodx{8}{8}\to\rep{6}_{s}^{(2)}$: \bfl 
\begin{align*}
\rep{3} & =-\frac{\omega^{2}e^{i\alpha-i\beta}}{2\sqrt{2}}\left(\subcgt{1^{\prime}}{3}{3}+\subcgt{3}{1^{\prime}}{3}\right)\\
 & +\frac{\omega e^{i\beta-i\alpha}}{2\sqrt{2}}\left(\subcgt{\bar{1}^{\prime}}{3}{3}+\subcgt{3}{\bar{1}^{\prime}}{3}\right)\\
 & +i\sqrt{\frac{1}{14}\left(3+\sqrt{2}\right)}\subcgt{3}{3}{3}\\
 & +\frac{1}{2}i\sqrt{\frac{1}{14}\left(5-3\sqrt{2}\right)}\left(\subcgt{3}{\bar{3}}{3}+\subcgt{\bar{3}}{3}{3}\right)\\
 & +\frac{1}{2}i\sqrt{\frac{1}{7}\left(3+\sqrt{2}\right)}\subcgt{\bar{3}}{\bar{3}}{3_{s}}
\end{align*}
\begin{align*}
\rep{\bar{3}} & =\frac{\omega^{2}e^{i\alpha-i\beta}}{2\sqrt{2}}\left(\subcgt{1^{\prime}}{\bar{3}}{\bar{3}}+\subcgt{\bar{3}}{1^{\prime}}{\bar{3}}\right)\\
 & -\frac{\omega e^{i\beta-i\alpha}}{2\sqrt{2}}\left(\subcgt{\bar{1}^{\prime}}{\bar{3}}{\bar{3}}+\subcgt{\bar{3}}{\bar{1}^{\prime}}{\bar{3}}\right)\\
 & -\frac{1}{2}i\sqrt{\frac{1}{7}\left(3+\sqrt{2}\right)}\subcgt{3}{3}{\bar{3}_{s}}\\
 & -\frac{1}{2}i\sqrt{\frac{1}{14}\left(5-3\sqrt{2}\right)}\left(\subcgt{3}{\bar{3}}{\bar{3}}+\subcgt{\bar{3}}{3}{\bar{3}}\right)\\
 & -i\sqrt{\frac{1}{14}\left(3+\sqrt{2}\right)}\subcgt{\bar{3}}{\bar{3}}{\bar{3}}
\end{align*}
\efl 
\item $\tsprodx{8}{8}\to\rep{7}_{s}$: \bfl 
\begin{align*}
\rep{1} & =\frac{\sqrt{\frac{3}{2}}}{2}\left(\subcgt{1^{\prime}}{\bar{1}^{\prime}}{1}+\subcgt{\bar{1}^{\prime}}{1^{\prime}}{1}\right)\\
 & -\frac{1}{2\sqrt{2}}\left(\subcgt{3}{\bar{3}}{1}+\subcgt{\bar{3}}{3}{1}\right)
\end{align*}
\begin{align*}
\rep{3} & =\frac{i\omega}{2\sqrt{3}}\left(\subcgt{1^{\prime}}{3}{3}+\subcgt{3}{1^{\prime}}{3}\right)\\
 & -\frac{i\omega^{2}}{2\sqrt{3}}\left(\subcgt{\bar{1}^{\prime}}{3}{3}+\subcgt{3}{\bar{1}^{\prime}}{3}\right)\\
 & -\frac{1}{\sqrt{3}}\subcgt{3}{3}{3}\\
 & +\frac{1}{2\sqrt{3}}\left(\subcgt{3}{\bar{3}}{3}+\subcgt{\bar{3}}{3}{3}\right)\\
 & +\frac{1}{\sqrt{6}}\subcgt{\bar{3}}{\bar{3}}{3_{s}}
\end{align*}
\begin{align*}
\rep{\bar{3}} & =\frac{i\omega}{2\sqrt{3}}\left(\subcgt{1^{\prime}}{\bar{3}}{\bar{3}}+\subcgt{\bar{3}}{1^{\prime}}{\bar{3}}\right)\\
 & -\frac{i\omega^{2}}{2\sqrt{3}}\left(\subcgt{\bar{1}^{\prime}}{\bar{3}}{\bar{3}}+\subcgt{\bar{3}}{\bar{1}^{\prime}}{\bar{3}}\right)\\
 & +\frac{1}{\sqrt{6}}\subcgt{3}{3}{\bar{3}_{s}}\\
 & +\frac{1}{2\sqrt{3}}\left(\subcgt{3}{\bar{3}}{\bar{3}}+\subcgt{\bar{3}}{3}{\bar{3}}\right)\\
 & -\frac{1}{\sqrt{3}}\subcgt{\bar{3}}{\bar{3}}{\bar{3}}
\end{align*}
\efl 
\item $\tsprodx{8}{8}\to\rep{8}_{s}^{(1)}$: \bfl 
\begin{align*}
\rep{1^{\prime}} & =-\frac{\omega^{2}}{\sqrt{3}}\subcgt{\bar{1}^{\prime}}{\bar{1}^{\prime}}{1^{\prime}}\\
 & -\frac{i}{\sqrt{3}}\left(\subcgt{3}{\bar{3}}{1^{\prime}}+\subcgt{\bar{3}}{3}{1^{\prime}}\right)
\end{align*}
\begin{align*}
\rep{\bar{1}^{\prime}} & =-\frac{\omega}{\sqrt{3}}\subcgt{1^{\prime}}{1^{\prime}}{\bar{1}^{\prime}}\\
 & +\frac{i}{\sqrt{3}}\left(\subcgt{3}{\bar{3}}{\bar{1}^{\prime}}+\subcgt{\bar{3}}{3}{\bar{1}^{\prime}}\right)
\end{align*}
\begin{align*}
\rep{3} & =\frac{i}{3}\left(\subcgt{1^{\prime}}{3}{3}+\subcgt{3}{1^{\prime}}{3}\right)\\
 & -\frac{i}{3}\left(\subcgt{\bar{1}^{\prime}}{3}{3}+\subcgt{3}{\bar{1}^{\prime}}{3}\right)\\
 & +\frac{1}{3}\subcgt{3}{3}{3}\\
 & +\frac{1}{3}\left(\subcgt{3}{\bar{3}}{3}+\subcgt{\bar{3}}{3}{3}\right)\\
 & +\frac{\sqrt{2}}{3}\subcgt{\bar{3}}{\bar{3}}{3_{s}}
\end{align*}
\begin{align*}
\rep{\bar{3}} & =\frac{i}{3}\left(\subcgt{1^{\prime}}{\bar{3}}{\bar{3}}+\subcgt{\bar{3}}{1^{\prime}}{\bar{3}}\right)\\
 & -\frac{i}{3}\left(\subcgt{\bar{1}^{\prime}}{\bar{3}}{\bar{3}}+\subcgt{\bar{3}}{\bar{1}^{\prime}}{\bar{3}}\right)\\
 & +\frac{\sqrt{2}}{3}\subcgt{3}{3}{\bar{3}_{s}}\\
 & +\frac{1}{3}\left(\subcgt{3}{\bar{3}}{\bar{3}}+\subcgt{\bar{3}}{3}{\bar{3}}\right)\\
 & +\frac{1}{3}\subcgt{\bar{3}}{\bar{3}}{\bar{3}}
\end{align*}
\efl 
\item $\tsprodx{8}{8}\to\rep{8}_{s}^{(2)}$: \bfl 
\begin{align*}
\rep{1^{\prime}} & =\sqrt{\frac{2}{3}}e^{-i\alpha}\subcgt{\bar{1}^{\prime}}{\bar{1}^{\prime}}{1^{\prime}}\\
 & -\frac{ie^{-i\alpha}\omega}{\sqrt{6}}\left(\subcgt{3}{\bar{3}}{1^{\prime}}+\subcgt{\bar{3}}{3}{1^{\prime}}\right)
\end{align*}
\begin{align*}
\rep{\bar{1}^{\prime}} & =\sqrt{\frac{2}{3}}e^{i\alpha}\subcgt{1^{\prime}}{1^{\prime}}{\bar{1}^{\prime}}\\
 & +\frac{ie^{i\alpha}\omega^{2}}{\sqrt{6}}\left(\subcgt{3}{\bar{3}}{\bar{1}^{\prime}}+\subcgt{\bar{3}}{3}{\bar{1}^{\prime}}\right)
\end{align*}
\begin{align*}
\rep{3} & =\frac{ie^{i\alpha}\omega^{2}}{3\sqrt{2}}\left(\subcgt{1^{\prime}}{3}{3}+\subcgt{3}{1^{\prime}}{3}\right)\\
 & -\frac{ie^{-i\alpha}\omega}{3\sqrt{2}}\left(\subcgt{\bar{1}^{\prime}}{3}{3}+\subcgt{3}{\bar{1}^{\prime}}{3}\right)\\
 & -\frac{2\sqrt{\frac{2}{7}}}{3}\subcgt{3}{3}{3}\\
 & -\frac{2\sqrt{\frac{2}{7}}}{3}\left(\subcgt{3}{\bar{3}}{3}+\subcgt{\bar{3}}{3}{3}\right)\\
 & +\frac{5}{3\sqrt{7}}\subcgt{\bar{3}}{\bar{3}}{3_{s}}
\end{align*}
\begin{align*}
\rep{\bar{3}} & =\frac{ie^{i\alpha}\omega^{2}}{3\sqrt{2}}\left(\subcgt{1^{\prime}}{\bar{3}}{\bar{3}}+\subcgt{\bar{3}}{1^{\prime}}{\bar{3}}\right)\\
 & -\frac{ie^{-i\alpha}\omega}{3\sqrt{2}}\left(\subcgt{\bar{1}^{\prime}}{\bar{3}}{\bar{3}}+\subcgt{\bar{3}}{\bar{1}^{\prime}}{\bar{3}}\right)\\
 & +\frac{5}{3\sqrt{7}}\subcgt{3}{3}{\bar{3}_{s}}\\
 & -\frac{2\sqrt{\frac{2}{7}}}{3}\left(\subcgt{3}{\bar{3}}{\bar{3}}+\subcgt{\bar{3}}{3}{\bar{3}}\right)\\
 & -\frac{2\sqrt{\frac{2}{7}}}{3}\subcgt{\bar{3}}{\bar{3}}{\bar{3}}
\end{align*}
\efl 
\item $\tsprodx{8}{8}\to\rep{3}_{a}$: \bfl 
\begin{align*}
\rep{3} & =\frac{1}{2\sqrt{2}}\left(\subcgt{1^{\prime}}{3}{3}-\subcgt{3}{1^{\prime}}{3}\right)\\
 & -\frac{\omega}{2\sqrt{2}}\left(\subcgt{\bar{1}^{\prime}}{3}{3}-\subcgt{3}{\bar{1}^{\prime}}{3}\right)\\
 & +\frac{i\omega^{2}}{2\sqrt{2}}\left(\subcgt{3}{\bar{3}}{3}-\subcgt{\bar{3}}{3}{3}\right)\\
 & -\frac{i\omega^{2}}{2}\subcgt{\bar{3}}{\bar{3}}{3_{a}}
\end{align*}
\efl 
\item $\tsprodx{8}{8}\to\rep{\bar{3}}_{a}$: \bfl 
\begin{align*}
\rep{\bar{3}} & =\frac{1}{2\sqrt{2}}\left(\subcgt{1^{\prime}}{\bar{3}}{\bar{3}}-\subcgt{\bar{3}}{1^{\prime}}{\bar{3}}\right)\\
 & -\frac{\omega}{2\sqrt{2}}\left(\subcgt{\bar{1}^{\prime}}{\bar{3}}{\bar{3}}-\subcgt{\bar{3}}{\bar{1}^{\prime}}{\bar{3}}\right)\\
 & -\frac{i\omega^{2}}{2}\subcgt{3}{3}{\bar{3}_{a}}\\
 & -\frac{i\omega^{2}}{2\sqrt{2}}\left(\subcgt{3}{\bar{3}}{\bar{3}}-\subcgt{\bar{3}}{3}{\bar{3}}\right)
\end{align*}
\efl 
\item $\tsprodx{8}{8}\to\rep{7}_{a}^{(1)}$: \bfl 
\begin{align*}
\rep{1} & =\frac{i}{\sqrt{2}}\left(\subcgt{1^{\prime}}{\bar{1}^{\prime}}{1}-\subcgt{\bar{1}^{\prime}}{1^{\prime}}{1}\right)
\end{align*}
\begin{align*}
\rep{3} & =\frac{\omega}{4}\left(\subcgt{1^{\prime}}{3}{3}-\subcgt{3}{1^{\prime}}{3}\right)\\
 & +\frac{\omega^{2}}{4}\left(\subcgt{\bar{1}^{\prime}}{3}{3}-\subcgt{3}{\bar{1}^{\prime}}{3}\right)\\
 & +\frac{\sqrt{3}}{4}\left(\subcgt{3}{\bar{3}}{3}-\subcgt{\bar{3}}{3}{3}\right)\\
 & +\frac{\sqrt{\frac{3}{2}}}{2}\subcgt{\bar{3}}{\bar{3}}{3_{a}}
\end{align*}
\begin{align*}
\rep{\bar{3}} & =\frac{\omega}{4}\left(\subcgt{1^{\prime}}{\bar{3}}{\bar{3}}-\subcgt{\bar{3}}{1^{\prime}}{\bar{3}}\right)\\
 & +\frac{\omega^{2}}{4}\left(\subcgt{\bar{1}^{\prime}}{\bar{3}}{\bar{3}}-\subcgt{\bar{3}}{\bar{1}^{\prime}}{\bar{3}}\right)\\
 & +\frac{\sqrt{\frac{3}{2}}}{2}\subcgt{3}{3}{\bar{3}_{a}}\\
 & -\frac{\sqrt{3}}{4}\left(\subcgt{3}{\bar{3}}{\bar{3}}-\subcgt{\bar{3}}{3}{\bar{3}}\right)
\end{align*}
\efl 
\item $\tsprodx{8}{8}\to\rep{7}_{a}^{(2)}$: \bfl 
\begin{align*}
\rep{1} & =\frac{i}{\sqrt{2}}\left(\subcgt{3}{\bar{3}}{1}-\subcgt{\bar{3}}{3}{1}\right)
\end{align*}
\begin{align*}
\rep{3} & =-\frac{1}{4}\sqrt{\frac{7}{3}}e^{i\alpha}\omega\left(\subcgt{1^{\prime}}{3}{3}-\subcgt{3}{1^{\prime}}{3}\right)\\
 & +\frac{1}{4}\sqrt{\frac{7}{3}}e^{-i\alpha}\omega^{2}\left(\subcgt{\bar{1}^{\prime}}{3}{3}-\subcgt{3}{\bar{1}^{\prime}}{3}\right)\\
 & +\frac{i\sqrt{3}}{4}\left(\subcgt{3}{\bar{3}}{3}-\subcgt{\bar{3}}{3}{3}\right)\\
 & -\frac{i}{2\sqrt{6}}\subcgt{\bar{3}}{\bar{3}}{3_{a}}
\end{align*}
\begin{align*}
\rep{\bar{3}} & =\frac{1}{4}\sqrt{\frac{7}{3}}e^{i\alpha}\omega\left(\subcgt{1^{\prime}}{\bar{3}}{\bar{3}}-\subcgt{\bar{3}}{1^{\prime}}{\bar{3}}\right)\\
 & -\frac{1}{4}\sqrt{\frac{7}{3}}e^{-i\alpha}\omega^{2}\left(\subcgt{\bar{1}^{\prime}}{\bar{3}}{\bar{3}}-\subcgt{\bar{3}}{\bar{1}^{\prime}}{\bar{3}}\right)\\
 & +\frac{i}{2\sqrt{6}}\subcgt{3}{3}{\bar{3}_{a}}\\
 & +\frac{i\sqrt{3}}{4}\left(\subcgt{3}{\bar{3}}{\bar{3}}-\subcgt{\bar{3}}{3}{\bar{3}}\right)
\end{align*}
\efl 
\item $\tsprodx{8}{8}\to\rep{8}_{a}$: \bfl 
\begin{align*}
\rep{1^{\prime}} & =-\frac{\omega}{\sqrt{2}}\left(\subcgt{3}{\bar{3}}{1^{\prime}}-\subcgt{\bar{3}}{3}{1^{\prime}}\right)
\end{align*}
\begin{align*}
\rep{\bar{1}^{\prime}} & =\frac{\omega^{2}}{\sqrt{2}}\left(\subcgt{3}{\bar{3}}{\bar{1}^{\prime}}-\subcgt{\bar{3}}{3}{\bar{1}^{\prime}}\right)
\end{align*}
\begin{align*}
\rep{3} & =\frac{\omega^{2}}{\sqrt{6}}\left(\subcgt{1^{\prime}}{3}{3}-\subcgt{3}{1^{\prime}}{3}\right)\\
 & -\frac{\omega}{\sqrt{6}}\left(\subcgt{\bar{1}^{\prime}}{3}{3}-\subcgt{3}{\bar{1}^{\prime}}{3}\right)\\
 & -\frac{i}{\sqrt{3}}\subcgt{\bar{3}}{\bar{3}}{3_{a}}
\end{align*}
\begin{align*}
\rep{\bar{3}} & =-\frac{\omega^{2}}{\sqrt{6}}\left(\subcgt{1^{\prime}}{\bar{3}}{\bar{3}}-\subcgt{\bar{3}}{1^{\prime}}{\bar{3}}\right)\\
 & +\frac{\omega}{\sqrt{6}}\left(\subcgt{\bar{1}^{\prime}}{\bar{3}}{\bar{3}}-\subcgt{\bar{3}}{\bar{1}^{\prime}}{\bar{3}}\right)\\
 & +\frac{i}{\sqrt{3}}\subcgt{3}{3}{\bar{3}_{a}}
\end{align*}
\efl 
\end{itemize}
\bibliographystyle{plain}
\bibliography{rp}

\end{document}